%
%
\documentclass[12pt]{article}
\usepackage{amsmath,amsfonts,amsthm,amssymb,amscd,colordvi}
\usepackage{graphicx}
\usepackage[usenames]{color}
\binoppenalty=9999 \relpenalty=9999

\renewcommand{\Re}{\mathop{\rm Re}\nolimits}
\renewcommand{\Im}{\mathop{\rm Im}\nolimits}
\newcommand{\p}{\partial}

\newcommand{\e}{\varepsilon}
\newcommand{\vk}{\varkappa}

\newcommand{\vp}{\varphi}

\newcommand{\gi}{\rho}

\newcommand{\lla}{\gamma}
\newcommand{\n}{{}_n}

\newcommand{\tm}{\theta_1^-}
\newcommand{\ts}{\tilde s}
\newcommand{\Vts}{V^{\tilde s}}
\newcommand{\ho}{\widehat\Omega}
\newcommand{\bk}{{\mathbf k}}
\newcommand{\Pho}{ \Phi}

\newcommand{\bb}{\mbox{\boldmath$\beta$}}
\newcommand{\R}{{\mathbb R}}
\newcommand{\C}{{\mathbb C}}
\newcommand{\IP}{{\bf P}}
\newcommand{\Z}{{\mathbb Z}}
\newcommand{\ZZ}{{\mathbb Z}^\infty_{+0}}
\newcommand{\E}{{\bf E}}
\newcommand{\cE}{{\cal E}}
\newcommand{\T}{{\mathbb T}}
\newcommand{\Tnr}{{\mathbb T}^{N-r}}
\newcommand{\N}{{\mathbb N}}
\newcommand{\PP}{{\bf P}}

\newcommand{\CC}{C_{pql}(2I)^pv^q\bar v^l}

\newcommand{\cA}{{\cal A}}

\newcommand{\cD}{{\cal D}} 
\newcommand{\cF}{{\cal F}}

\newcommand{\cH}{{\cal H}}

\newcommand{\cHR}{{\cal H}^{\text{res}}}

\newcommand{\cQ}{{\cal Q}}

\newcommand{\cT}{{\cal T}}

\newcommand{\strela}{\rightharpoonup}
\newcommand{\lc}{\lceil}
\newcommand{\rc}{\rceil}

\newcommand{\const}{\mathop{\rm const}\nolimits}
\newcommand{\as}{\quad\mathop{\rm as} \nolimits\quad}

\newcommand{\supp}{\mathop{\rm supp}\nolimits}

\newcommand{\Arg}{\mathop{\rm Arg}\nolimits}

\def\dbar{{\mathchar'26\mkern-12mu d}}
\def\12{\tfrac12}

\def\lan{\langle}
\def\ran{\rangle}

\def\eps{\varepsilon}
\theoremstyle{plain}
\newtheorem{theorem}{Theorem}[section]
\newtheorem{lemma}[theorem]{Lemma}
\newtheorem{proposition}[theorem]{Proposition}
\newtheorem{corollary}[theorem]{Corollary}
\theoremstyle{definition}

\theoremstyle{remark}

\numberwithin{equation}{section}

\setcounter{section}{-1}

\textwidth 14.5cm

\begin{document}

\author{Sergei
Kuksin\footnote{CNRS and  I.M.J, Universit\'e Paris Diderot-Paris 7, Paris, 
 France, e-mail:
  kuksin@math.jussieu.fr },\addtocounter{footnote}{2} Alberto Maiocchi \footnote{Laboratoire de
Math\'ematiques,  Universit\'e de Cergy-Pontoise, 2  avenue Adolphe Chauvin,
Cergy-Pontoise, France,
e-mail: alberto.maiocchi@unimi.it}}

\title{Resonant averaging for
weakly nonlinear stochastic Schr\"odinger equations }
\date{}
\maketitle

\begin{abstract}
We consider the free linear Schr\"odinger  equation  on a torus
$\T^d$, perturbed by a hamiltonian  nonlinearity,
driven by a random force and damped by 
a linear damping:
$$
 u_t -i\Delta u +i\nu \rho |u|^{2q_*}u
= - \nu f(-\Delta) u
+ \sqrt\nu\,\frac{d}{d t}\sum_{\bk\in \Z^d} b_\bk\bb^\bk(t)e^{i\bk\cdot x} \ .
$$
Here $u=u(t,x),\ x\in\T^d$, 
 $0<\nu\ll1$,  $q_*\in\N\cup\{0\}$, 
 $f$ is a positive continuous function, $\rho$ is a positive parameter 
  and 
$\bb^\bk(t)$ are standard independent complex Wiener processes.  We are interested in limiting, as
$\nu\to0$, behaviour of  solutions for this equation 
and of its stationary measure. Writing the equation in the slow time $\tau=\nu t$, we
 prove that the limiting  behaviour of the  both 
  is  described by the   {\it effective equation}
$$
u_\tau+ f(-\Delta) u = -iF(u)+\frac{d}{d\tau}\sum  b_\bk\bb^\bk(\tau)e^{i\bk\cdot x} \ ,
$$
where the nonlinearity $F(u)$ is made out of the resonant terms of the monomial $ |u|^{2q_*}u$. 
We explain the relevance of this result for the problem of weak turbulence 
\end{abstract}

\tableofcontents
\section{Introduction}\label{s0}
\subsection
{ Equations}
We study the free Schr\"odinger equation on the torus $\T^d_L= \R^d/(2\pi L\Z^d)$,
\begin{equation}\label{*1}
u_t(t,x)-i\Delta u(t,x)=0,\quad x\in\T^d_L\,, 
\end{equation}
stirred by a perturbation, which comprises a
hamiltonian  term, a linear damping and a random force. That is, we consider the equation 
\begin{equation}\label{1.11}
\begin{split}
u_t-i\Delta u= - i\eps^{2q_*}|u|^{2q_*}u-\nu f(-\Delta) u
+\sqrt\nu \frac{d}{dt}\sum_{\bk\in\Z^d}  b_\bk\bb^\bk(t) e^{i{L^{-1}\bk}\cdot x}
\ ,\\
 u=u(t,x),\quad x\in \T^d_L\,,
\end{split}
\end{equation}
where   $\ q_*\in \N \cup\{0\}$ and
$  \eps,\nu>0$ are two small parameters, controlling  the size of the perturbation. 
 The damping  $-f(-\Delta) $ is the selfadjoint 
  linear operator in $L_2(\T^d_L)$ which acts on the exponents $e^{i{L^{-1}\bk}\cdot x}$, $\bk\in\Z^d$, 
 according to
\begin{equation}\label{f}
f(-\Delta) e^{i{L^{-1}\bk}\cdot x}= \lla_\bk e^{i{L^{-1}\bk}\cdot x}, \qquad \lla_\bk= f( \lambda_\bk  )
\quad\text{where}\quad \lambda_\bk= |\bk|^2L^{-2}.
\end{equation}
 The function $f$ is real positive and continuous. To avoid technicalities, not relevant for 
 this work, we assume that 
$\ 
f(t)\ge C_1 |t| +C_2$ for all $t$,
for suitable positive constants $C_1,C_2$ (for example, $f(-\Delta)u=-\Delta u+u$). 
The processes $\bb^\bk, \bk\in\Z^d$, are
standard independent complex 
Wiener processes, i.e., $\bb^\bk(t)=\beta^\bk_+(t)+i\beta^{\bk}_-(t)$, 
where $\beta_{\pm}^\bk(t)$ are standard independent  real Wiener
processes. The real numbers $b_\bk$ are all non-zero and decay  fast
when $|\bk|\to\infty$. 

The nonlinearity  in \eqref{1.11} is hamiltonian and may be written as
\begin{equation}\label{*ham}
-i\e^{2q_*} |u|^{2q_*}u=\e^{2q_*}i   \,\nabla \cH(u) ,\qquad
\cH(u) =-\frac1{2q_*+2}\int |u(x)|^{2q_*+2}dx.
\end{equation}
We {\it assume} that eq. \eqref{1.11} with sufficiently smooth initial data $u_0(x)$
is well posed. It is well known that this assumption holds (at least) under
some restriction on $d,q_*$ and the growth of $f(t)$ at infinity, see in  Section~\ref{s1.1}

Equation \eqref{1.11} with small $\nu$ and $\eps$  is important for
physics and mathematical physics, where it serves  as a universal
model. In particular, it is  used in the physics of plasma 
to describe small oscillations of the media on long  time scale, see 
 \cite{Fal, Naz, ZL75, ZLF}. 
 The parameters $\nu$ and $\eps$ measure, respectively, the inverse 
  time-scale of the forced oscillations, and their  amplitude. 
  Physicists consider different regimes, where the two parameters
  are tied in various ways. To do this they assume some relations between $\e$ and $\nu$,
  explicitly or  implicitly. In our work we choose
  $$
  \eps^{2q_*}=\gi \nu,
  $$
  where $\rho>0$ is a constant. This assumption is within the  usually imposed bounds,
  see  \cite{ Naz}.   Passing to the slow time
$\tau=\nu t$, we  get the rescaled equation 
\begin{equation}\label{1.111}
\begin{split}
\dot u+i\nu^{-1}\big(-\Delta u\big)= -f(-\Delta) u- i\gi|u|^{2q_*}u
+
\sum_{\bk\in\Z^d}  b_\bk \dot\bb^\bk(\tau) e^{i{L^{-1}\bk}\cdot x}
\ ,
\end{split}
\end{equation}
where $u=u(\tau,x)$, $x\in \T^d_L$ and 
 the dot   $\ \dot{}\ $  stands for 
$\frac{d}{d\tau}$.  If we write $u(\tau,x)$ as Fourier series,
$\ 
u(\tau,x)=\sum_\bk v_\bk(\tau)e^{iL^{-1}\bk\cdot x},
$
then in view of \eqref{*ham},  eq. \eqref{1.111} may be written as the system
\begin{equation}\label{1.100}
\begin{split}
\dot v_\bk+i \nu^{-1}\lambda_\bk v_\bk=-\gamma_\bk v_\bk + 2\rho\, i   \,\frac{\p \cH(v)}{\p \bar v_\bk}  
+b_\bk \dot\beta^\bk(\tau),\quad \bk \in \Z^d.
\end{split}
\end{equation}
Here $\cH(v)$ is the Hamiltonian $\cH$, expressed in terms  of the Fourier coefficients
$v=(v_\bk, \bk\in\Z^d)$:
\begin{equation}\label{Hv}
\cH(v)=\frac1{2q_*+2}
\sum_{\bk_1,\dots \bk_{2q_*+2}\in\Z^d} v_{\bk_1}\dots v_{\bk_{q_*+1}} \bar v_{\bk_{q_*+2}} \dots
\bar v_{\bk_{2q_*+2}} \,  \delta^{1\ldots q_*+1}_{q_*+2\ldots 2q_*+2}\,,
\end{equation}
and we use a notation, standard in physics (see \cite{Naz}):
\begin{equation}\label{N1}
\delta^{1\ldots q_*+1}_{q_*+2\ldots 2q_*+2}=\left\{\begin{array}{cc}
1 & \mbox{if }\bk_1+\ldots+\bk_{q_*+1}-\bk_{q_*+2}-\ldots-\bk_{2q_*+2} = 0 \\
0 & \mbox{otherwise}
\end{array}
\right.\ .
\end{equation}
As before  we
are interested in  the limit $\nu\to0$, corresponding   to small oscillations in the original
non-scaled equation.\footnote{See \cite{KNer13} for a theory of this equation for the case
when $f(t)=t+1$ and $\nu=\infty$.}

\subsection
{Weak Turbulence} \label{sWT}
In physics equations \eqref{1.111} with $\nu\to0$ are treated by the theory of weak turbulence,
or 
 WT (this abbreviation also may stand for `Wave Turbulence', but the difference between  the two notions
seems for us negligible);  see the  works, quoted above as well as \cite{CZ00}.
 That theory either deals with  equation \eqref{1.111},
where  $L=\infty$ 
by formal replacing Fourier series for $L$-periodic functions with Fourier integrals and makes with them bold 
transformations, or  considers the limit  $\nu\to0$ simultaneously with the limit  $L\to\infty$ and 
treats the two of  them in an equally bold way.\footnote{
Alternatively (and more often) people, working on WT, consider the HPDE \eqref{1.111}${}_{f=0, b_\bk=0\,\forall \bk}$
and treat it in a similar formal way, see \cite{Fal, Naz, ZLF, CZ00}. 
The corresponding problems do not fit our technique. Some recent progress in 
their rigorous study may be found in \cite{FGH}.
}
Concerning this limit WT makes a number of remarkable 
predictions, based on tools and ideas, developed in the community, which can be traced 
 back to the work \cite{Peierls}. The most famous of them deals with the
energy spectrum of solutions $u(\tau,x)$.  To describe the corresponding claims,
 consider the quantity $\E|v_\bk(\tau)|^2$, average it in
 time\footnote{certainly this is not needed if we consider stationary solutions of the equation}
 $\tau$  
 and in wave-vectors 
$\bk\in \Z^d$  such that $|\bk|/ L \approx r>0$; next properly scale this and denote the result $E_r$.
The function $r\to E_r$ is called the {\it energy spectrum}.  It is predicted by WT that, in certain
{\it inertial range} $[r_1,r_2]$, which is contained in the spectral zone 
where the random force is negligible (i.e., $|b_\bk|\lll (\E|v_\bk|^2)^{1/2}$ if $r_1\le |\bk|/L \le r_2$), the
energy spectrum has an algebraic behaviour:
\begin{equation}\label{KZ}
E_r\sim r^{-\alpha}\quad \text{for}\quad  r\in[r_1,r_2],
\end{equation}
for a suitable $\alpha>0$.  The WT limit, in fact, deals with two iterated limits:
\begin{equation}\label{limits}
\begin{split}
L\to\infty,\quad \nu\to0  \, .\quad
\end{split}
\end{equation}
 Relations between the two parameters in \eqref{limits}
is not quite clear for us, and it may be better to talk about the WT limits (rather then about a single case). But 
all the limits should lead to  relations \eqref{KZ} with  finite $\alpha$'s.

We suggest to study  the WT limits (at least, some of them) by splitting the limiting process 
  in two steps:

I) prove that when $\nu\to0$, main characteristics of solutions $u^\nu$ have limits of order one, 
described by certain {\it effective equation} which is a nonlinear stochastic equation with coefficients 
of order one and with a hamiltonian  nonlinearity, made out the resonant terms of the 
nonlinearity $ i\gi|u|^{2q_*}u$.

II) Show that main characteristics of solutions for the effective equation have non-trivial limits of 
order one, when $L \to\infty$ and $\rho=\rho(L)$ is a suitable function of $L$.

In this work we perform Step I, postponing  Step II for the future. We stress  that the results of Step I
along cannot justify the predictions of WT since the latter  (e.g. the asymptotic \eqref{KZ}) cannot hold 
when the period $L$ is fixed and finite. We believe that a suitable choice of the function $\rho(L)$ leads to
non-trivial spectral asymptotic \eqref{KZ}, but it is open for discussion up to what extent the corresponding 
choice of the limits in \eqref{limits} agrees with physics and the tradition of WT.

 As the title of the paper suggests, our argument is a form of averaging.
The latter is a tool which is used by the WT community on a regular basis, either explicitly (e.g. see \cite{Naz}), or
implicitly.

\subsection
{  Inviscid limits for damped/driven hamiltonian PDE}  Equation \eqref{1.111} is the linear hamiltonian PDE
\eqref{*1}, driven by the random force and damped by the damping $-f(-\Delta u)-i\rho|u|^{2q_*}u$.  Damped/driven
hamiltonian PDE (HPDE) and the inviscid limits in these equations when the random force and the 
damping go to zero, are very important for physics. In particular, since the $d$-dimensional Navier-Stokes
equation (NSE) with a random force can be regarded as a damped/driven Euler equation (which is an HPDE), and
the inviscid limit for  the NSE describes the $d$-dimensional turbulence. The NSE with random force, 
 especially when
$d=2$, was intensively studied last years, but the corresponding inviscid limit turned out to be very 
complicated even for $d=2$, see \cite{KS}. The problem of the inviscid limit 
becomes feasible when the underlying HPDE is integrable or linear. The most famous integrable PDE is the KdV
equation. Its damped/driven perturbations and the corresponding inviscid limits were studied   in \cite{KP08, K10}.  In \cite{K12} the method of those works was applied to 
the situation when the unperturbed HPDE is the Schr\"odinger equation
\begin{equation}\label{*5}
u_t + i(-\Delta u + V(x) u)=0,\qquad x\in\T^d_L,
\end{equation}
where the potential $V(x)$ is in general position. Crucial for the just mentioned works  is that there the 
unperturbed HPDE is free from strong resonances. For \cite{KP08, K10} it means that all solutions of KdV
are are almost-periodic functions of time, and for typical solutions the corresponding frequency vectors are 
free from resonances; while for  \cite{K12}  it means that for the typical potentials $V(x)$, considered in 
\cite{K12}, the spectrum of the linear operator in \eqref{*5} is non-resonant. 

In contrast, now the linear operator in the unperturbed equation \eqref{*1}  has the eigenvalues 
$\lambda_\bk, \bk\in\Z^d$ (see \eqref{f}), which are highly resonant. To explain the corresponding 
 difficulty, we rewrite equation
\eqref{1.111}=\eqref{1.100} as a fast-slow system, denoting 
$
I_\bk=\tfrac12 |v_\bk|^2, \;\; \vp_\bk=\Arg v_\bk
$.
In the new variables  eq.~\eqref{1.111} reads 
\begin{equation}\label{*6}
\dot I_\bk(\tau) = v_\bk\cdot P_\bk(v) +b_\bk^2 +b_\bk(v_\bk\cdot\dot\bb^\bk), 
\end{equation}
\begin{equation}\label{*7}
\dot \vp_\bk(\tau) =-\nu^{-1} \lambda_\bk +\dots, 
\end{equation}
where $\bk\in\Z^d$. Here the dots stand for a term of order one (as $\nu\to0$) which  has a singularity 
when $v_\bk=0$. 

Finite-dimensional systems of the form \eqref{*6}, \eqref{*7}, where $\bk$ belongs to a finite set $\bold K$
and $\dots$ stand for  smooth functions, are considered in the classical {\it stochastic averaging}.  If the fast
dynamics $\dot\vp_\bk = -\nu^{-1}\lambda_\bk, \ \bk\in \bold K$, is ergodic on the torus $\T^{\bold K}$
(i.e., the vector $(\lambda_\bk, \bk\in\bold K)$ is non-resonant), then the stochastic averaging theorem  from 
\cite{Khas68, FW03}  states that distributions of solutions for \eqref{*6} converge, as $\nu\to0$, to those of solutions 
for the averaged system
\begin{equation}\label{*8}
\dot I_\bk(\tau) =\lan v_\bk\cdot P_\bk\ran(I) +b_\bk^2 +b_\bk \sqrt{2I_\bk}\,\dot\beta^\bk(\tau)\,,
\quad \bk\in\bold K. 
\end{equation}
Here $\lan\cdot\ran$ signifies the averaging in  angles $\vp\in \T^{\bold K}$ (in this case 
 the averaged system is relatively 
simple since the noises in the systems  \eqref{*6} and  \eqref{*7} are diagonal). 
If the system of equations  \eqref{*6},
\eqref{*7} with  $\bk\in\Z^d$ corresponds to damped/driven perturbations of the Schr\"odingier  equation 
\eqref{*5} with non-resonant spectrum,  then the Khasminski scheme formally  applies,  but its realisation 
encounters significant difficulties since equations \eqref{*7} and the dispersions (i.e. the stochastic parts) of 
equations \eqref{*8} have singularities at the locus 
\begin{equation}\label{game}
\Game=\{ I: I_\bk =0 \;\;\text{for some} \; \bk\},
\end{equation}
which is dense in the space of infinite sequences $\{I_\bk\ge0, \bk\in\Z^d\}$.  Moreover, the drift in the 
averaged system \eqref{*8}${}_{\bk\in\Z^d}$ is a non-Lipschitz vector-field, so its unique solvability 
 is unclear. A way to overcome this difficulty was developed in \cite{KP08, K10, K12}. 
Namely it was shown that, in the non-resonant case, there exists a regular system of equations 
\begin{equation}\label{*eff}
\dot v_\bk = R_\bk(v) + b_\bk\dot\bb^\bk(\tau), \quad \bk\in\Z^d, 
\end{equation}
obtained from the non-singular (as $\nu\to0$) part of the original damped/driven HPDE
 by a kind of averaging, such that under the natural mapping
$$
v_\bk \mapsto I_\bk = \tfrac12 |v_\bk|^2
$$
 solutions of \eqref{*eff} transform to weak (in the sense of the stochastic calculus) solutions of 
the averaged system \eqref{*8}.
In   \cite{K12} this algebraical fact was used to study the inviscid limits in the damped/driven perturbations of 
nonresonant equations \eqref{*5}. The system \eqref{*eff} is called the {\it effective equation}. 

The argument in \cite{K12} uses crucially the non-resonance assumption.
If  it is applied to 
eq.~\eqref{*5}, perturbed by the terms, forming the r.h.s. of \eqref{1.11}, then the corresponding 
effective equation is
 linear. That is, {\it in the non-resonant situation the limiting as $\nu\to0$ dynamics is linear}.

\subsection
{  Effective equation for \eqref{1.111}} 
 Consider equations \eqref{*6}, \eqref{*7} which come from eq.~\eqref{1.100}. Now
 the fast motion is resonant, and  if
$
\lambda_{\bk_1} l_1+\dots \lambda_{\bk_m} l_m =0,
$
where $l_j$'s  are non-zero integers, then the corresponding linear combination of phases 
$
\Phi=
\vp_{\bk_1} l_1+\dots \vp_{\bk_m} l_m 
$
is not a fast,  but a slow variable since  $\dot\Phi\sim1$ when $\nu\to0$.  This difficulty is well known in
finite-dimensional  systems (see \cite{AKN} and Section~\ref{s3.1} below). There it is resolved  based 
on a lemma, stating that if $\bk\in \bold K$, $|\bold K|<\infty$,  then a suitable unimodular linear operator 
transforms the torus  $\T^{\bold K}$ to itself in such a way that in 
the transformed system the $\vp$-equations become $\dot \vp'_\bk=-\nu^{-1} \lambda'_\bk$,  $\bk\in \bold K$,
and among the new frequencies $(\lambda'_\bk,  \bk\in \bold K)$ some $r$ of them are zero
(where  $r\le |\bold K|$), while the remaining $ |\bold K| -r$ components are non-resonant. The new
system has $ |\bold K| -r$  fast motions and $ |\bold K| +r$ slow motions, and the usual non-resonant 
averaging may be used to study it. 

 This approach does not apply 
directly to
 the infinite-dimensional system \eqref{*6}, \eqref{*7} with $\bk\in\Z^d$ since the unimodular linear transformation above has no infinite-dimensional analogy (at least,
we do not see any), and  the finite-dimensional averaged equations have no good limit 
as $|\bold K|\to \infty$.  But some {\it formulas} for the resonant averaging do have limits as 
$ |\bold K|\to \infty$. By analogy with the construction of the non-resonant effective equation
\eqref{*eff} (which are linear for the perturbations we consider), they allow us in Section~\ref{s2.3} to 
guess right resonant effective equation. It turned out to be a 
 damped/driven hamiltonian system
\begin{equation}\label{*eff1}
\begin{split}
\dot v_\bk=-\gamma_\bk v_\bk + 2\rho\, i   \,\frac{\p \cH^{\text{res}}(v)}{\p \bar v_\bk}  
+b_\bk \dot\beta^\bk(\tau),\quad \bk\in\Z^d,
\end{split}
\end{equation}
where $\cHR$ is obtained as the resonant average of the Hamiltonian $\cH(v)$: 
\begin{equation}\label{hres}
\cHR(v)= \frac1{2q_*+2}
\sum_{\bk_1,\dots \bk_{2q_*+2}\in\Z^d} v_{\bk_1}\dots v_{\bk_{q_*+1}} \bar v_{\bk_{q_*+2}} \dots
\bar v_{\bk_{2q_*+2}} \,  \delta^{1\ldots q_*+1}_{q_*+2\ldots 2q_*+2}\,
 \delta(\lambda^{1\ldots q_*+1}_{q_*+2\ldots 2q_*+2})\,,
\end{equation}
and we use another physical notation:
\begin{equation}\label{N2}
\delta(\lambda^{1\ldots q_*+1}_{q_*+2\ldots 2q_*+2})=\left\{\begin{array}{cc}
1 & \mbox{if }\lambda_{\bk_1} +\ldots+\lambda_{\bk_{q_*+1}}
-\lambda_{\bk_{q_*+2}}- \ldots- \lambda_{\bk_{2q_*+2}}
 = 0 \\
0 & \mbox{otherwise}
\end{array}
\right.\ .
\end{equation}
That is, the effective equation is obtained from the system 
 \eqref{1.100} by a simple procedure: we drop the fast rotations
and replace the Hamiltonian $\cH$ by its resonant average $\cHR$. In difference with the non-resonant case,
this is a nonlinear system.
It turns out that equations \eqref{*eff1} with sufficiently smooth initial data $v(0)$ are well posed.
 We obtain  this result  in Section~\ref{s5.2} as a consequence of our  main theorem.  
 
 Since the Hamiltonian
 $\cHR$ is obtained by averaging, it has infinitely many commuting quadratic integrals of motion,
 including $|u|^2_{L_2}$ and $|\nabla u|^2_{L_2}$ (where $u=u(x)$ is the function with the Fourier 
 coefficients $v=\{v_s\}$), see Lemma~\ref{l.symm}.
 The corresponding symmetries of the hamiltonian equation $\dot v=i\nabla \cHR(v)$
 also are (weak) symmetries of the effective equation, see Lemma~\ref{l.invar}. This 
 makes the effective equation a bit similar to the stochastic 2d~Navier-Stokes equations on the torus,
 cf.~\cite{KS} (in fact, the former is significantly simpler than the latter). 
 
 The construction of the resonant Hamiltonian $\cHR$ is in the spirit of WT, and the corresponding 
 hamiltonian equation is known there as the {\it equation of discrete turbulence},
 see \cite{Naz},~Chapter~12.  Similar equations were  considered by  mathematicians, interested in related problems
 (see \cite{GG12}), and were used by them for intermediate arguments
  (e.g., see \cite{FGH}).

The stochastic equation \eqref{*eff1} was not considered before our work.

\subsection
{ Results}  Main results of our work  are stated in Sections~\ref{s5.2}-\ref{s2.5}
and are proved in Section~\ref{s3}. They establish that long-time behaviour of solutions for equations 
\eqref{1.111}, when $\nu\to0$, is controlled  by solutions for the effective equation. 
We start with the results on the Cauchy problem for eq.~\eqref{1.111}.  So, let $v^\nu(\tau)$ be a solution of 
\eqref{1.100} such that 
$$
v^\nu(0)=v_0,
$$
where $v_0=(v_{0\bk}, \bk\in\Z^d)$ corresponds to a sufficiently smooth function $u_0(x)$. Let us fix any $T>0$.

Consider the list  $\cA$ of resonances in eq.~\eqref{*1}.  That is,  the set of all nonzero integer 
vectors $\xi=(\xi_\bk, \bk\in\Z^d)$ of finite length, satisfying  $\sum_{\bk \in\Z^d} \xi_k\lambda_k=0$.  For
$\xi\in\cA$ consider the corresponding resonant combination of phases of solutions $v^\nu(\tau)$,
$\ 
\Phi^\xi(v^\nu(\tau)):=  \sum_{\bk \in\Z^d} \xi_k \vp_k(v^\nu(\tau))\in S^1$, $ 0\le\tau\le T.
$
Consider also the vector of actions 
$
I(v^\nu(\tau))=\{ I_\bk(v^\nu(\tau)), \bk\in \Z^d\}.
$

\noindent
{\bf Theorem 1}. When $\nu\to0$, we have the weak convergence of  measures 
$$
\cD\big(I(v^\nu(\tau)) \big) \strela  \cD\big(I(v^0(\tau))\big),
$$
where $v^0(\tau),\ 0\le\tau\le T$, is a unique solution of equation 
\eqref{*eff1} such that $v^0(0)=v_0$. 
\medskip

The resonant combinations of phases $\Phi^\xi(v^\nu(\tau))$
also, in certain sense, converge in distribution to $\Phi^\xi(v^0(\tau))$. See Section~\ref{s5.2} for exact 
statement, which is more involved then Theorem 1. 
 On the contrary, if a finite vector
$s=(s_\bk, \bk\in\Z^d)$ is non-resonant, i.e.  $\sum s_\bk \lambda_\bk\ne0$, then the measure
$
\cD(\Phi^{(s)}(v^\nu(\tau))=: \mu^{(s)}(\tau),
$
mollified in $\tau$, converges when $\nu\to0$ to the Lebesgue measure on $S^1$. 
That is, solutions of the effective equation \eqref{*eff1} approximate in law slow components of solutions
$v^\nu(\tau)$, but not their fast components. These assertions certainly are related to the random phase approximation, accustomed in the WT, but we found it difficult to be here more specific.

The limiting behaviour of solutions $v^\nu(\tau)$ can be described without evoking the effective equation.
Namely, denote by $\cA_m$ the set of resonances $\xi\in\cA$ of length $|\xi|\le m:=2q_*+2$. Then 
the vectors $I^\nu(\tau)=I(v^\nu(\tau))$ and $\Phi^\nu(\tau)=\big(\Phi^\xi(v^\nu(\tau)), \xi \in\cA_m\big)$
converge in distribution to  limiting processes $I^0(\tau)$ and $\Phi^0(\tau)$, which are weak solutions 
of the corresponding averaged equations. Those equations depend only on $I$ and $\Phi$, but the equations
for $\Phi$ have strong singularities at the locus $\Game$ and rigorous formulation of this convergence is
involved, see Proposition~\ref{p.slow}. We see no way to prove that the system of averaged equations is
well posed, so this description is not unique (and we see no way to prove this convergence without using
the effective equation). 

\medskip

Now consider a stationary measure $\mu^\nu$ for  equation \eqref{1.111}  (it always exist). We have 

\noindent
{\bf Theorem 2}. Every sequence $\nu'_j\to0$ has a subsequence $\nu_j\to0$ such that 
$$
I\circ \mu^{\nu_j}\strela  I \circ \mu^0,\qquad
\Phi^{(\xi)}\circ \mu^{\nu_j}\strela \Phi^{(\xi)}\circ \mu^0\quad \forall\, \xi\in\cA,
$$
where $\mu^0$ is a stationary measure for equation \eqref{*eff1}. If a vector $s$ is non-resonant, then the
measure $\Phi^{(s)}\circ\mu^\nu$ converges, as $\nu\to0$, to the Lebesgue measure on $S^1$.
\medskip

\noindent
{\bf Theorem 3}.  If equation \eqref{*eff1} has a unique stationary measure $\mu$, then
$\mu^\nu\strela\mu$ as $\nu\to0$.

Existing technique allows to prove that the stationary measure is unique if $q^*=1$ and the function $f(\lambda)$
grows fast enough as $\lambda\to\infty$, see in \cite{KS}.  We do not discuss here the corresponding results. Due
to  relatively simple structure,  the effective equation has a number of 
nice properties (see \eqref{p4}-\eqref{invar}). Accordingly, 
we  believe that the uniqueness of the stationary measure $\mu$ may be established under mild
 restrictions on $f$. Also, those properties of the effective equation give us  good hope that the Step~II from Section \ref{sWT}
may be rigorously performed. The method of the recent work \cite{FGH} may be relevant for that. 
\medskip

\noindent
{\bf Notation and Agreement.} The  {\it stochastic terminology} we use agrees with \cite{KaSh}.
 All filtered probability spaces we work with satisfy the {\it usual condition} (see \cite{KaSh}).\footnote{
I.e., the corresponding filtrations $\{\cF_t\}$ are continuous from the right, and each $\cF_t$ contains all negligible sets.}
 Sometime we forget to mention that 
a certain relation holds a.s. 
\\
{\it Spaces of integer vectors.} We denote by $\Z^\infty_0$ the set of vectors in $\Z^\infty$ 
of finite length,  and denote $\ZZ=\{s\in\Z^\infty_0: s_k\ge0\  \forall\,k\}$. Also
see \eqref{notation} and \eqref{hren}. 

\noindent 
{\it Infinite vectors.} For an infinite vector $\xi=(\xi_1,\xi_2,\dots)$ (integer, real or complex) and $N\in\N$ 
we denote by $\xi^N$ the vector $(\xi_1,\dots,\xi_N)$, or the vector $(\xi_1,\dots,\xi_N,0,\dots)$.  For a complex
vector $\xi$ and $s\in\ZZ$ we denote  $\xi^s=\prod_j \xi_j^{s_j}$. 

\noindent
  {\it Norms.} We use $|\cdot|$ to denote the Euclidean norm in $\R^d$ and in 
$\C\simeq\R^2$, as well as the $\ell_1$-norm in $\Z^\infty_0$. For the norms
$|\cdot|_{h^m}$ and $|\cdot|_{h^m_I}$
 see \eqref{vnorm} and below that.

\noindent
{\it Scalar products.}  The notation  ``$\cdot$'' stands for  the
scalar product in $\Z^\infty_0$, the paring of $\Z^\infty_0$ 
with $\Z^\infty$, the Euclidean scalar product in $\R^d$ and in $\C$. The latter means that if $u,v\in\C$, then 
$u\cdot v=\Re(\bar u v)$.  The $L_2$-product is denoted $\lan \cdot, \cdot\ran$.

\noindent
{\it Max/Min.} We denote $a\vee b=\max(a,b)$, $a\wedge b=\min(a,b)$. 
\bigskip          

\medskip\par
\noindent{\it Acknowledgments.} We wish to thank for discussions and advice 
Sergey Nazarenko, Anatoli Neishtadt  and Vladimir Zeitlin. 
 This work was supported by ANR through the grant STOSYMAP (ANR 2011 BS01 015 01).

\section{Preliminaries}\label{s1}
Since in this work 
we are  not interested in the dependence  of the
results on  $L$, 
 from now on it will be kept fixed and equal to 1, apart from Section~\ref{s2.5}. There 
 we make explicit calculations, controlling how their results depend 
  on $L$.
\subsection{Apriori  estimates.}\label{s1.1}
In this section we discuss preliminary properties of solutions for \eqref{1.111}. We found it
convenient to parametrise the vectors from the 
 trigonometric basis $\{e^{i{ \bk}\cdot x}\}$ by natural 
numbers and to normalise them.
 That is, to use the basis $\{e^j(x), j\ge 1 \}$, where $e^j(x) = (2\pi )^{-d/2} e^{i{ \bk}\cdot x}$,
 $\bk=\bk(j)$.  The functions $e^j(x)$ 
are eigen--vectors of the Laplacian,  $-\Delta e^j=\lambda_j e^j$, so
ordered that $0=\lambda_1< \lambda_2\le \ldots$.  Accordingly eq.~\eqref{1.111} reads 
\begin{equation}\label{1.1}
\begin{split}
\dot u+i\nu^{-1}\big(-\Delta u\big)= -f(-\Delta) u- i\gi|u|^{2q_*}u
+\frac{d}{d\tau}\sum_{j=1}^\infty b_j\bb^j(\tau)e^j(x)\ ,
\end{split}
\end{equation}
$u=u(\tau,x)$, where 
$
f(-\Delta) e^j= \lla_je^j$ with $ \lla_j= f(\lambda_j). 
$
The processes $\bb^j = \beta^j + i\beta^{-j}  , j\ge 1$, are
standard independent complex Wiener processes. 
 The real numbers $b_j$ are such that  for a suitable sufficiently large 
$r \in\N$ (defined below in \eqref{rr}) 
 we have 
$$
B_r:=2\sum_{j=1}^\infty \lambda_j^r
b_j^2  < \infty .
$$

By $\cH^r$, $r\in \R$, we denote the Sobolev space $\cH^r=H^r(\T^d, \C)$,
 regarded as a real Hilbert space,
 and denote by $\lan \cdot,\cdot \ran$ 
 the real $L^2$--scalar product on $\T^d$ . We provide $\cH^r$  with the  norm $\|\cdot \|_r$, 
$$
\left\| u\right\|_r^2=\sum_{j=1}^\infty|u_j|^2  ( \lambda_j\vee 1)^r\quad
\mbox{for } u(x)=\sum_{j=1}^\infty u_j e^j(x)\  .
$$
 We abbreviate $\cH^0=\cH$ and $\|\cdot\|_0=
\|\cdot\|$ (so $\|\cdot\|$ is the $L_2$-norm). 

Let $u(t,x)$ be a solution of \eqref{1.1} such that
$u(0,x)=u_0$. It satisfies standard a-priori estimates which we now
discuss, following \cite{K12}. Firstly,  for a suitable $\eps_0>0$, uniformly in $\nu>0$
one has
\begin{equation}\label{1.3}
\E   e^{\eps_0 \|u(\tau)\|^2} \leq C(B_0,\|u_0\|) \quad \forall \tau\geq 0\,.
\end{equation}
 Assume that 
\begin{equation}\label{2.4}
q_*<\infty\;\;\;\text{if}\;\;d=1,2,\qquad
q_*<\frac{2}{d-2}\;\;\;\text{if}\;\;d\ge3\ . 
\end{equation}
Then, the following bounds on the Sobolev norms of the solution hold for each $2m\le r$ and every
 $n$:
\begin{equation}\label{2.5}
 \begin{split}
\E\left(\sup_{0\le\tau\le T}  {\|u(\tau)\|}_{2m}^{2n}+
\int_0^T{\|u(s)\|}^2_{2m+1} { \|u(s)\|}_{2m}^{2n-2}ds\right)\\ 
\le {\|u_0\|}_{2m}^{2n}+
C(m,n,T)\big(1+\|u_0\|^{c_{m,n}}\big),
\end{split}
\end{equation}
 \begin{equation}\label{2.05}
 \E\,
{\|u(\tau)\|}_{2m}^{2n} \le  C(m,n)\qquad  \forall\,\tau\ge0, 
\end{equation}
 where $C(m,n,T)$  and $C(m,n)$  also depend on
 $B_{2m}$.
 \medskip
 
  Estimates \eqref{2.5}, \eqref{2.05} are assumed everywhere in our work.  As we have explained, 
 they are fulfilled under the assumption  \eqref{2.4}, but if  the function $f(t)$ grows super-linearly, then the 
 restriction  \eqref{2.4} may be weakened.
  \medskip
 
 Relations \eqref{2.5} in the usual way
 (cf. \cite{Hai01b,KS04J, Od06, Sh06})
 imply that eq.~\eqref{1.1} is {\it regular in the space
   $\cH^{2m}\cap L_{2q+2}$} in the sense that for any $u_0\in
 \cH^{2m}\cap L_{2q+2}$ it has a unique strong solution $u(t,x)$,
 equal $u_0$ at $t=0$, and satisfying estimates \eqref{1.3},
 \eqref{2.5} for any $n$. By the Bogolyubov-Krylov
 argument, applied to a solution of \eqref{1.1}, starting from the origin at $t=0$,
  this equation has a stationary measure $\mu^\nu$, supported
 by the space $\cH^{2m}\cap L_{2q+2}$, and a corresponding stationary
 solution $u^\nu(\tau)$, $\cD u^\nu(\tau)\equiv\mu^\nu$, also
 satisfies \eqref{1.3} and \eqref{2.05}.

\subsection{Resonant averaging  }\label{s3.1}
 For a vector $v\in\R^N$, $N\le\infty$,
we will denote
$\ 
|v|_1=\sum |v_j| \le\infty.
$ 
Given a vector ${W}\in\R^n$ , $n\ge1$, and a positive
integer $m$, we  call the set 
\begin{equation}\label{4.2}
\cA=\cA(W,m):=\{ s\in \Z^n: |s|_1\le m, \,{W}\cdot s=0\}\ 
\end{equation}
the {\it set of resonances for $W$ of order $m$}. 
We denote 
 by  $\cA^Z$ the $\Z$-module, generated by 
$\cA$ (called the {\it  resonance module}),  denote its rank by $r$ 
 and set  $\cA^R=\,$span$\,\cA$ (so dim$\,\cA^R=r$). 
Here and everywhere below the finite-dimensional vectors
are regarded as column-vectors
 and ``span" indicates the linear envelope over real numbers.

The following fundamental lemma provides the space  $\cA^R$ with a very convenient integer basis.
For its  proof see, for example,  \cite{bour}, Section~7:
\begin{lemma}
There exists a system $\zeta^1,\ldots,\zeta^n$ of integer vectors in
$\Z^n$ such that
span$\,\{\zeta^1,\ldots,\zeta^r \}= \cA^R$,  and the
$n\times n$ matrix $R=(\zeta^1 \zeta^2\dots\zeta^n)$
 is unimodular (i.e., $\det R=\pm1$).
\end{lemma}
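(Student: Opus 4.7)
The statement is a classical fact about lattices in $\Z^n$, so my plan is to reduce it to the structure theorem for finitely generated modules over the principal ideal domain $\Z$ (equivalently, to the Smith normal form / ``stacked bases'' theorem). The essential observation is that $\cA^R$, being the real span of a set of integer vectors, is a \emph{rational} subspace of $\R^n$, so the lattice $M := \cA^R \cap \Z^n$ is a free $\Z$-module of rank exactly $r = \dim \cA^R$.

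First I would pick any $\Z$-basis $\zeta^1, \ldots, \zeta^r$ of $M$; by construction these vectors lie in $\Z^n$ and their real span is all of $\cA^R$. The whole question then reduces to showing that this basis of $M$ extends to a $\Z$-basis $\zeta^1, \ldots, \zeta^n$ of $\Z^n$, because if it does then the matrix $R=(\zeta^1\,\zeta^2\cdots\zeta^n)$ is the change-of-basis matrix between two $\Z$-bases of $\Z^n$ and is therefore unimodular.

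The key point in the extension step is that $M$ is a \emph{saturated} (or ``pure'') submodule of $\Z^n$, i.e., the quotient $\Z^n/M$ is torsion-free. To see this, suppose $x\in\Z^n$ and $kx\in M$ for some nonzero $k\in\Z$. Then $kx\in\cA^R$, and since $\cA^R$ is an $\R$-subspace we may divide by $k$ in $\R^n$ to get $x\in\cA^R$; combined with $x\in\Z^n$ this gives $x\in M$. Hence $\Z^n/M$ has no torsion. Being a finitely generated torsion-free $\Z$-module, it is free of rank $n-r$.

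Consequently the short exact sequence $0 \to M \to \Z^n \to \Z^n/M \to 0$ splits, so $\Z^n = M \oplus N$ for some complementary free submodule $N$ of rank $n-r$. Choosing any $\Z$-basis $\zeta^{r+1},\ldots,\zeta^n$ of $N$ and concatenating with $\zeta^1,\ldots,\zeta^r$ produces the desired basis of $\Z^n$, and the associated matrix $R$ is automatically unimodular. There is no real obstacle here beyond invoking the freeness of finitely generated torsion-free $\Z$-modules, which is the content of the Bourbaki reference quoted in the paper; the only thing worth double-checking is the rationality argument that makes $M = \cA^R \cap \Z^n$ have full rank $r$ inside $\cA^R$, which follows at once from the fact that $\cA^R$ is spanned by the integer vectors in $\cA$.
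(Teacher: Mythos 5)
Your proof is correct. The paper does not prove this lemma itself but simply cites Bourbaki (Section 7), and your argument — observing that $M=\cA^R\cap\Z^n$ is a saturated rank-$r$ submodule of $\Z^n$, so that $\Z^n/M$ is finitely generated torsion-free hence free, the exact sequence splits, and the resulting $\Z$-basis of $\Z^n$ yields a unimodular matrix — is precisely the standard argument behind that reference, with the one delicate point (that $M$ has full rank $r$ inside $\cA^R$ because $\cA^R$ is spanned by integer vectors) handled correctly.
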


The $\Z$-module, generated by $\zeta^1,\dots,\zeta^r$, contains $\cA^Z$ and 
 may be bigger  than $\cA^Z$, but the factor-group of the former by the latter always
  is finite. 
 \smallskip

We will write  vectors $y \in \R^n$ as $y=  \left(\begin{array}{c}\!\!y^I\!\! \\ \!\!y^{II}\!\!\end{array}\right), 
  y^I\in\R^r, y^{II}\in\R^{n-r}$.
Then clearly $\cA^R=
\Big\{R   \left(\begin{array}{c}\!\!y^I\!\! \\ \!\!0\!\!\end{array}\right)
 : y^I\in \R^r \Big\}$. Therefore 
\begin{equation}\label{0.0}
s\in\cA  \Rightarrow \text{ $
 R^{-1}s= \left(\begin{array}{c}\!\!y^I\!\! \\ \!\!0\!\!\end{array}\right)
$ for some $y^I\in\Z^r$  }.
\end{equation}
Since $s\in\cA^R$  implies that $ W\cdot s =0$, then also 
\begin{equation}\label{0.00}
\big\{s\in\Z^n,\;\; |s|_1\le m 
\text{ and $ R^{-1}s= \left(\begin{array}{c}\!\!y^I\!\! \\ \!\!0\!\!\end{array}\right)
$ for some $y^I\in\Z^r \big\}    \Rightarrow s\in\cA$
  }.
\end{equation}

Let us provide $\R^n$ with the standard basis $\{e^1,\dots,e^n\}$, where 
$e^j_i=\delta_{ij}$, $1\le i\le n$, and consider the vectors 
\begin{equation}\label{eta}
\eta^j=(R^T)^{-1} e^j\ , \quad j=r+1,\ldots,n.
\end{equation}
Then 
\begin{equation}\label{baasis}
\text{vectors $\eta^{r+1},\dots,\eta^n$ form a basis of $(\cA^R)^\bot$.
}
\end{equation}
Indeed, these $n-r$ vectors are  linearly independent,  and 
 for each $ j>r$,
  $s\in\cA$ in view 
of \eqref{0.0} we have 
\begin{equation}\label{-11}
\lan s,\eta^j\ran = \lan s,(R^T)^{-1}e^j\ran = \lan R^{-1}s, e^j\ran=0 \,.
\end{equation}

Consider a liner mapping, dual to the space $\cA^R$:
$$
\R^n\to\R^r,\qquad \xi\mapsto (\lan\xi,Re^1 \ran,\dots, \lan\xi,Re^r \ran).
$$
It defines a mapping 
\begin{equation}\label{f0}
L_\cA: \T^n=\R^n/2\pi\Z^n \to \T^r=\R^r/2\pi \Z^r.
\end{equation}
For any $\xi\in \T^n$ consider the fiber $F_\xi$ through $\xi$, $F_\xi=L_\cA^{-1} (L_\cA(\xi))$. 
Clearly the points 
\begin{equation}\label{f1}
\xi+\sum_{j=r+1}^n \zeta_j \eta^j\in\T^n  , \qquad \zeta\in\T^{n-r}, 
\end{equation}
belong to $F_\xi$. The r.h.s. of \eqref{f1} defines an embedding of $\T^{n-r}$ to $F_\xi$. 
Indeed, if two points $\zeta^1$ and $\zeta^2$ correspond to the same point of $\T^n$, then
for $l=r+1,\dots, n$ their difference $\zeta = \sum \zeta_j \eta^j$ satisfies 
$$
0 
\overset{\text{mod}\, 2\pi}{=} \Big\lan\sum_j \zeta_j \eta^j, Re^l \Big\ran
=\sum_j \zeta_j\big\lan \eta^j,Re^l\big\ran=\zeta_l.
$$
In fact, this embedding is an isomorphism. Indeed, let $\xi^1\in F_\xi$, i.e. 
$L_\cA\xi^1 = L_\cA\xi$.  Modifying $\xi^1$ by a suitable linear combination of vectors $\{\eta_j, r+1\le j\le n\}$, 
we get a vector $\xi^2$ such that $\lan\xi^2,Re^j\ran= \lan\xi,Re^j\ran$  for all $j$. Therefore $\xi^2=\xi$ in $\T^n$,
and $\xi^1$ has the form \eqref{f1}.  That  is,
\begin{equation}\label{f2}
F_\xi = L_\cA^{-1}(L_\cA(\xi))=: \T_\xi^{n-r},
\end{equation}
where $\T_\xi^{n-r}$ is the set \eqref{f1}. The coordinate $\zeta$ is well defined on $\T_\xi^{n-r}$ modulo
translations. So the normalised Lebesgue measure  $\, \dbar\zeta$ is well defined on $\T_\xi^{n-r}$, and the translation
of a torus $\T_\xi^{n-r}$ by a vector $s\in\R^{n-r}$ is a well defined operator. 

\medskip

 For a continuous 
function $f$ on $\T^n$ we define its {\it resonant average of order $m$
with respect to the vector $W$}  as the function\footnote{
To understand this formula, consider the mapping $\T^d\to \T^d$, $\vp\to \psi=R^T\vp$.
In the $\psi$-variables the function $f(\vp)$ becomes $f^\psi(\psi)=f((R^T)^{-1}\vp)$, and the equation 
$\dot\vp=W$ becomes $\dot\psi=R^T W$. So $\dot\psi_1=\dots=\dot\psi_r=0$, and the averaging of $f^\psi$
should be 
$\int_{\T^{n-r}}f^\psi\left( \psi_1, \dots,\psi_{r+1}+\theta_{r+1},\dots, \psi_n+\theta_n
\right) \dbar\theta,$
which equals \eqref{usred}.
}
\begin{equation}\label{usred}
\langle f \rangle_W(\vp) :=
\int_{\T^{n-r}}f\left( \vp+\sum_{j=r+1}^n \theta_j
\eta^j\right) \dbar\theta\ ,
\end {equation}
where we have set $\dbar \theta_j:= \tfrac{1}{2\pi}d\theta_j$.
The importance of the resonant averaging  is due to the {\it resonant 
version of the Kronecker-Weyl theorem}. In order to state it, for a continuous function $f$ on 
$\T^n$,  $f(\vp)=\sum_s f_s e^{is\cdot \vp}$, we define its 
 degree     as 
$ \sup_{s\in \Z^n:f_s\neq 0}{|s|_1}$ (it is finite or infinite). 

\begin{lemma}\label{l.aver}
Let $f:\T^n\to \C$ be a continuous function of  degree at most  $m$. Then  
\begin{equation}\label{aver}
\lim_{T\to\infty}\frac1{T}\int_0^T f(\vp+t{W})\,dt = \langle
f\rangle_W (\vp)\ ,
\end {equation}
uniformly in $\vp\in\T^n$. The rate of convergence  in the l.h.s. 
depends on $n,m,   |f|_{C^0} $  and  $W$.
More specifically, it depends on ${W}$ only through the quantity 
$$
\kappa({W},m)=\min\{ |s\cdot{W}|:  s\cdot{W}\ne0, |s|_1\le m\}>0
$$
(the bigger $\kappa$, the faster is the convergence). 
\end{lemma}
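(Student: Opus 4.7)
The plan is to exploit that $f$ is a trigonometric polynomial of degree at most $m$, so $f(\vp)=\sum_{|s|_1\le m} f_s e^{is\cdot\vp}$, and to compute both the time average on the left-hand side of \eqref{aver} and the resonant average \eqref{usred} mode by mode. Separating the Fourier modes into resonant ones ($s\in\cA$, i.e.\ $s\cdot W=0$) and non-resonant ones, the matching will be immediate for the resonant modes, while the non-resonant ones will contribute $O\!\bigl(1/(T\kappa(W,m))\bigr)$ to the time average and $0$ to the resonant average. A bound on the number of multi-indices with $|s|_1\le m$ in $\Z^n$ (a polynomial quantity in $n$ and $m$) then gives uniform convergence with the claimed rate.

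In detail, I would first compute
\[
\frac1T\int_0^T f(\vp+tW)\,dt=\sum_{|s|_1\le m} f_s e^{is\cdot\vp}\,\frac1T\int_0^T e^{it s\cdot W}\,dt.
\]
For $s\in\cA$ the inner integral equals $1$; for the non-resonant terms one has $|s\cdot W|\ge \kappa(W,m)$, whence
\[
\Bigl|\tfrac1T\int_0^T e^{its\cdot W}\,dt\Bigr|\le \frac{2}{T\,\kappa(W,m)}.
\]
Since $|f_s|\le |f|_{C^0}$ and $\#\{s\in\Z^n:|s|_1\le m\}$ is bounded by some $N(n,m)$, the non-resonant contributions are uniformly $O(N(n,m)|f|_{C^0}/(T\kappa(W,m)))$. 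Hence
\[
\lim_{T\to\infty}\frac1T\int_0^T f(\vp+tW)\,dt=\sum_{s\in\cA}f_s\,e^{is\cdot\vp},
\]
uniformly in $\vp$, with the announced dependence on the parameters.

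Next I would evaluate the resonant average mode by mode using Fubini:
\[
\langle f\rangle_W(\vp)=\sum_{|s|_1\le m}f_s e^{is\cdot\vp}\int_{\T^{n-r}}\exp\!\Bigl(i\sum_{j=r+1}^n \theta_j\, s\cdot\eta^j\Bigr)\dbar\theta.
\]
The key algebraic point is that, since $R$ is unimodular, each $\eta^j=(R^T)^{-1}e^j$ lies in $\Z^n$, so $s\cdot\eta^j\in\Z$ for $s\in\Z^n$. Consequently the $(n-r)$-dimensional torus integral equals $1$ if $s\cdot\eta^j=0$ for every $j>r$ and vanishes otherwise. Because $\{\eta^{r+1},\dots,\eta^n\}$ is a basis of $(\cA^R)^\bot$ by \eqref{baasis}, the non-vanishing condition is exactly $s\in\cA^R$, and since $s\in\Z^n$ with $|s|_1\le m$, \eqref{0.0}--\eqref{0.00} translate this into $s\in\cA$. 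Therefore
\[
\langle f\rangle_W(\vp)=\sum_{s\in\cA}f_s\,e^{is\cdot\vp},
\]
which matches the time limit and completes the proof.

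The only nontrivial step is the mode-selection in the resonant average: one must rule out stray integer modes with $s\cdot\eta^j$ a nonzero integer. That is precisely where the integrality of $\eta^j$ (a consequence of unimodularity of $R$) together with the description \eqref{0.00} of $\cA$ in terms of the change of basis $R$ is used, so I would be careful to invoke these ingredients explicitly. Everything else is a routine Fourier and Kronecker-type estimate.
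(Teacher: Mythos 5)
Your proof is correct and follows essentially the same route as the paper: both reduce to individual harmonics $f_s e^{is\cdot\vp}$ with $|s|_1\le m$, bound the non-resonant time averages by $2/(T|s\cdot W|)\le 2/(T\kappa(W,m))$, and identify the surviving modes of $\langle f\rangle_W$ via the integrality of the $\eta^j$ and the characterization \eqref{0.0}--\eqref{0.00} of $\cA$ through the unimodular matrix $R$. No gaps.
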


Note that $\kappa\ge1$ if ${W}$ is an integer vector.\smallskip

\noindent
{\it Proof.} Let us denote the l.h.s. in \eqref{aver} as $\{f\}(\vp)$. As $f(\vp)$ is a finite sum
of harmonics $f_se^{is\cdot \vp}$, where the number of the terms is $\le C(n,m)$ and 
$|f_s|\le |f|_{C^0}$ for each $s$, it suffices to  prove the assertions for  $f(\vp)=e^{is\cdot \vp}, |s|\le m$.

i) Let $s\in\cA$. Then $s\cdot W=0$ and
 $\{e^{is\cdot \vp}\}=e^{is\cdot \vp}$ since $e^{is\cdot (\vp+t{W})} \equiv e^{is\cdot \vp}$.
 By \eqref{-11},
the integrand in \eqref{usred} equals
$\ 
e^{is\cdot \vp} \prod_{j=r+1}^n e^{i\theta_j \eta^j\cdot s}= e^{is\cdot \vp} .
$
So in this case 
$\ 
\{e^{is\cdot \vp}\}=e^{is\cdot \vp}=\langle e^{is\cdot \vp}\rangle_W.
$
\smallskip

ii) Now let $s\notin \cA$.  Since $|s|_1\le m$, then $s\cdot W\ne0$. Consider
$\ 
\{e^{is\cdot \vp}\}=\lim
 \frac1{T}
\int_0^Te^{is\cdot(\vp+t{W})}\,dt 
$.
The modulus of the expression under the lim-sign is $\, \le  2( {T|s\cdot
  {W}|})^{-1}$.  Therefore $\{e^{is\cdot \vp}\}=0$ and the rate of convergence to zero
  depends only on $\kappa({W},m)$. By \eqref{0.00}, $(R^T)^{-1}s\cdot e^j\ne0$
  for some $j>r$. So the integrand in \eqref{usred} is a function $Ce^{i\xi\cdot\theta}$,
  where $\xi$ is a non-zero integer vector, and
  $\langle e^{is\cdot \vp} \rangle_W=0$. We see that in this case \eqref{aver} also holds.
\qed\medskip

The proof above also demonstrates the following result:
\begin{lemma}\label{r.aver} Let $f$ be a  finite trigonometrical polynomial $f(\vp)=\sum f_se^{is\cdot \vp}$. Then
the l.h.s. of \eqref{aver} equals $\sum_{s\cdot W=0} f_s 
\, e^{is\cdot \vp}$. If
degree of $f$ is $\le m$, then also 
\begin{equation}\label{yy}
\lan f\ran_W(\vp)=
\sum
f_s\delta_{0,\,s\cdot W}\, e^{is\cdot \vp}=
\sum_{s\in \cA(W,m)} f_s\, e^{is\cdot \vp}.
\end{equation}
\end{lemma}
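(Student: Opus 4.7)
The plan is to reduce to monomials by linearity and reuse the case analysis already carried out in the proof of Lemma~\ref{l.aver}. Since $f$ is a finite trigonometric polynomial and both the Ces\`aro time-average on the l.h.s.\ of \eqref{aver} and the integral \eqref{usred} defining $\lan\cdot\ran_W$ are linear in $f$, it suffices to verify each of the two claims for a single exponential $f(\vp)=e^{is\cdot\vp}$ and then sum over the finitely many $s$ appearing in the polynomial.

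For the first claim, no degree restriction is needed: I would compute $T^{-1}\int_0^T e^{is\cdot(\vp+tW)}\,dt$ directly. When $s\cdot W = 0$ the integrand is the constant $e^{is\cdot\vp}$, so the Ces\`aro limit equals $e^{is\cdot\vp}$; when $s\cdot W\ne 0$ the integral is bounded in modulus by $2/(T|s\cdot W|)$ and the limit vanishes. Summing over the support of $f$ yields $\sum_{s\cdot W=0}f_s\,e^{is\cdot\vp}$.

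For the second claim, I would compute $\lan e^{is\cdot\vp}\ran_W$ for each $s$ in the support of $f$, which by the degree hypothesis satisfies $|s|_1\le m$. Cases (i) and (ii) of the proof of Lemma~\ref{l.aver} already do exactly this: if $s\in\cA(W,m)$, then by \eqref{-11} every factor $e^{i\theta_j\eta^j\cdot s}$ in the integrand of \eqref{usred} equals $1$, so $\lan e^{is\cdot\vp}\ran_W=e^{is\cdot\vp}$; if $s\notin\cA(W,m)$ but $|s|_1\le m$, then by \eqref{0.00} there exists $j>r$ with $\eta^j\cdot s\ne 0$, so the integrand is a non-trivial character on $\T^{n-r}$ and integrates to $0$. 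Linearity then delivers the first equality in \eqref{yy}. The two remaining equalities are pure bookkeeping: $\delta_{0,s\cdot W}$ simply restricts the sum to frequencies with $s\cdot W=0$, and under the assumption $\deg f\le m$ this restriction coincides with the condition $s\in\cA(W,m)$.

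I do not anticipate any real obstacle: the lemma is essentially a corollary of Lemma~\ref{l.aver} packaged for later use. The only non-trivial ingredient, the equivalence between $s\cdot W=0$ and $s\in\cA(W,m)$ for degree-bounded $s$, is exactly \eqref{0.00}, which has already been established.
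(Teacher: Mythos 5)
Your proposal is correct and follows exactly the route the paper intends: the text introduces Lemma~\ref{r.aver} with the remark that ``the proof above also demonstrates the following result,'' i.e.\ one reduces to single exponentials by linearity and reuses cases (i) and (ii) of the proof of Lemma~\ref{l.aver}, with \eqref{-11} and \eqref{0.00} supplying the two key facts just as you describe. No gaps.
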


\subsection{Resonant averaging in  Hilbert space  }\label{s3.2}
Consider the Fourier transform for complex functions on $\T^d$ 
which we write as the
  mapping
$$
\cF: \cH \ni u(x)\mapsto v=(v_1,v_2,\ldots) \in \C^\infty\ ,
$$
defined by the relation $u(x)=\sum v_k e^k(x)$. In the space of
complex sequences we introduce the norms
\begin{equation}\label{vnorm}
\left|v\right|^2_{h^m}=\sum_{k\ge 1} |v_k|^2 (\lambda_k\vee1)^m\,, \quad m\in\R\,,
\end{equation}
and set $h^m=\{v|\, \left| v\right|_{h^m} <\infty\}$. Then 
$$
|\cF u|_{h^m}=\|u \|_m\qquad \forall\,m.
$$

For $k\ge1$ let us denote $I_k=I(v_k)=\tfrac12|v_k|^2$ and
$\vp_k=\vp(v_k)$, where $\vp(v)=\Arg v\in S^1$ if $v\ne0$ and
$\vp(0)=0\in S^1$. For any $r\ge0$ consider the
mappings\begin{equation*}
\Pi_I:h^r\ni v\mapsto I=(I_1,I_2,\dots)\in h^r_{I+},\qquad
\Pi_\vp:h^r\ni v\mapsto \vp=(\vp_1,\vp_2,\dots)\in\T^\infty.
\end{equation*}
Here $h^r_{I+}$ is the positive octant $\{I: I_k\ge 0 \  \forall k\}$ 
 in the space $h^r_{I}$, where 
$$
 h^r_{I}=\{I\mid | I|_{h^r_I}=2\sum_k (\lambda_k\vee1)^r|I_k|<\infty\}.
$$
Abusing a bit notation we will  write 
$$
\Pi_I (\cF(u))=I(u),\qquad \Pi_\vp(\cF(u))=\vp(u).
$$
The mapping $I:\cH^r\to h_I^r$ is  2-homogeneous 
continuous, while the mappings $\vp:\cH^r\to \T^\infty$ 
and $(I\times\vp):\cH^r\to h^r_I\times\T^\infty$ are Borel-measurable
 (the torus 
$\T^\infty$ is given the Tikhonov topology and the corresponding   Borel sigma-algebra). 

For integer vectors $s=(s_1,s_2,\dots)$  (and only for them) we will abbreviate
$$
|s|_1=|s|.
$$
We denote $
\Z_0^\infty=\{s\in\Z^\infty:  |s|<\infty\},
$
and for a vector $s=(s_1,s_2,\dots)\in \Z_0^\infty$ write
\begin{equation}\label{notation}
\Lambda\cdot s=\sum_k\lambda_ks_k, \quad \supp s=\{k: s_k\ne 0\}, \quad
\lc s\rc=\max\{k: s_k\ne0\}.
\end{equation}
Similar for  $\vp\in\T^\infty$ and $s\in \Z_0^\infty$ we write 
$\vp\cdot s=s\cdot\vp=\sum_k\vp_k s_k\in S^1=\R/2\pi\Z$. 

Let us  fix some
 $m\in\N$ and, by analogy with \eqref{4.2}, define the set  of resonances of order $m$
   for the frequency-vector
 $\ 
 \Lambda=(\lambda_1,\lambda_2,\dots)
 $ 
  as
\begin{equation}\label{resset}
\cA=
\cA(\Lambda,m)=\{ s\in \Z^\infty_0 : |s|\le m,\Lambda\cdot s=0\}\ .
\end{equation}
We have 
$$
\cA(\Lambda,m)=
\cup_n \big(\n\cA (\Lambda,m)\big),
$$
where $\n\cA(\Lambda,m)=\{s\in\cA(\Lambda,m): \lc s\rc\le n\}$. Clearly, 
\begin{equation}\label{*An}
 \n \cA(\Lambda,m)=
\cA (\Lambda^n,m), 
\end{equation}
where 
$ \Lambda^n:=(\lambda_1,\ldots,\lambda_n)\in  \R^n $
(in the sense that the vectors in the set on the right are formed by first $n$ components
of the vectors on the left).

We order  vectors in  the set $\cA$,  that is write it as 
$\cA=\{ s^{(1)},s^{(2)},\ldots\}$, 
  in such a way 
    that 
  $\lc s^{(j_1)} \rc \le  \lc s^{(j_2)}\rc$ if $j_1\le j_2$, and for $N\ge1$ denote 
  \begin{equation}\label{NN}
 J(N)=\max\{j: \lc s^{(j)}\rc \le N\}.
\end{equation}

  For a vector $v\in h^0$ and any $n$ we denote by $v^n$ the vector $(v_1,\dots,v_n)\in \C^n$, or 
  the vector   $(v_1,\dots,v_n,0,\dots)\in h^0$. -- It will be always clear from the context which interpretation of $v^n$ 
  we use. 
  \medskip

\noindent
{\bf Definition.}  A Borel-measurable complex or real function 
  $f$ on $h^r$ is called {\it cylindric} if for some $n\ge1$ it
satisfies $f(v)\equiv f(v^n)$. It is called {\it cylindric continuous} if it is continuous
on some space $h^p$ (then it is continuous on each space $h^r$). 
\smallskip

Let $f$ be a continuous cylindric function.
We regard it as a function on  some  $\C^{n}=\{v^n\}$, denote by 
$\C^n_{*}$ the subset of $\C^{n}$ formed by vectors $v^n$ such  that $I_k(v^n)\ne0$
for each $k\le n$, and identify vectors $v\in \C^n_{*}$ with $(I,\vp)(v)\in \R_+^n\times\T^n$.
Assume that the order of $f(I,\vp)$ in $\vp$ is $\le m$.

Let $r_n$ be the rank of $\cA(\Lambda^n,m)$.   We define the $n$-dimensional resonant averaging
$\lan f(I,\vp)\ran^n_\Lambda$
of $f$, using \eqref{usred}:
\begin{equation}\label{defmedia}
\lan f(I,\vp)\ran^n_\Lambda=
 \lan f(I^n,\vp^n) \ran_{\Lambda^n}= \int_{\T^{n-r_n}}f\left(I_1,\ldots,I_n,\vp^n+
\sum_{j=r_n+1}^n\theta_j \eta^j\right)
\dbar \theta\ ,
\end{equation}
where $\eta^1,\dots,\eta^{n-r_n}$ are the integer $n$-vectors \eqref{eta}, corresponding to $\Lambda^n$.

Now let us take any $n'>n$. Since we can  regard $f$ as a function of $v^{n'}\in\C^{n'}$, then we can also
 construct the averaging   ${}_{n'}\lan f\ran_\Lambda$. But since  both  $ \lan f\ran^n_\Lambda$ and
  ${}_{n'}\lan f\ran_\Lambda$ equal the
limit in the l.h.s. of \eqref{aver}, then they coincide. So 
\begin{equation}\label{invarr}
\text{
$ \lan f\ran_\Lambda^n$ does not depend on $n$, provided that 
$f(v)\equiv  f(v^n)$. }
\end{equation}

The averaging \eqref{defmedia} can be conveniently  written in the $v$-coordinate. To see this, 
for any $\theta\in\T^\infty$ let us   denote by  $\Psi_\theta$  the linear operator in $h^0$:
$$
\Psi_\theta(v)=v',\qquad v'_k=e^{i\theta_k}v_k.
$$
 Clearly this is a unitary isomorphism of every space $h^r$.
  Noting  that
   \begin{equation}\label{Psi}
  (I\times\vp)(\Psi_\theta v)\equiv (I(v), \vp(v)+\theta)
  \end{equation}
  we re-write the r.h.s. of \eqref{defmedia} as 
  \begin{equation}\label{Aver}
 \int_{\T^{n-r_n}}   f(\Psi_{\big(\sum_{j=r_n+1}^n\theta_j \eta^j \big)}(v))
\,\dbar \theta\ ,
\end{equation}
This formula for $ \lan f\ran_\Lambda^n$ is meaningful  everywhere in $\C^n$ (not only in $\C^n_{*}$).

Finally we give the following 
\medskip

\noindent
{\bf Definition.}  Let $f(v)$ be a continuous cylindric function 
such that its order in $\vp$ is $\le m$. 
 Then its     resonant average of order $m$ with respect to the vector $\Lambda$, $\lan f\ran_\Lambda(v)$, is the 
function \eqref{Aver}, where $n$ is sufficiently large. 
\medskip

We have seen that   \eqref{Aver}   is well defined  (i.e. this function will not change if we replace $n$ by a bigger $n'$).   Clearly it is continuous. 
 
 Let us take any $f(v)=f(v^n)$ as in the definition, 
  write it as a function of $(I^n,\vp^n)$, and decompose in  Fourier series in $\vp^n$,
 $f=\sum _{s\in\Z^n, |s|\le m} f_s(I^n)e^{is\cdot \vp^n}$. Then using \eqref{yy} we see  that 
  \begin{equation}\label{yyy}
  \lan f\ran_\Lambda(I^n,\vp^n)=\sum_{s\in\cA} f_s(I^n)e^{is\cdot \vp^n}\,.
  \end{equation}

 Let us denote 
\begin{equation*}  
\ZZ=\{s\in\Z^\infty_0: s_k\ge0\;\; \forall k\},
  \end{equation*}
and consider a series on some space $h^r, r\ge0$:
  \begin{equation}\label{xx}
  F(v)=\sum_{p,q,l\in\ZZ} \CC\,, 
  \end{equation}
  where $I=I(v)$,
  $C_{pql}=0$ if $\supp q\cap \supp l\ne\emptyset$
  and for $v\in h^r$, $q\in\ZZ$ we write $v^q=\prod v_j^{q_j}$. 
  We assume that the series  converges normally in $h^r$ in the sense that for each $R>0$ we have 
    \begin{equation}\label{xxx}
    \sum_{p,q,l\in\ZZ}  |C_{pql} | \sup_{|v|_{h^r}, |w|_{h^r}\le R} |v^pw^p v^qw^l|<\infty.
    \end{equation}
    Clearly $F(v)={\bold F}(v,\bar v)$, where ${\bold F}$ is a (complex) analytic function on $h^r\times h^r$.
    Abusing language and following a physical tradition we will say that 
    {\it $F$ is analytic in $v$ and $\bar v$}. In particular, $F$ is continuous, and the series \eqref{xx}
     converges absolutely.

    Consider a monomial $f=(2I)^pv^q\bar v^l$. This is a cylindric function and $f(v)=f(v^n)$ if 
     $n\ge \lc p\rc\vee \lc q\rc\vee  \lc l\rc$. By \eqref{yyy}  we have
     $$
     \lan (2I)^p v^q\bar v^l\ran_\Lambda= 
     (2I)^pv^q\bar v^l \delta_{0, (q-l)\cdot\Lambda}\quad \text{if}\quad  |q|+|l|\le m. 
     $$ 
     Assume that $F$ has degree $\le m$ in the sense that 
     $
     C_{pql}=0$ unless $ |q|+|l|\le m. 
     $
     If \eqref{xx}  is a finite series, then 
  $$
    \sum_{p,q,l\in\ZZ} \lan   \CC    
    \ran_\Lambda = \sum_{q-l\in\cA(\Lambda,m)} \CC=
    \sum_{(q-l) \cdot\Lambda=0} \CC
  $$
  If the series  \eqref{xx}   converges normally,  then the series in the r.h.s. above
  also does. It defines an analytic function, which we  take for a resonant average of \eqref{xx}. 
  More precisely,  denote by $\Pi^N$ the projection
  \begin{equation}\label{PiN}
  \Pi^N(v)=v^N=(v_1,\dots,v_N,0,\dots).
   \end {equation}
    Then the function 
 $F\circ\Pi^N$ is cylindric.

\noindent
{\bf Definition.}  If a function $F \in C(h^r)$ is given by a normally converging series 
\eqref{xx}, where $|q|+|l|\le m$,  then its 
     resonant average of order $m$ with respect to $\Lambda$ is the function
  \begin{equation}\label{La_aver}
  \lan F\ran_\Lambda(v)=
  \lim_{N\to\infty}\lan F\circ \Pi^N\ran_\Lambda(v)=
   \sum_{(q-l) \cdot\Lambda=0} \CC =   \sum_{q-l\in\cA(\Lambda,m)}  \CC. 
   \end {equation}
   \medskip
   
   Note that if $v=v^n$, then $\Pi^N v=v$ for $N\ge n$. So $\lan F\circ  \Pi^N\ran_\Lambda(v)$ is independent 
   from $N\ge n$, is given by \eqref{defmedia} and equals $\lan F\ran_\Lambda(v)$. 
   
\begin{lemma}\label{l.aaver}
Let the function $F$ in  \eqref{xx}  satisfies \eqref{xxx} and has degree $\le m$.
 Then  for each $v\in h^r$ we have 
\begin{equation}\label{aaver}
\lim_{T\to\infty}\frac1{T}\int_0^T F (\Psi_{t\Lambda} v)\,dt = \langle
F\rangle_\Lambda (v)\  .
\end {equation}
\end{lemma}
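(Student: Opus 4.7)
The plan is to reduce \eqref{aaver} to a termwise calculation, exploiting the fact that $\Psi_{t\Lambda}$ acts diagonally in the Fourier basis and leaves each action $I_k$ invariant. Since $(\Psi_{t\Lambda}v)_k=e^{it\lambda_k}v_k$, the operator is an isometry of each $h^r$, $I(\Psi_{t\Lambda}v)=I(v)$, and $(\Psi_{t\Lambda}v)^q(\overline{\Psi_{t\Lambda}v})^l=e^{it(q-l)\cdot\Lambda}v^q\bar v^l$ for any $q,l\in\ZZ$. Substituting into \eqref{xx} yields the monomial expansion
\begin{equation*}
F(\Psi_{t\Lambda}v)=\sum_{p,q,l\in\ZZ}C_{pql}(2I)^pv^q\bar v^l\,e^{it(q-l)\cdot\Lambda}.
\end{equation*}

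Next, I observe that the normal convergence hypothesis \eqref{xxx}, specialised to $w=\bar v$ (so $v^pw^p=(2I)^p$ and $|w|_{h^r}=|v|_{h^r}$), provides a summable majorant $\sum_{p,q,l}|C_{pql}|\,|(2I)^pv^q\bar v^l|$ for the above series, which is independent of $t$. Fubini therefore lets me swap the sum and the Cesaro integral, so each term contributes a scalar factor $\tfrac{1}{T}\int_0^T e^{it(q-l)\cdot\Lambda}dt$, which is bounded by $1$ in modulus for every $T>0$ and converges as $T\to\infty$ to $1$ or $0$ depending on whether $(q-l)\cdot\Lambda=0$ or not. Applying dominated convergence with the same summable weights $|C_{pql}|\,|(2I)^pv^q\bar v^l|$ as dominating function, I may pass to the limit termwise and obtain
\begin{equation*}
\lim_{T\to\infty}\frac{1}{T}\int_0^T F(\Psi_{t\Lambda}v)\,dt=\sum_{\substack{p,q,l\in\ZZ\\ (q-l)\cdot\Lambda=0}}C_{pql}(2I)^pv^q\bar v^l.
\end{equation*}

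To finish, I use that the degree hypothesis built into \eqref{xx} ensures $C_{pql}=0$ unless $|q|+|l|\le m$; combined with $(q-l)\cdot\Lambda=0$ and $|q-l|\le|q|+|l|\le m$, this says precisely that $q-l\in\cA(\Lambda,m)$, so the right-hand side agrees with $\lan F\ran_\Lambda(v)$ as given by \eqref{La_aver}. There is no genuine obstacle here: the diagonal action of $\Psi_{t\Lambda}$ separates the problem into countably many scalar Kronecker--Weyl averages, and the normal convergence \eqref{xxx} legitimises all interchanges. One could alternatively truncate and apply Lemma~\ref{l.aver} to the cylindric approximants $F\circ\Pi^N$ (whose $\vp^N$-degree is $\le m$) and then let $N\to\infty$ via \eqref{xxx}, but the direct dominated-convergence route above bypasses any appeal to the finite-dimensional lemma and seems the cleanest.
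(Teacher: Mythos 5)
Your proof is correct and is essentially the paper's argument: the paper's own proof is the one-line remark that the claim "follows from Lemma~\ref{l.aver} and the normal convergence of series \eqref{xx}", i.e.\ precisely the termwise Kronecker--Weyl averaging of each monomial (picking up the factor $\delta_{0,(q-l)\cdot\Lambda}$) legitimised by \eqref{xxx}. Your only deviation is cosmetic — you re-derive the single-exponential Cesàro limit inline via Fubini and dominated convergence instead of citing Lemma~\ref{l.aver} applied to the truncations $F\circ\Pi^N$ — and your observation that setting $w=\bar v$ in \eqref{xxx} yields the $t$-independent majorant is exactly the right way to make the interchange rigorous.
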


The assertion follows from Lemma \ref{l.aver} and the normal convergence of series \eqref{xx}.  

Note that in view of \eqref{La_aver}
\begin{equation}\label{AAver}
\lan F\ran_\Lambda \text{ is a function of $I_1,I_2\dots$ and the variables 
$\{s\cdot \vp, s\in \cA(\Lambda,m)\}$}.
\end {equation}

We will use a trivial generalisation of the construction above. Namely, let $O$ be a domain
in $h^r_I$, $g(I)$ be a bounded continuous function on $O$ and $F(v)$ be a function as above. 
Consider $F_g(v)=g(I(v))\, F(v)$.  We set $\lan F_g\ran_\Lambda = g\,\lan F\ran_\Lambda$. 
Then Lemma~\ref{l.aaver} remains true for $F_g$.

\section{Averaging theorems for equation \eqref{1.1}.}\label{s5}
Everywhere below $T$ is a  fixed positive number.

\subsection{Equation \eqref{1.1} in the $v$-variables.}\label{s5.1}

Let us pass in eq. \eqref{1.1} with $u \in\cH^r, \ r>d/2$,  to the $v$-variables,
$v=\cF(u)\in h^r$:
 \begin{equation}\label{5.1}
 \begin{split}
dv_k+i\nu^{-1}\lambda_kv_kd\tau=P_k(v)\,d\tau +\,
b_kd\bb^k(\tau),\quad k\ge1;\qquad v(0)=\cF(u_0)=:v_0. 
\end{split}
\end{equation}
Here
\begin{equation}\label{5.00}
P_k=P_k^1+P_k^0,
\end{equation}
where $P^1$ and $P^0$ are, correspondingly, the linear and nonlinear  hamiltonian 
parts of the perturbation.  So $P^1_k$ is the Fourier-image of $-f(- \Delta)$, 
$P^1_k=\,$diag$\,\{-\lla_k,k\ge1\}$, while the  operator $P^0$ is the vector-field
$u\mapsto -i\rho|u|^{2q_*}u$, written in the $v$-variables. I.e., 
 $$
 P^0(v)=-i\gi \cF (|u|^{2q_*}u)\,, \;\; u=\cF^{-1}(v).
$$
Every its component $P^0_k$ is a sum of monomials:
\begin{equation}\label{P^0}
P_k^0(v)=\sum _{ {p,q,l\in\Z_{+0}^\infty }}  C_k^{pq l}(2I)^p
v^q{\bar v}^l = \sum _{ {p,q,l\in\Z_{+0}^\infty }}  P_k^{0pql}(v), 
\qquad k\ge1, 
\end{equation}
where 
$$
\text{ $C_k^{pql}=0$  unless $2|p|+|q|+|l|=2q_*+1$ and $|q|=|l|+1$\ .
}
$$

In Section \ref{s2.5} we  check  that the series for the functions $P_k^0$
are  normally converging. 
It is straightforward that  this is a function of $\vp=(\vp_j, j\ge1)$, of order $2q_*+1$, and that 
 the mapping $P^0$ is analytic of polynomial growth:

\begin{lemma}\label{l.P^0}
The nonlinearity $P^0$ defines a real-analytic transformation of 
$ h^r$  if  $r>\tfrac{d}{2}$.
 The mapping
 $P^0(v)$ and its differential $dP^0(v)$ both have  polynomial growth in $|v|_{h^r}$.
\end{lemma}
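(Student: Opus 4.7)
The plan is to reduce everything to the Banach algebra property of $\cH^r = H^r(\T^d,\C)$ for $r > d/2$: by Sobolev embedding there is a constant $C_r$ with $\|uw\|_r \le C_r \|u\|_r\|w\|_r$. Since $\cF\colon \cH^r \to h^r$ is an isometric isomorphism and $P^0(v) = -i\rho\,\cF\bigl(|\cF^{-1}v|^{2q_*}\cF^{-1}v\bigr)$, it suffices to prove the statement for $G(u) := |u|^{2q_*}u$ viewed as a map $\cH^r \to \cH^r$, and then transport via $\cF$.

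First I would write $G(u) = u^{q_*+1}\bar u^{q_*}$, a monomial of total degree $2q_*+1$ in $u$ and $\bar u$. Iterating the algebra estimate $2q_*$ times gives
\begin{equation*}
\|G(u)\|_r \le C_r^{2q_*}\|u\|_r^{2q_*+1},
\end{equation*}
so $G$ maps $\cH^r$ into itself with polynomial growth. Regarded as a map on the real Hilbert space $\cH^r$, $G$ is a continuous $\R$-multilinear polynomial map of fixed finite degree in $(u,\bar u)$; any such polynomial is automatically real-analytic, since its own finite Taylor expansion at any point converges everywhere to the function. Transporting via $\cF$ yields that $P^0$ is real-analytic on $h^r$ with $|P^0(v)|_{h^r} \le \rho C_r^{2q_*}|v|_{h^r}^{2q_*+1}$.

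Next I would compute the differential directly from the polynomial form: for $h \in \cH^r$,
\begin{equation*}
dG(u)[h] = (q_*+1)|u|^{2q_*}h + q_*|u|^{2q_*-2}u^2\bar h
\end{equation*}
(the second term is absent when $q_*=0$). Applying the Banach algebra inequality one more time gives
\begin{equation*}
\|dG(u)[h]\|_r \le (2q_*+1)C_r^{2q_*}\|u\|_r^{2q_*}\|h\|_r,
\end{equation*}
hence $\|dP^0(v)\|_{h^r\to h^r} \le \rho(2q_*+1)C_r^{2q_*}|v|_{h^r}^{2q_*}$, which is the required polynomial growth of the differential.

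There is no serious obstacle here; the only point worth flagging is that $G$ is not holomorphic in $v$ because of the $\bar u$ factors, which is precisely why the statement asserts only \emph{real}-analyticity. This is consistent with the convention, fixed in Section~\ref{s1.1}, that $\cH^r$ (and hence $h^r$) is regarded as a real Hilbert space. The normal convergence of the series \eqref{P^0} promised at the end of Section~\ref{s5.1} is an immediate consequence of the above polynomial bound, since the formal expansion of $u^{q_*+1}\bar u^{q_*}$ in the basis $\{e^j\}$ produces exactly the monomials $C_k^{pql}(2I)^p v^q\bar v^l$ with the claimed support restrictions $2|p|+|q|+|l|=2q_*+1$ and $|q|=|l|+1$.
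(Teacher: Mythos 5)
Your proof is correct and rests on the same underlying fact the paper uses: the paper states the lemma without proof (calling it straightforward) and verifies the normal convergence of \eqref{P^0} in Section~\ref{s2.5} via the bound $\sum_\bk|v_\bk|\le C|v|_{h^p}$ for $2p>d$, which is precisely the Sobolev embedding behind your Banach-algebra inequality $\|uw\|_r\le C_r\|u\|_r\|w\|_r$. The only caveat is your closing remark: normal convergence of the monomial series is not an immediate consequence of the operator-norm bound $\|G(u)\|_r\le C\|u\|_r^{2q_*+1}$ (which controls the summed series, not the sum of absolute values of its terms), but rather of the $\ell^1$-summability of Fourier coefficients just cited — since that is a side remark and not part of the lemma, the proof of the statement itself stands.
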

\smallskip

 We will refer to equations \eqref{5.1} as to the {\it
  $v$-equations.}
We fix
 $$
 m=2q_*+2,
 $$
  and find the  set of  resonances or order $m$ 
$$
\cA=\cA(\Lambda,m)= (s^{(1)},s^{(2)},\ldots)$$
 (see \eqref{resset}). 
  For any $s^{(j)}\in \cA $ consider the corresponding resonant combination of phases $\vp(v)$:
   \begin{equation}\label{f-j}
\Phi_j: h^0\to S^1,\quad v\mapsto 
s^{(j)}\cdot \vp(v),
\end{equation}
and   introduce the Borel-measurable   mapping 
$$
\Pi_\Phi:h^r\ni v\mapsto \Phi=(\Phi_1,\Phi_2,\dots)\in  S^1\times
S^1 \times \cdots=:\cT^\infty\ .
$$
 We will also   write
$$
\Pi_\Phi(\cF(u))=\Phi(u),\qquad
(\Pi_I\times\Pi_\Phi)(\cF(u))=(I\times\Phi)(u). 
$$
Note that the system $\Phi$ of resonant combinations is highly over-determined:
there are many linear relations between its components $\Phi_j$. 

Let us pass in eq. \eqref{5.1} from the complex variables $v_k$ to
the action-angle  variables
$I_k\ge0,\  \vp_k\in S^1$:
 \begin{equation}\label{5.2}
dI_k(\tau)
=(v_k\cdot P_k)(v)\,d\tau + b_k^2\,d\tau  +b_k(v_k\cdot
d\bb^k)
\end{equation}
(here $\cdot$ indicates the real scalar product in  $\C\simeq\R^2$), 
 and 
 \begin{equation}\label{5.3}
 \begin{split}
d\vp_k(\tau)=\Big(-\nu^{-1}\lambda_k+
|v_k|^{-2}(iv_k\cdot P_k(v))
\Big)\,d\tau+|v_k|^{-2}b_k(iv_k\cdot d\bb^k)
\end{split}
\end{equation}
(to derive \eqref{5.3} it is better to use the complex Ito formula
 from Lemma~\ref{l.a1}, with $f(v)= \frac1{2i}\log\frac v{\bar v}=\vp(v)$). 
The equation for the actions are slow, while equations for the angles are fast since $d\vp_k \sim\nu^{-1}$. 
The function 
$|v_k|^{-2}(iv_k\cdot P_k)$
 is singular at $v_k=0$. By Lemma~\ref{l.P^0} it 
 meets the  estimate 
\begin{equation}\label{5.6}
\big||v_k|^{-2}(iv_k\cdot P_k(v))
 \chi_{\{|v_k|>\delta\}} \big|\le \delta^{-1}  Q_k(|v|_{h^r}),
\end{equation}
where $Q_k$  is a polynomial.  Since  for  any vector $v$ we have 
$
|v-v^M|_{h^{r-1/3}}\le CM^{-\frac1{3d}} |v|_{h^r} 
$
(because  $\lambda_k\sim |k|^{2/d}$), then
\begin{equation*}  
|P(v)-P(v^M)|_{h^{r-2-1/3}}\le M^{-\frac1{3d}}  Q(|v|_{h^r}),
\end{equation*}
where $Q$ is a polynomial.

As $\|u\|_r=|v|_{h^r}$, then relations \eqref{2.5}, \eqref{2.05} imply upper estimates for solutions $v(\tau)$
of \eqref{5.1}.  
Repeating for equations \eqref{5.1} and \eqref{5.2}
the argument from Section~7 in \cite{KP08} (also see Section~6.2 in \cite{K10} and Remark in 
Section~\ref{s.pr_couple} below), we get low bounds for the norms of  the components
$v_k(\tau)$ of $v(\tau)$:

\begin{lemma}\label{l5.1} Let $v^\nu(\tau)$ be a solution of \eqref{5.1} and $I^\nu(\tau)=I(v^\nu(\tau))$. Then for 
 any $k\ge1$ the following convergence holds uniformly in $\nu>0$:
\begin{equation}\label{5.8}
\int_0^T\IP\{I^\nu_k(\tau)\le\delta\}\,d\tau\to0\qquad \text{as}\;\; \delta\to0
\end{equation}
(the rate of the convergence  depends on $k$). 
\end{lemma}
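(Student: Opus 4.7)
The plan is to prove \eqref{5.8} by applying Ito's formula to the regularised logarithm $f_\delta(x)=\log(x+\delta)$ along $I_k^\nu(\tau)$, exploiting a cancellation between the noise-induced drift and the Ito correction. The structural observation enabling $\nu$-uniformity is that the fast rotation $-i\nu^{-1}\lambda_k v_k$ in \eqref{5.1} makes \emph{no} contribution to \eqref{5.2}, since $v_k\cdot(iv_k)=\Re(\bar v_k\cdot iv_k)=0$; hence every $\nu$-dependent quantity in the drift of $I_k^\nu$ enters only through $v^\nu(\tau)$ itself, whose moments are controlled $\nu$-uniformly by \eqref{1.3} and \eqref{2.5}.

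Applying Ito to $f_\delta(I_k^\nu)$ with $d\langle I_k^\nu\rangle=2b_k^2 I_k^\nu\,d\tau$ from \eqref{5.2}, and using the identity $b_k^2/(I+\delta)-b_k^2 I/(I+\delta)^2=b_k^2\delta/(I+\delta)^2$, one obtains
\[ \log(I_k^\nu(T)+\delta)-\log(I_k^\nu(0)+\delta) = \int_0^T\!\!\left[\frac{v_k^\nu\cdot P_k(v^\nu)}{I_k^\nu+\delta}+\frac{b_k^2\,\delta}{(I_k^\nu+\delta)^2}\right]\!d\tau + M_T, \]
where $M_T:=\int_0^T b_k(v_k^\nu\cdot d\bb^k)/(I_k^\nu+\delta)$ has integrand uniformly bounded by $b_k(2\delta)^{-1/2}$ and is therefore a zero-mean martingale. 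Since $\delta/(I+\delta)^2\ge 1/(4\delta)$ on $\{I\le\delta\}$, taking expectations yields the occupation-time bound
\[ \frac{b_k^2}{4\delta}\,\E\!\int_0^T\!\chi_{\{I_k^\nu\le\delta\}}\,d\tau \;\le\; \E\log(I_k^\nu(T)+\delta)-\log(I_k^\nu(0)+\delta)-\E\!\int_0^T\!\frac{v_k^\nu\cdot P_k(v^\nu)}{I_k^\nu+\delta}\,d\tau. \]

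I would then bound the right-hand side uniformly in $\nu$: Jensen's inequality with \eqref{1.3} gives $\E\log(I_k^\nu(T)+\delta)\le\log(C(1+\|u_0\|^2))$; the boundary term is at most $|\log\delta|$ since $I_k^\nu(0)\ge 0$; and for the drift integral I use the elementary pointwise inequality $|v_k|/(I_k+\delta)\le(2\delta)^{-1/2}$ (optimising in $|v_k|$) together with the polynomial growth of $P_k=-\gamma_k v_k+P_k^0$ from Lemma~\ref{l.P^0} and the moment bounds \eqref{2.5} to obtain $\E\int_0^T|v_k^\nu\cdot P_k(v^\nu)|/(I_k^\nu+\delta)\,d\tau\le C(k,T)\delta^{-1/2}$. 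Collecting, $\E\int_0^T\chi_{\{I_k^\nu\le\delta\}}\,d\tau = O(\delta^{1/2}+\delta|\log\delta|)\to 0$ as $\delta\to 0$, uniformly in $\nu$, which gives \eqref{5.8}. The main obstacle is the $\delta^{-1/2}$ singularity of the nonlinear-drift integral, which is absorbed precisely by the $\delta$ prefactor on the left and leaves the $\delta^{1/2}$ vanishing rate; $\nu$-uniformity then rests on the two facts already noted, namely $v_k\cdot(iv_k)=0$ and the $\nu$-uniform apriori estimates of Section~\ref{s1.1}.
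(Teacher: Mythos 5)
Your proof is correct, and it is worth noting that it is self-contained where the paper's is not: the paper merely refers to the argument of Section~7 of \cite{KP08} (and Section~6.2 of \cite{K10}), which — as the Remark closing Section~\ref{s.pr_couple} makes explicit — rests on Krylov's occupation-time estimate for the nondegenerate two-dimensional Ito process $v_k(\tau)$ (take $f=\chi_{\{|v_k|^2\le 2\delta\}}$, whose $L_2(\R^2)$-norm is $O(\sqrt\delta)$). Your route replaces Krylov's theorem by an elementary Ito computation for $\log(I_k^\nu+\delta)$, in which the nondegeneracy $b_k\ne 0$ enters through the exact identity $b_k^2/(I+\delta)-b_k^2I/(I+\delta)^2=b_k^2\delta/(I+\delta)^2$, and the two key uniformities — the absence of the fast rotation from $dI_k$ (since $v_k\cdot(iv_k)=0$) and the $\nu$-independent moment bounds \eqref{1.3}, \eqref{2.5} — are correctly identified; all the individual estimates (the $(2\delta)^{-1/2}$ bound on $|v_k|/(I_k+\delta)$, the boundedness of the martingale integrand, Jensen for $\E\log(I_k(T)+\delta)$, the polynomial-growth control of the drift integral via Lemma~\ref{l.P^0}) check out, and both approaches deliver the same rate $O(\sqrt\delta)$, with the $k$-dependence entering through $b_k$ exactly as the lemma's parenthetical anticipates. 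What each buys: Krylov's estimate is shorter once quoted and generalizes immediately to any uniformly elliptic Ito process in $\R^2$ regardless of the algebraic form of the drift of $I_k$, whereas your argument is elementary, keeps the whole proof inside the paper, and makes transparent exactly which structural features (diagonal noise, fast term orthogonal to $v_k$) are responsible for the $\nu$-uniformity.
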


We emphasise that this lemma does not imply 
that for a given $k$  we have 
$\PP \{I_k^\nu(\tau)=0$ for some $0\le\tau\le T\}=0$ (if it did, the proof in Section \ref{s3}
below would simplify significantly). 

\subsection{Resonant monomials and combinations of phases.
}  \label{s22}

Due to \eqref{5.3}
the resonant combinations  $\Phi_j$ of  angles (see \eqref{f-j}) 
satisfy slow equations: 
\begin{equation}\label{5.ris}
 \begin{split}
d\Phi_j(\tau)=\sum_{k\ge 1}s^{(j)}_k\Bigl(|v_k|^{-2}(iv_k\cdot
P_k)\,d\tau+ |v_k|^{-2}b_k(iv_k\cdot d\bb^k)\Bigr), \quad  j\ge1. 
\end{split}
\end{equation}
Now we define and study corresponding resonant monomials of $v$. 

For any $s\in\Z_0^\infty$, vectors $s^+, s^-\in\ZZ$  such that $s=s^+-s^-$  and 
$
\supp s=\supp s^+\cup \supp s^-$,  $\supp s^+\cap \supp s^-=\emptyset$
are uniquely 
defined.
Denote by $V^s$ the monomial 
\begin{equation}\label{hren}
V^s(v)= v^{s^+}\cdot \bar v^{s^-}=
\prod _j v_j^{s_j^+}  \cdot \prod_j \bar v_j^{s_j^-}.
\end{equation}
This is a real-analytic function on every space $h^l$, and $\vp \big(V^s(v)\big)=\sum_j s_j \cdot \vp(v_j)$. 
{\it Resonant monomials} are the functions 
$$
V_j(v)= V^{s^{(j)}}(v),  \qquad j=1,2,\dots. 
$$
Clearly they satisfy
\begin{equation}\label{ugly}
I(V_j(v))=(2I)^{\frac12 s^{(j) } }:=\prod_l (2I_l)^{ \frac12 s_l^{(j)}},\qquad 
\vp (V_j(v))=\Phi_j(v).
\end{equation}
(It may be better to call $V_j(v)$ a {\it minimal resonant monomial}  since for any $l\in\ZZ$  the monomial 
$I^lV_j(v)$ also is resonant and corresponds to the same resonance.)

Now  consider the mapping
$$
V: h^l\ni v\mapsto (V_1,V_2,\dots)\in \C^\infty,
$$
where $\C^\infty$ is given the Tikhonov topology. It is continuous for any $l$. For $N\ge1$ denote
$$
V^{(N)}(v)=\big(V_1,\dots,V_{J}(v)\big) \in \C^{J},
$$
where $J=J(N)$, see \eqref{NN}. 

For any $s\in\Z_0^\infty$, applying the complex Ito formula (see
Appendix~\ref{a1}) to the process
 $V^s(v(\tau))$,  we get that 
 \begin{equation} \label{7.1}
 \begin{split}
 d\,V^s= V^s\Big(-i\nu^{-1}(\Lambda\cdot s) d\tau+ &\sum_{j\in\supp s^+} s_j^+
             v_j^{-1}(P_j(v)\,d\tau+b_j\,d\bb_j)\\
             +&\sum_{j\in\supp s^-} s_j^-\,
             {\bar  v_j}^{-1}(\bar P_j(v)\,d\tau+b_j\,d\bar\bb_j)\Big).
 \end{split}
 \end{equation}
 If $\tilde s\in\Z_0^\infty$ is perpendicular to $\Lambda$, 
  then the first term in the r.h.s. vanishes. So  $V^{\tilde s}(\tau)$ is a slow process,
  $dV^{\tilde s}\sim1$. In particular, the processes 
 $dV_j, j\ge1$, are slow. 
 
 Estimates \eqref{2.5} and equation \eqref{7.1} readily imply 
 
 \begin{lemma}\label{l7.1}
 For any $j\ge1$ we have 
 $\ 
 \E\big|V_j(v(\cdot))\big|_{C^{1/3}[0,T]} \le C_j(T)<\infty,
 $ 
 uniformly in $0<\nu\le1$. 
 \end{lemma}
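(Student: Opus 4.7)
The plan is to apply the complex Ito formula to $V_j(v(\tau))=V^{s^{(j)}}(v(\tau))$, exploiting two facts: that $V^{s^{(j)}}$ is a polynomial in $(v,\bar v)$ of fixed degree $|s^{(j)}|$, so the singular-looking factors $v_k^{-1}$, $\bar v_k^{-1}$ appearing in \eqref{7.1} are only apparent (they are cancelled by the factor $V^s$ written in front), and that the resonance condition $s^{(j)}\in\cA$ eliminates the fast $\nu^{-1}$ term. What is left is a regular Ito SDE with drift and diffusion of polynomial growth in $|v|_{h^r}$ and no $\nu^{-1}$, to which the a-priori moment bounds \eqref{2.5}, the BDG inequality, and the Kolmogorov continuity theorem apply in a routine way.

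More precisely, I would first compute $dV_j$ directly from \eqref{5.1} by regarding $V_j$ as a polynomial function on $\C^\infty$. The fast contribution is
$-i\nu^{-1}\sum_k \lambda_k\bigl(v_k\partial_{v_k}V_j-\bar v_k\partial_{\bar v_k}V_j\bigr) = -i\nu^{-1}(\Lambda\cdot s^{(j)})V_j = 0$,
since $s^{(j)}\in\cA$. The surviving terms give
$dV_j = A_j(v)\,d\tau + \sum_{k\in\supp s^{(j)}} B_{j,k}(v)\,d\bb^k$,
where $A_j$ and $B_{j,k}$ are finite sums of products of partial derivatives of $V_j$ (polynomials in $(v,\bar v)$ of degree at most $|s^{(j)}|$) with the components $P_k$, $\bar P_k$ or with the constants $b_k$, $b_l^2$; in particular no $\nu^{-1}$ factor remains and only finitely many $B_{j,k}$ are nonzero (since $V_j$ is cylindric). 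By Lemma~\ref{l.P^0} together with the linearity of $P^1$, both $A_j$ and the $B_{j,k}$ have polynomial growth in $|v|_{h^r}$ for $r>d/2$. Combined with the $\nu$-uniform bounds \eqref{2.5}, this yields, for every integer $p\ge1$,
$\sup_{0\le\tau\le T}\E|A_j(v(\tau))|^{2p}\,+\,\sum_k \sup_{0\le\tau\le T}\E|B_{j,k}(v(\tau))|^{2p}\le C_p$,
with $C_p$ independent of $\nu\in(0,1]$.

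Finally, for $0\le\sigma\le\tau\le T$ the standard Hölder and Burkholder-Davis-Gundy estimates applied to $V_j(v(\tau))-V_j(v(\sigma)) = \int_\sigma^\tau A_j\,ds + \sum_k\int_\sigma^\tau B_{j,k}\,d\bb^k$ give
$\E|V_j(v(\tau))-V_j(v(\sigma))|^{2p}\le C_p|\tau-\sigma|^p$.
Taking $p$ sufficiently large (e.g.\ $p=4$, since $(p-1)/(2p)>1/3$), the strong form of the Kolmogorov continuity theorem produces a modification whose $C^{1/3}[0,T]$-seminorm is integrable, with bound uniform in $\nu$; together with $\E\sup_{[0,T]}|V_j(v(\tau))|\le \E\sup_{[0,T]}|v(\tau)|_{h^0}^{|s^{(j)}|}<\infty$, also a consequence of \eqref{2.5}, this yields the lemma.

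The only real obstacle here is conceptual rather than computational: one has to read past the singular factors $v_k^{-1}$, $\bar v_k^{-1}$ in \eqref{7.1} and recognise that they are not actual singularities but are cancelled by the polynomial $V^s$ in front of each bracket; once this is seen, the argument becomes a routine moment estimate and the cancellation of the $\nu^{-1}$ fast term via $s^{(j)}\in\cA$ is the only substantive algebraic input.
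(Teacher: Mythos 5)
Your proof is correct and follows exactly the route the paper intends: the paper simply asserts that estimates \eqref{2.5} and equation \eqref{7.1} "readily imply" the lemma, and your argument — cancellation of the $\nu^{-1}(\Lambda\cdot s^{(j)})$ term by the resonance condition, the observation that $V^s v_k^{-1}$ is polynomial so \eqref{7.1} has no real singularity, uniform moment bounds for drift and diffusion via \eqref{2.5} and Lemma~\ref{l.P^0}, then BDG and the quantitative Kolmogorov continuity theorem with $p=4$ — is precisely the standard filling-in of that assertion.
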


Let us provide the space $C([0,T];\C^\infty)$ with the Tikhonov topology, identifying it with
the space $C([0,T];\C)^\infty$. This topology is metrisable by the Tikhonov distance.  Fix any even integer $r$, 
\begin{equation} \label{rr}
r\ge \frac d2+1\,,
\end{equation} 
 and
abbreviate 
$$
h^r=h,\quad h^r_I=h_I,\quad C([0,T], h_{I+})\times 
C([0,T],\C^\infty)=:  \cH_{I,V}.
$$
We provide $\cH_{I,V}$ with Tikhonov's distance, the corresponding Borel $\sigma$-algebra and the natural
 filtration of the sigma-algebras
$\{\cF_t, 0\le t\le T\}$. 

Let us consider a solution $u^\nu(\tau)$ of eq. \eqref{1.1}, satisfying
$
u(0)=u_0,
$
denote $v^\nu(\tau)=\cF(u^\nu(\tau))$ and abbreviate
$$
I(v^\nu(\tau))=I^\nu(\tau),\quad V(v^\nu(\tau))= V^\nu(\tau)\in\C^\infty. 
$$

 \begin{lemma}\label{l7.2} 1) 
 Assume that $u_0\in\cH^r$. Then the 
  set of laws $\cD(I^\nu(\cdot), V^\nu(\cdot)), \  0<\nu\le1$, is tight in $\cH_{I,V}$. 
  
  2) Any limiting measure $\cQ$ for the set of laws in 1) satisfies
   \begin{equation} \label{apriori}
 \begin{split}
 \E^\cQ |I |^n_{C([0,T], h^r_I)} \le C_n &\quad \forall\, n\in\N,\qquad
 \E^\cQ \int_0^T|I(\tau)|_{h_I^{r+1}}d\tau \le C',\\
 & \E^\cQ e^{\e_0 |I(\tau)|_{h^0_I}} \le C^{''} \quad \forall\,\tau\in[0,T]. 
 \end{split}
 \end{equation}

  \end{lemma}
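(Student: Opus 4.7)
Since $\cH_{I,V}$ carries a product-type metric, part~1) reduces to proving tightness of $\cD(I^\nu(\cdot))$ in $C([0,T], h_{I+})$ and of $\cD(V^\nu(\cdot))$ in $C([0,T], \C^\infty)$ separately; part~2) then follows from weak lower semicontinuity applied to the uniform a~priori bounds \eqref{1.3}, \eqref{2.5}, \eqref{2.05} for $u^\nu$.

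The $V$-component is direct from Lemma~\ref{l7.1}: the bound $\E|V_j^\nu|_{C^{1/3}[0,T]}\le C_j$ uniform in $\nu\in(0,1]$ gives, via Markov and Arzel\`a--Ascoli, tightness of each marginal $\cD(V_j^\nu(\cdot))$ in $C([0,T], \C)$. Joint tightness in the Tikhonov product $C([0,T], \C^\infty)$ follows from Tychonoff: given $\eta>0$, pick compacts $K_j\subset C([0,T], \C)$ with $\PP(V_j^\nu\notin K_j)\le \eta\,2^{-j}$ and take $\prod_j K_j$.

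The $I$-component is the main task. The crucial structural point is that \eqref{5.2} for $I_k^\nu$ is \emph{slow}: the fast rotation $e^{-i\nu^{-1}\lambda_k\tau}$ preserves $I_k=\tfrac12|v_k|^2$, so the $\nu^{-1}$ singular term is absent from the $I$-dynamics. Applying Ito to \eqref{5.2}, the Burkholder--Davis--Gundy inequality to the stochastic integral, and Lemma~\ref{l.P^0} together with \eqref{2.5}--\eqref{2.05} to the drift, I would derive a Kolmogorov-type increment bound
\[
\E\,|I^\nu(\tau_1)-I^\nu(\tau_2)|_{h^r_I}^p \le C|\tau_1-\tau_2|^{1+\beta},\qquad p>1,\ \beta>0,
\]
uniform in $\nu$, using the decay of $b_k$ (finiteness of $B_r$) to control the summation in $k$ against the weights $(\lambda_k\vee1)^r$. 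Combined with the fact that $I^\nu(0)=I(v_0)$ is a single fixed point in $h^r_I$ and with the Sobolev bound $\E\sup_\tau|I^\nu(\tau)|_{h^r_I}^n\le C_n$ from \eqref{2.5} (with $2m=r$), Kolmogorov's continuity criterion yields a uniform H\"older modulus with high probability, and hence the required tightness in $C([0,T], h_{I+})$. Joint tightness in $\cH_{I,V}$ follows as above. The principal technical obstacle is precisely the uniform increment estimate in the $\ell^1$-type $h^r_I$-norm: it requires summing componentwise bounds with the weights $(\lambda_k\vee1)^r$, splitting the drift into the linear damping part ($v_k\cdot P_k^1=-\gamma_k|v_k|^2$, dominated by the Sobolev integrals of \eqref{2.5}) and the nonlinear contribution (bounded via Lemma~\ref{l.P^0} and \eqref{2.05}).

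For part~2), by the Skorokhod representation theorem I may assume a.s.\ convergence $(I^{\nu_j}, V^{\nu_j})\to(I,V)$ in $\cH_{I,V}$ on a common probability space, with $\cD(I,V)=\cQ$. Each estimate of \eqref{apriori} then follows by Fatou: the functional $I\mapsto|I|^n_{C([0,T], h^r_I)}$ is continuous on $C([0,T], h^r_I)$, so \eqref{2.5} with $2m=r$ and truncation give the first bound; $\int_0^T|I(\tau)|_{h^{r+1}_I}\,d\tau$ is lower semicontinuous on $C([0,T], h^r_I)$ (being the supremum over $N$ of the continuous functionals $\int_0^T\sum_{k\le N}(\lambda_k\vee1)^{r+1}|I_k(\tau)|\,d\tau$), with the uniform $L^1$-bound furnished by \eqref{2.5}; and $|I(\tau)|_{h^0_I}=\|u(\tau)\|^2$ is continuous in $I(\tau)\in h^r_I$, so the exponential bound transfers from \eqref{1.3}.
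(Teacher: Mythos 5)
Your overall architecture matches the paper's: the $V$-component via Lemma~\ref{l7.1} and Arzel\`a--Ascoli, the $I$-component via the slowness of equations \eqref{5.2} together with the a priori estimates, and part~2) via Fatou's lemma after passing to a.s.\ convergent representatives. The $V$-part and part~2) are fine (the paper assembles the joint limit by a diagonal subsequence and Kolmogorov's extension theorem rather than by product compacts, but both are standard).

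The gap is in the $I$-component. First, even granting your increment bound, Kolmogorov's criterion plus a fixed initial point plus $\E\sup_\tau|I^\nu(\tau)|_{h^r_I}^n\le C_n$ does \emph{not} yield tightness in $C([0,T],h^r_{I+})$: a uniform H\"older modulus with a fixed starting point confines the paths to a set that is equicontinuous and pointwise \emph{bounded} in $h^r_I$, but closed balls of the infinite-dimensional space $h^r_I$ are not compact, so Arzel\`a--Ascoli for $E$-valued paths does not apply. The missing ingredient is relative compactness of the time-marginals, and it is exactly what the paper's citation of \eqref{2.5} with $n=1$ supplies: the bound $\E\int_0^T\|u^\nu(s)\|^2_{r+1}\,ds\le C$, i.e.\ $\E\int_0^T|I^\nu(\tau)|_{h^{r+1}_I}\,d\tau\le C$, combined with the compactness of the embedding $h^{r+1}_I\hookrightarrow h^r_I$, following the scheme of Vishik--Fursikov. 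Your proposal never uses the $\|u\|_{r+1}$-term of \eqref{2.5} for this purpose. Second, the increment estimate $\E|I^\nu(\tau_1)-I^\nu(\tau_2)|^p_{h^r_I}\le C|\tau_1-\tau_2|^{1+\beta}$ is itself out of reach of the stated estimates: the damping part of the drift contributes $\sum_k(\lambda_k\vee1)^r\gamma_kI_k\sim\|u\|^2_{r+1}$ to $\tfrac{d}{d\tau}|I|_{h^r_I}$, and controlling its $p$-th moment over a short time interval would require $\E\int\|u\|^{2p}_{r+1}\,ds$ or $\E\sup_\tau\|u\|^{2}_{r+1}$, neither of which follows from \eqref{2.5}--\eqref{2.05} (which stop at $2m\le r$ for sup-in-time bounds). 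The workable route is the paper's: establish equicontinuity of $I^\nu$ in a \emph{weaker} norm (where the damping drift is dominated by $\|u\|^2_r$, whose moments \eqref{2.5} does control) and obtain compactness from the integrated $h^{r+1}_I$ bound.
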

  \begin{proof} 1) 
  Due to Lemma \ref{l7.1} and the Arzel\`a Theorem, the laws of processes $V_j(v^\nu(\cdot))$,
  $0<\nu\le1$, are tight in $C([0,T],\C)$, for any $j$. Due to estimates \eqref{2.5} with $n=1$
   and since the 
  actions $I_k^\nu$ satisfy slow equations \eqref{5.2}, the laws of processes $I^\nu(\tau)$ are 
  tight in $C([0,T], h_{I+})$ (e.g. see in \cite{VF}).
  Therefore, for every $N$, 
   any sequence $\nu_\ell\to0$ contains a subsequence 
  such that the laws $\cD\big(I^\nu(\cdot), V^{(N)}(v^\nu(\cdot))\big)$ converges along it to
  a limit.  Applying the diagonal process we get another subsequence $\nu'_\ell$ such 
  that the convergence holds for each $N$. The corresponding limit is a measure $m^N$ on the 
  space 
  $
  C([0,T], h_{I+})\times C([0,T],\C)^{J(N)}.
  $
  Different measures $m^N$ agree, so by Kolmogorov's theorem they correspond to some measure $m$ on the 
  sigma-algebra, generated by cylindric subsets of the space
  $
  C([0,T], h_{I+})\times C([0,T],\C)^\infty,
  $
  which coincides with the Borel sigma-algebra for that space. 
  It is not hard to check that 
  $
  \cD(I^\nu(\cdot), V^\nu(\cdot))
  \strela m
  $
  as $\nu=\nu'_\ell\to0$. 
  This proves the first  assertion. 
  
  2) Estimates \eqref{apriori} follow from \eqref{1.3}, \eqref{2.5}, the weak convergence to $\cQ$
  and the Fatou lemma; cf. Lemma~1.2.17 in \cite{KS}. 
  \end{proof}

The variables $(I^\nu,V^\nu)(\tau)$ are slow components of the process  $v^\nu(\tau)$. 
Due to \eqref{ugly}
 the variables  $(I^\nu,\Phi^\nu)(\tau)$ make another set of slow components. Their disadvantage is 
that 
$\Phi_j^\nu$'s have singularities at $\Game$ (see \eqref{game}). 
   Still, below we use the resonant 
combinations of angles $\Phi_j$ as well as resonant monomials $V_j$ since outside $\Game$ equations for the former are better   then equations for the latter.

\subsection{Averaged equations and effective equation. }
\label{s2.3}

  Fix  $u_0\in \cH^r$ and consider  any limiting measure $\cQ^0$ 
  for the laws 
\begin{equation} \label{5.88}
\cD(I^{\nu_\ell}(\cdot), V^{\nu_\ell}(\cdot))\strela \cQ^0
\as \nu_\ell\to0,
\end{equation}
existing by Lemma \ref{l7.2}. 
Our goal is to show that the limit $\cQ^0$ does not depend on the
sequence $\nu_\ell \to0$
and develop tools for its study. We begin with writing down averaged
equations for the slow components $I$ and $\Phi$ of the process $v(\tau)$,
using the  rules of the stochastic calculus (see \cite{Khas68, FW03}), and 
formally replacing there the  usual averaging in $\vp$ by the resonant 
averaging $\lan\cdot\ran_\Lambda$.
Let us first consider the $I$-equations \eqref{5.2}. The drift in the $k$-th equation is
$$
b_k^2+v_k\cdot P_k=b_k^2+v_k\cdot P_k^1+ v_k\cdot P_k^0,
$$
where $v_k\cdot P_k^1=- 2\lla_k I_k$ and 
$
v_k\cdot P_k^0(v) =  \sum _{ {p,q,l\in\Z_{+0}^\infty }} v_k\cdot  P_k^{0pql}(v),  
$
see \eqref{P^0}. By Section \ref{s2.5} the sum converges normally, so the resonant averaging of the drift
is well defined. The dispersion matrix for eq.~\eqref{5.2} with respect to the real Wiener processes 
$(\beta^1,\beta^{-1}, \beta^2,\dots)$ is diag$\,\{b_k(\Re v_k \,\Im v_k), k\ge1\}$ (it is formed by $1\times2$-blocks). 
The diffusion matrix equals the 
dispersion matrix times its conjugated  and equals diag$\,\{b_k^2|v_k|^2, k\ge1\}$. It is independent 
from the angles, so the averaging does not change it. 
 For its symmetric square-root we take diag$\,\{b_k \sqrt{2I_k} \}$, and accordingly 
write the $\Lambda$-averaged $I$-equations as 
\begin{equation}\label{5.21}
dI_k(\tau) =
\langle v_k\cdot
P_k\rangle_{\Lambda}(I,V)\,d\tau + 
b_k^2\,d\tau  +{b_k}\sqrt{2I_k}\, d\beta^k(\tau), \quad  k\ge 1. 
\end{equation}

Now consider equations \eqref{5.ris} for  resonant combinations $\Phi_j$ of the angles. The corresponding 
dispersion  matrix  $D=(D_{jk})$ is formed by $1\times2$-blocks 
$$
D_{jk} = -s_k^{(j)} b_k(2I_k)^{-1} (\Im v_k  \, -\Re v_k).
$$
Again the diffusion matrix does not depend on the angles and equals $M=(M_{j_1 j_2})$,
$M_{j_1 j_2} =\sum_k s_k^{(j_1)}s_k^{(j_2)} b_k^2 (2I_k)^{-1}$. The matrix $D^{new}$ with the entries 
$D^{new}_{jk}=s_k^{(j)}b_k(2I_k)^{-1/2}$ satisfies $|D^{new}|^2=M$, and we write the averaged equations for 
$\Phi_j$'s as 
\begin{equation}\label{5.22}
d{\Phi_j}(\tau)= 
 \sum_{k \ge
  1}s^{(j)}_k\Bigl(\frac{\langle i v_k\cdot 
  P_k\rangle_{\Lambda}(I,V)}{ 2I_k}\,d\tau +\frac{{b_k}}{\sqrt{2I_k}}\,
d\beta^{-k}(\tau) \Bigr)\ , \quad  j\ge 1
\end{equation}
(we have to use here Wiener processes, independent from those in eq.~\eqref{5.21}
since the differentials  $v_k\cdot d\bb^k$ and $iv_k\cdot d\bb^k$, corresponding to the noises in equations 
\eqref{5.2} and \eqref{5.3}, are independent). 

Note that eq. \eqref{5.21} has a weak singularity at the set 
$$
\Game (h)= \cup_k \{v\in h:  v_k=0\},
$$
while  eq.~\eqref{5.22} has there a strong singularity. This feature of the averaged equations 
makes it difficult to use them as a tool to study the original equation \eqref{1.1}=\eqref{5.1}.

\medskip
Let us return to the averaged $I$-equations. Consider the component $\lan v_k\cdot P_k^0\ran_\Lambda$
 of the averaged drift:
\begin{equation*}
\begin{split}
\lan v_k\cdot P_k^0\ran_\Lambda=
&\sum_{p,q,l} \int_{\T^{n-r_n}} (v'_k\cdot P_k^{0pql}(v'))\mid_{v'= \Psi_{{\sum \theta_j  \eta^j }}  (v) } 
 \,\dbar\theta_{r_n+1}\dots \dbar\theta_n\\
& =\sum_{p,q,l}
  \int_{\T^{n-r_n}} v_k \cdot \Big ( e^{-i ({\sum   \theta_j  \eta^j_k })}
  P_k^{0pql}(\Psi_{{\sum \theta_j  \eta^j }}  (v) )
  \Big)\,\dbar\theta_{r_n+1}\dots \dbar \theta_n\\
 & =:  v_k\cdot \Big( \sum_{p,q,l}
 R^{0\, pql}_k(v)\Big),
  \end{split}
\end{equation*}
where $n=n(k,p,q, l)$, \ 
 ${\sum \theta_j  \eta^j }={\sum_{j=r_n+1}^n  \theta_j  \eta^j}$
and 
$$
 R^{0\, pql}_k(v)=  \int_{\T^{n-r_n}} e^{-i ({\sum   \theta_j  \eta^j_k })}
  P_k^{0pql}(\Psi_{ {\sum \theta_j  \eta^j}}  (v) )\,\dbar\theta.
$$

Denote
$
R^0_k(v)= \sum_{p,q,l} R_k^{0pql}(v).
$
Then 
\begin{equation}\label{xa}
\lan v_k\cdot P_k^0\ran_\Lambda= v_k\cdot R_k^0.
\end{equation}
Repeating the derivation of \eqref{La_aver} and using that $|q|+|l|\le
m-1$, 
 we see that 
\begin{equation}\label{ura}
R^0_k(v) = \sum_{\substack{p,q,l\in\ZZ \\ q-l\in\cA(\Lambda,m)+e_k\\ |q|+|l|+1\le m}} 
C_k^{pql}  (2I)^pv^q\bar v^l. 
\end{equation}
Clearly
\begin{equation}\label{ura2}
R^0_k(v)=\frac{v_k}{2I_k}
\sum_{ q-l-e_k\in\cA} 
C_k^{pql}  (2I)^pv^q\bar v^l\bar v_k=
\frac{v_k}{2I_k} R^-_k(v),
\end{equation}
where $R^-_k$ may be written as a function of $(I,V)$, and the series for $R^-_k(v)$ converges 
absolutely in bounded sets of the space $h$.

If $v=\Pi^N(v)=v^N$ and $k\le N$, 
then the terms $P_k^{0pql}(v^N)$ with $\lc p\rc\vee\lc q\rc\vee\lc l\rc>N$
vanish, and in the remaining finitely many terms we can replace $n$ by $N$ (see \eqref{invarr}). 
Therefore
\begin{equation}\label{xi}
\Pi^N
 R^{0}(v^N)=  
 \int_{\T^{N-r_N}} \Pi^N \Psi_{ - {\sum   \theta_j  \eta^j }} 
  P^{0}(\Psi_{ {\sum  \theta_j  \eta^j}}  v^N )\,\dbar\theta,\quad \forall\, N. 
\end{equation}

Now we set 
$$
R=R^0+R^1,
$$
where 
\begin{equation}\label{R1}
R^1=P^1=\text{diag}(-\lla_k, k\ge1)
\end{equation}
(i.e., $R^1_k(v)=-\lla_kv_k$). 
Since 
$\ 
\lan v_k\cdot P^1_k\ran_\Lambda=\lan-\sum 2\lla_kI_k\ran_\Lambda=v_k\cdot P^1_k
=v_k\cdot R^1_k,
$
then in view of \eqref{xa} we have 
\begin{equation}\label{R_k}
\lan v_k\cdot P_k\ran_\Lambda(v)=v_k\cdot R_k(v). 
\end{equation}
For further usage we note that by the same argument we have 
$\lan iv_k\cdot P_k^0\ran_\Lambda= iv_k\cdot R_k^0$ and 
$\lan iv_k\cdot P_k^1\ran_\Lambda=0= iv_k\cdot R_k^1$. So also 
\begin{equation}\label{R_kk}
\lan i v_k\cdot P_k\ran_\Lambda(v)=iv_k\cdot R_k(v). 
\end{equation}

The vector field $R^0$ defines locally-Lipschitz operators in the spaces $h^p$, $p>d/2$:
 \begin{equation}\label{p4}
|R^0(v)-R^0(w)|_{h^p}\le C_p\big( |v|_{h^p}\vee |w|_{h^p}\big)^{2q_*}|v-w|_{h^p}.
  \end{equation}
  Indeed, in view of \eqref{xi}, for any $v,w$ such that $|v|_{h^p}, |w|_{h^p}\le R$ we have
    \begin{equation}\label{p44}
    \begin{split}
|  \Pi^N(
 R^{0}(v^N) &- R^0(w^N))|_{h^p} \\
& \le 
 \int_{\T^{N-r_N}} \Big
 | \Psi_{ - {\sum   \theta_j  \eta^j }} \big(
  P^{0}(\Psi_{ {\sum  \theta_j  \eta^j}}  v^N ) - P^0(\Psi_{ {\sum  \theta_j  \eta^j}}  w^N )\big) \Big|_{h^p}
  \,\dbar\theta.
  \end{split}
   \end{equation}
   Since $P^0(v)=-i\rho\cF(|\hat v|^{2q_*}\hat v)$, where $\hat v=\cF^{-1}v$, then 
  denoting 
  $  \Psi_{ {\sum  \theta_j  \eta^j}}  v^N =v^N_\theta\,$
  etc, and using that the operators $\Psi_\theta$ define isometries of $h^p$, we bound the r.h.s. 
  of \eqref{p44}   by
  
   \begin{equation*}
   \begin{split}
    \int_{\T^{N-r_N}} \big| P^0(v^N_\theta) - P^0(w^N_\theta)\big|_{h^p}d\theta =
    \rho  \int_{\T^{N-r_N}}  \big\||  \widehat {v^N_\theta}|^{2q_*} \widehat{ v^N_\theta} - 
    | \widehat{ w^N_\theta}|^{2q_*} \widehat{ w^N_\theta}   \big\|_p \,\dbar \theta
   \\
  \le \rho C_p R^{2q_*} \|\widehat{v^N_\theta} - \widehat{w^N_\theta}\|_p 
   \le \rho C_p R^{2q_*}|v-w|_{h^p}.
  \end{split}
  \end{equation*}
   Passing to the limit as $N\to\infty$ 
  using the Fatou lemma   we get  \eqref{p4}.

The vector field
$R^0$  is hamiltonian:
\begin{equation}\label{p2}
R^0=i \rho\nabla \cH^{\text{res}}(v),
\end{equation}
where $\cH^{\text{res}}(v) = \lan \cH\ran_\Lambda(v)$  and $\cH$ is the Hamiltonian  \eqref{*ham}. Indeed, 
since $P^0(v)=i\rho\nabla \cH(v)$, then
  denoting, as usual, 
   $R^{0N}=(R^0_1,\dots, R^0_N)$  we have
  \begin{equation*}
\begin{split}
 R^{0N}(v^N)&=  \int_{\T^{N-r_N}}  \Psi_{- {\sum  \theta_j  \eta^j}}   \Big(
  i\rho \nabla
  \cH
  (\Psi_{ {\sum  \theta_j  \eta^j}}  (v^N) )\Big)\,\dbar\theta\\
  &= i\rho \nabla_{v^N}  \int_{\T^{N-r_N}}  \cH
  ( \Psi_{ {\sum  \theta_j  \eta^j}}  (v^N) )\,\dbar\theta
  =i\rho \nabla_{v^N}  \cH^{\text{res}}(v^N),
  \end{split}
\end{equation*}
as $\Psi_\theta^*\equiv\Psi_{-\theta}$. 
So \eqref{p2} is valid for $v$ replaced by $v^N$, for any $N$. By continuity it also holds for $v$.

Since the Hamiltonian $\cH^{\text{res}}$ is obtained by averaging, then $\cHR$ and the 
corresponding
vector field $R^0$ have many symmetries, given by various rotations $\Psi_m, m\in\R^\infty$:

  \begin{lemma}\label{l.symm}
  i) Let  ${\mathbf 1}=(1,1,\dots)$.  Then $\cHR(\Psi_{t{\mathbf 1}}v)=\,$const  (i.e., it does
  not depend on $t$); 
  
  ii) $\cHR(\Psi_{t\Lambda}v)=\,$const;
  
  iii) if $s^\perp = s^{\perp N}\in\R^\infty$ is a vector, orthogonal to the 
   set ${}_M\cA(\Lambda,m)$ (see \eqref{*An}),  the $\cHR(\Psi_{ts^\perp}v)=\,$const. 
   \end{lemma}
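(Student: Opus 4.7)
The plan is to reduce all three symmetries to a single calculation on monomials, using the explicit form of $\cHR$ given by \eqref{La_aver} applied to $F=\cH$. Since $\cH$ is a finite-degree polynomial of the form \eqref{Hv} with $|q|=|l|=q_*+1$ in each term, and the series converges normally on $h^r$ (cf.\ Section~\ref{s2.5}), formula \eqref{La_aver} yields
\begin{equation*}
\cHR(v)=\sum_{(q,l):\,(q-l)\cdot\Lambda=0} C_{ql}\,v^{q}\bar v^{l},
\end{equation*}
where each $(q,l)$ in the sum satisfies $|q|=|l|=q_*+1$ (inherited from $\cH$) and $|q-l|\le 2q_*+2=m$. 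For any $m\in\R^\infty$ the operator $\Psi_{tm}$ acts diagonally on a monomial by
\begin{equation*}
v^{q}\bar v^{l}\;\longmapsto\;e^{it\,m\cdot(q-l)}\,v^{q}\bar v^{l},
\end{equation*}
so $\cHR(\Psi_{tm}v)=\cHR(v)$ for all $t$ and $v$ as soon as $m\cdot(q-l)=0$ for every $(q,l)$ that contributes to the sum above. This is the single identity to check in each of the three cases.

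For (i), taking $m=\mathbf{1}$ gives $m\cdot(q-l)=|q|-|l|=0$, because every monomial of $\cH$ has equally many $v$'s and $\bar v$'s; this is just the gauge invariance $\cH(e^{it}u)=\cH(u)$ of the original Hamiltonian. For (ii), taking $m=\Lambda$ gives $m\cdot(q-l)=\Lambda\cdot(q-l)=0$ by the very definition of the resonant average: only pairs $(q,l)$ with $(q-l)\cdot\Lambda=0$ appear in $\cHR$. For (iii), the vector $s=q-l$ lies in $\Z^\infty_0$, has $|s|\le m$ and $\Lambda\cdot s=0$, hence $s\in\cA(\Lambda,m)$; when $v=v^M$ (and one interprets $s^{\perp N}$ accordingly, so that only monomials supported in the first $M$ indices contribute) we in fact have $s\in{}_M\cA(\Lambda,m)$, and orthogonality of $s^\perp$ to this set yields $s^\perp\cdot(q-l)=0$.

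As a cleaner alternative, which also clarifies (iii), I would use the integral representation
\begin{equation*}
\cHR(v^N)=\int_{\T^{N-r_N}}\cH\!\left(\Psi_{\sum_{j=r_N+1}^{N}\theta_j\eta^j}\,v^N\right)\dbar\theta
\end{equation*}
(analogous to \eqref{xi} for $R^0$), noting that the $\eta^j$'s span $(\cA^\R)^\perp$ inside $\R^N$. Since $\Psi_{tm}$ commutes with every $\Psi_{\theta\cdot\eta^j}$ (diagonal unitaries), invariance of $\cHR$ under $\Psi_{tm}$ follows from invariance of $\cH$ under $\Psi_{tm}$ in cases (i)–(ii), and from absorbing $tm$ into the integration variable $\theta$ in case (iii), where $s^\perp$ lies in the span of the $\eta^j$'s (modulo $\cA^\R$). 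Passing to the limit $N\to\infty$ using continuity of $\cHR$ on $h^r$ then handles general $v$.

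The only real obstacle is the bookkeeping in (iii): one must check that the vector $s^\perp$, restricted to the first $N$ coordinates relevant to a given monomial, actually lies in the span of $\{\eta^j\}_{j>r_N}$ (or, equivalently, is orthogonal to $\cA^\R\cap\Z^N$), which requires matching $N$ and $M$ and verifying that enlarging $M$ does not introduce new $(q,l)$ with $s^\perp\cdot(q-l)\ne 0$. Once the truncation issue is settled, everything else is an immediate consequence of the monomial identity $\Psi_{tm}v^q\bar v^l=e^{itm\cdot(q-l)}v^q\bar v^l$ and the normal convergence of the series for $\cHR$.
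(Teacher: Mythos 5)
Your primary (monomial) argument takes a genuinely different route from the paper's. The paper works throughout with the integral representation $\cHR(v^N)=\int_{\T^{N-r_N}}\cH\big(\Psi_{\sum\theta_j\eta^j}v^N\big)\,\dbar\theta$: for (i) it invokes gauge invariance of $\cH$ (the flow of $\dot v=iv$ commutes with the $\cH$-flow), and for (ii) and (iii) it observes that $\Lambda^N$, resp.\ $s^\perp$, lies in $({}_N\cA)^\perp=\linspan\{\eta^j,\ j>r_N\}$ by \eqref{baasis} and absorbs the shift into the integration variable $\theta$. Your identity $\Psi_{tm}v^q\bar v^l=e^{itm\cdot(q-l)}v^q\bar v^l$, combined with $\mathbf 1\cdot(q-l)=|q|-|l|=0$ and $\Lambda\cdot(q-l)=0$, settles (i) and (ii) at least as cleanly, and normal convergence justifies the termwise manipulation. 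One correction to your ``cleaner alternative'': $\cH$ is \emph{not} invariant under $\Psi_{t\Lambda}$ --- this is the free Schr\"odinger flow, and $\int|u|^{2q_*+2}dx$ is not one of its conserved quantities (if it were, no resonant averaging would be needed). Case (ii) must be handled either by your monomial computation or, as in the paper, by absorbing $t\Lambda^N$ into $\theta$; it cannot be deduced from invariance of $\cH$ itself.

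For (iii) you have correctly isolated the genuine difficulty, but you have not resolved it. A monomial of $\cHR$ contributing at level $v=v^N$ with $N>M$ has $q-l\in{}_N\cA$ but possibly $\lc q-l\rc>M$, so $s^{\perp}\cdot(q-l)=s^\perp\cdot\Pi^M(q-l)$, and the truncation $\Pi^M(q-l)$ need not belong to ${}_M\cA$; orthogonality of $s^\perp$ to ${}_M\cA$ therefore does not kill the phase. Note that the paper's own one-line deduction for (iii) (``${}_M\cA\subset{}_N\cA$, hence $s^\perp=\sum\theta'_j\eta^j$'') has exactly the same lacuna, since $({}_M\cA)^\perp\not\subset({}_N\cA)^\perp$ when ${}_M\cA$ is the smaller set. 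So your write-up is no weaker than the paper here, but you should not present (iii) as settled by continuity from $v=v^M$: the set $\{v=v^M\}$ for fixed $M$ is not dense, and extending to $v=v^N$, $N>M$, is precisely where the obstruction sits. To close it one must either strengthen the hypothesis (orthogonality of $s^\perp$ to $\Pi^M\big({}_N\cA\big)$ for every $N\ge M$) or confine the claim to $v=v^M$, which is how the invariance is actually deployed (with $M=N$) in the proof of Theorem~\ref{t.univ}.
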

\begin{proof} By continuity, it suffices to prove i) - iii) for any $N\in\N$ large enough and 
every $v=v^N$. So below we assume that $v=v^N$, where $N\gg1$.

i) We have 
$$
\cHR(\Psi_{t{\mathbf 1}}v^N)=
\int_{\T^{N-r_N}} \cH\big( \Psi_{\sum\theta_j\eta^j}(\Psi_{t{\mathbf 1}}v^N)\big)\,\dbar\theta=
\int_{\T^{N-r_N}} \cH\big(  \Psi_{t{\mathbf 1}}(
\Psi_{\sum\theta_j\eta^j}v^N) 
\big)\,\dbar\theta.
$$
Let us denote  $\Psi_{t{\mathbf 1}}(\Psi_{\sum\theta_j\eta^j}v^N)  = v^N(t;\theta)$. Then 
$
(d/dt) v^N(t;\theta)=iv^N.
$
The flow of this hamiltonian  equation commutes with that of the equation with the Hamiltonian $\cH$. So
$\cH(v^N(t;\theta))$ is independent from $t$ for each $\theta$, and i) follows.

ii) As above, we have to check that the integral
$
\int_{\T^{N-r_N}} \cH\big(  \Psi_{t{\Lambda}}(
\Psi_{\sum\theta_j\eta^j}v^N) 
\big)\,\dbar\theta
$
does not depend on $t$. The argument of $\cH$ equals $\Psi_{(t\Lambda^N+\sum\theta_j\eta^j)}v^N$.
Since clearly, $\Lambda^N\in({}_N\cA)^\perp$, then by \eqref{baasis} $\Lambda^N=\sum\theta_j^0\eta^j$. 
Therefore the integral equals
$$
\int_{\T^{N-r_N}} \cH\big(  
\Psi_{(\sum
(\theta_j+ t\theta_j^0) \eta^j )}v^N) 
\big)\,\dbar\theta.
$$
It is $t$-independent, and ii) follows.

iii)   Assuming that $N\ge M$, we have  ${}_M\cA\subset {}_N\cA$. So by 
\eqref{baasis} $s^\perp = \sum \theta'_j\eta^j$, and the assertion follows by the same argument as in ii).
  \end{proof}

  Noting that for any $m\in\R^\infty$ the transformations $\Psi_{tm}$, $t\in\R$, form the flow of the hamiltonian 
  equation with the Hamiltonian $\tfrac12\sum m_j|v_j|^2$, we recast the assertions of the last lemma as follows:
  \begin{equation}\label{integrals}
  \{\cHR,H_0\}=0,\quad  \{\cHR,H_1\}=0,\quad  \{\cHR,H_{s^\perp}\}=0\;\; \forall\,s^\perp
  =s^{\perp M}\in({}_M\cA)^\perp. 
  \end{equation}
  Here $\{\cdot,\cdot\}$ signifies the Poisson bracket and 
  \begin{equation*}
  \begin{split}
  &H_0(v)=\tfrac12\sum|v_j|^2 = \tfrac12 |v|^2_{h^0},
  \quad H_{s^\perp}(v)= \tfrac12 \sum_{j=1}^M  s_j^\perp |v_j|^2,\\
&  H_2(v)=\tfrac12\sum \lambda_j|v_j|^2
 =  \tfrac12\int_{\T^d} |\nabla u(x)|^2dx\,.
   \end{split}
   \end{equation*}
   Since the Hamiltonians above depend only on the actions $\tfrac12|v_j|^2=I_j$, they commute. So the Hamiltonian
  $\cHR$ has infinitely many commuting quadratic integrals of motion. Two of them, $H_0$ and $H_2+H_0$, 
  are nonnegative and coercive.
  
  Finally, since the transformations $\Psi_{m},m\in\R^\infty$, are symplectic, then the symmetries above also preserve the 
  hamiltonian vector field $R^0$. In particular,
  \begin{equation}\label{invar}
  \Psi_{-s^\perp}R^0(\Psi_{s^\perp}v)=R^0(v)\qquad 
  \forall\,s^\perp   =s^{\perp M}\in({}_M\cA)^\perp. 
  \end{equation}
  
\medskip

Motivated by the averaging theory for equations without  resonances  in \cite{K10, K12}, we
consider the following {\it  effective equation} for the slow dynamics in
 eq.~\eqref{5.2}:
\begin{equation}\label{5.eff}
d v_k= R_k(v) d \tau +{b_k}d \bb^k\ ,
\qquad k\ge1. 
\end{equation}
In difference  with the averaged equations \eqref{5.21} and \eqref{5.22}, the effective equation is  regular,
i.e. it does not have  singularities at $\Game(h)$.  Since  $R^0: h\to h$ is locally Lipschitz,  then strong solutions for \eqref{5.eff} exist locally in
  time and are unique:
 
  \begin{lemma}\label{l.uniq} 
  A strong solution of eq. \eqref{5.eff} with a specified initial data $v(0)=v_0\in h$ is unique, a.s.
 \end{lemma}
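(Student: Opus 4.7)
The plan is to argue pathwise, using that the noise in \eqref{5.eff} is additive and hence common to any two strong solutions driven by the same Brownian motions. So let $v(\tau)$ and $v'(\tau)$ be two strong solutions defined on a common stochastic interval $[0,\tau_*]$ with $v(0)=v'(0)=v_0\in h$, constructed on the same filtered probability space with the same Wiener processes $\bb^k$. Set $w=v-v'$. Then the stochastic integrals cancel and $w$ satisfies the random ODE
\begin{equation*}
\dot w(\tau) = \big(R^0(v(\tau))-R^0(v'(\tau))\big) + \big(R^1(v(\tau))-R^1(v'(\tau))\big),\qquad w(0)=0,
\end{equation*}
in the space $h=h^r$, pathwise on $[0,\tau_*]$.

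Next I would localise in norm. For $M>0$ introduce the stopping time
\begin{equation*}
\tau_M=\inf\{\tau\in[0,\tau_*]:|v(\tau)|_h\vee|v'(\tau)|_h\ge M\},
\end{equation*}
with the convention $\inf\emptyset=\tau_*$. On $[0,\tau_M]$ both trajectories stay in the $h$-ball of radius $M$, so the local Lipschitz estimate \eqref{p4} applies to $R^0$, giving $|R^0(v)-R^0(v')|_h\le C_rM^{2q_*}|w|_h$. For $R^1=\mathrm{diag}(-\gamma_k)$ I would use the dissipative structure: taking the real scalar product in $h^r$ and summing the diagonal terms, $\langle w,R^1(v)-R^1(v')\rangle_{h^r}=-\sum_k\gamma_k|w_k|^2(\lambda_k\vee1)^r\le0$, since $\gamma_k>0$. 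Combining these two bounds and computing the derivative of $|w|_h^2$,
\begin{equation*}
\frac{d}{d\tau}|w(\tau)|_h^2 \le 2C_rM^{2q_*}|w(\tau)|_h^2\quad\text{for }\tau\in[0,\tau_M].
\end{equation*}

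By Gronwall's inequality and $w(0)=0$, this forces $w\equiv0$ on $[0,\tau_M]$, i.e.\ $v(\tau)=v'(\tau)$ there almost surely. Sending $M\to\infty$ one gets $\tau_M\nearrow\tau_*$ a.s.\ (since each strong solution is by definition continuous with values in $h$ on $[0,\tau_*]$, hence $h$-bounded on compact subintervals, almost surely), and the two solutions coincide on the whole common interval of existence. The main potential obstacle is the unbounded diagonal term $R^1$: because $\gamma_k\to\infty$, one must be careful that $\langle w,R^1w\rangle_{h^r}$ actually makes sense pointwise. This is handled by the observation that one only needs the \emph{inequality} $\le 0$; concretely, one can first project onto $\Pi^N$, apply Gronwall for each $N$, and pass to the limit, or simply note that continuity of $v,v'$ in $h^r$ implies the sum $\sum\gamma_k|w_k|^2(\lambda_k\vee1)^r$ is well-defined (possibly $+\infty$) and its negative contribution can only help, so dropping it and keeping only the $R^0$ Lipschitz term yields the same Gronwall bound.
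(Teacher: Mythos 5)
Your argument is correct and is exactly the one the paper has in mind: the paper states the lemma as an immediate consequence of the local Lipschitz property \eqref{p4} of $R^0$ on $h=h^r$, and your write-up simply fills in the standard details (cancellation of the additive noise, localisation by stopping times, dissipativity of the diagonal part $R^1$, Gronwall). The technical point about the unbounded operator $R^1$ that you flag is real, and either of the fixes you mention (Galerkin projection, or equivalently working with the contraction semigroup $e^{-\tau f(-\Delta)}$ in the mild formulation) disposes of it.
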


Since the transformations  $\Psi_{m},m\in\R^\infty$, obviously preserve the vector field $R^1$ as well as  the law of 
the random force in \eqref{5.eff} (see the proof of the lemma below), 
then those  $\Psi_m$ which are symmetries of $R^0$ (equivalently,  which are 
symmetries of the Hamiltonian $\cHR$), preserve weak solutions of \eqref{5.eff}:

  \begin{lemma}\label{l.invar}
  If $v(\tau)$ is a solution of equation \eqref{5.eff} and $m\in\R^\infty$ be either a vector $m=t{\mathbf 1}, t\in\R$,
  or a vector $m=t\Lambda, t\in\R$, or  $m=m^M\in ({}_M\cA)^\perp$, 
  then $\Psi_{m}v(\tau)$ also is a weak solution.
   \end{lemma}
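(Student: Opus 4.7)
The plan is to push the equation \eqref{5.eff} forward by $\Psi_m$ and check that both the drift and the noise transform correctly. Setting $w(\tau)=\Psi_m v(\tau)$, i.e.\ $w_k=e^{im_k}v_k$, and using that the action is diagonal we get from \eqref{5.eff} that
\begin{equation*}
dw_k=e^{im_k}R_k(v)\,d\tau+b_k e^{im_k}\,d\bb^k,\qquad k\ge1.
\end{equation*}
So the task reduces to two points: verify the invariance $\Psi_{-m}R(\Psi_m v)=R(v)$ for the drift, and identify $e^{im_k}\bb^k$ as a standard complex Wiener process.

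For the drift I would split $R=R^0+R^1$. The linear part $R^1_k(v)=-\gamma_kv_k$ is diagonal, hence automatically commutes with every $\Psi_m$. For $R^0$, the identity $\Psi_{-m}R^0(\Psi_m v)=R^0(v)$ in the case $m=m^M\in({}_M\cA)^\perp$ is exactly the already established relation \eqref{invar}. The two remaining cases $m=t\mathbf 1$ and $m=t\Lambda$ follow by the same mechanism: Lemma~\ref{l.symm}(i),(ii) shows that $\cHR$ is invariant under $\Psi_{t\mathbf 1}$ and $\Psi_{t\Lambda}$, and since each $\Psi_m$ is a unitary and symplectic transformation of $h$, the associated hamiltonian vector field $R^0=i\rho\nabla\cHR$ is preserved as well (the same finite-dimensional truncation plus the Fatou/continuity argument used to derive \eqref{p2} and \eqref{invar} applies verbatim). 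Thus $e^{im_k}R_k(v)=R_k(w)$ in all three cases.

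For the noise, define $\tilde\bb^k(\tau):=e^{im_k}\bb^k(\tau)$. Since $\bb^k$ is a standard complex Wiener process and $e^{im_k}$ is a deterministic element of the unit circle in $\C$, $\tilde\bb^k$ is again a standard complex Wiener process; the family $\{\tilde\bb^k\}_{k\ge1}$ remains independent because the $\{\bb^k\}$ are independent and each factor is transformed by an individual unitary rotation. Adapting the filtration to the new Brownian motions, the equation becomes
\begin{equation*}
dw_k=R_k(w)\,d\tau+b_k\,d\tilde\bb^k,\qquad k\ge1,
\end{equation*}
so $(w,\tilde\bb)$ is a weak solution of \eqref{5.eff}.

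The only real content here is the invariance of $R^0$ under $\Psi_m$ for the three distinguished choices of $m$, which is where Lemma~\ref{l.symm} and the symplecticity of $\Psi_m$ are used; this is the step I would write out most carefully. Everything else is a bookkeeping exercise together with the standard fact that a deterministic rotation of a complex Brownian motion is again a complex Brownian motion.
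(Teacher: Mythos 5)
Your proposal is correct and follows essentially the same route as the paper: apply $\Psi_m$ to the equation, invoke the invariance of $R^0$ coming from Lemma~\ref{l.symm} (cf.\ \eqref{invar}) together with the trivial invariance of the diagonal operator $R^1$, and observe that $\{e^{im_k}\bb^k\}$ is an equivalent family of standard independent complex Wiener processes. The extra detail you give on the three cases of $m$ and on the noise is just an expanded version of the paper's argument.
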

   \begin{proof}
   Denote $\Psi_{m}v(\tau)=v'(\tau)$. Applying $\Psi_{m}$ to eq.~\eqref{5.eff}, using Lemma~\ref{l.symm}
   (cf.     \eqref{invar})
 and the invariance of the operator $R^1$ with respect to $\Psi_{m}$,  we get
 $$
 dv'_k = \big(\Psi_{m} R(v(\tau)\big)_kd\tau + e^{im_k} b_kd\bb^k= (R(v'(\tau))_k+b_k( e^{im_k} d\bb^k).
 $$
 So $v'(\tau)$ is a weak solution of \eqref{5.eff}, where the set of standard independent Wiener processes 
 $\{\bb^k(\tau), k\ge1\}$ is replaced by the equivalent set of processes $\{e^{im_k} \bb^k(\tau), k\ge1\}$.
   \end{proof}
   
    \begin{corollary}\label{c.invar}
    If $\mu$ is a stationary measure for equation \eqref{5.eff} and a vector $m$ is as in
    Lemma~\ref{l.invar}, then the measure $\Psi_{m}\circ\mu$ also is stationary. 
       \end{corollary}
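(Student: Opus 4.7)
The plan is to reduce the claim to Lemma~\ref{l.invar} via the standard correspondence between stationary measures and time-homogeneous (``stationary'') solutions. First, I would realize $\mu$ as the marginal law of a stationary solution: on a suitable filtered probability space carrying complex Wiener processes $\bb^k$, construct a weak solution $v(\tau)$ of~\eqref{5.eff} with $\cD(v(0))=\mu$ and $\cD(v(\tau))\equiv\mu$ for all $\tau\ge0$. Such a solution exists by stationarity of $\mu$ together with the pathwise uniqueness provided by Lemma~\ref{l.uniq}, which in turn gives uniqueness in law.

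Next, I apply Lemma~\ref{l.invar} to this stationary solution: set $v'(\tau):=\Psi_m v(\tau)$, where $m$ is any of the vectors $t\mathbf{1}$, $t\Lambda$, or $m^M\in({}_M\cA)^\perp$ specified in that lemma. Lemma~\ref{l.invar} asserts that $v'(\tau)$ is again a weak solution of~\eqref{5.eff}, with driving noise the (equivalent in law) family $\{e^{im_k}\bb^k\}$. Since $\Psi_m$ is a fixed deterministic measurable map on $h$, for every $\tau$ one has
\begin{equation*}
\cD(v'(\tau))=\Psi_m\circ\cD(v(\tau))=\Psi_m\circ\mu,
\end{equation*}
which is independent of $\tau$. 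Hence $v'(\tau)$ is a stationary (in law) weak solution of~\eqref{5.eff} whose one-time marginal is $\Psi_m\circ\mu$.

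Finally, I invoke once more the equivalence between the existence of a time-homogeneous weak solution with marginal $\Psi_m\circ\mu$ and the stationarity of $\Psi_m\circ\mu$ under the Markov semigroup associated with~\eqref{5.eff}. This equivalence uses pathwise uniqueness (Lemma~\ref{l.uniq}) so that the semigroup is unambiguously defined. Therefore $\Psi_m\circ\mu$ is a stationary measure, as claimed.

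The only minor obstacle is ensuring that the manipulation ``stationary measure $\Leftrightarrow$ stationary solution'' is valid in the infinite-dimensional $h$-setting, and that pushing forward by the unitary $\Psi_m$ commutes with taking laws of weak solutions. Both are standard once pathwise uniqueness is available; no further analysis is needed beyond the content of Lemmas~\ref{l.invar} and~\ref{l.uniq}.
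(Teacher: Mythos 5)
Your proposal is correct and follows exactly the route the paper intends: the corollary is stated as an immediate consequence of Lemma~\ref{l.invar}, namely that $\Psi_m$ maps a stationary weak solution to a weak solution with the time-independent marginal $\Psi_m\circ\mu$, whence stationarity follows from the well-posedness supplied by Lemma~\ref{l.uniq} (and Corollary~\ref{c2}). Your filling-in of the ``stationary measure $\Leftrightarrow$ stationary solution'' equivalence is the standard and correct way to make this precise.
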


 Importance of the effective equation  for the study of 
the long-time dynamics in equations \eqref{1.1}=\eqref{5.1} is clear from the next lemma:

 \begin{lemma}\label{l.ef_eq}
 Let a continuous process $v(\tau)\in h$ be a weak solution of \eqref{5.eff}. Then $I(v(\tau))$ is a 
 weak solution of \eqref{5.21}. Let stopping times $0\le\tau_1<\tau_2\le T$ and $\delta_*>0, N\in\N$ be such that
 \begin{equation}\label{stop}
 I_k(v(\tau)) \ge\delta_*\quad \text{for $\tau_1\le\tau\le\tau_2$ and $k\le N$.}
 \end{equation}
 Then the process 
 $\big(I(v(\tau)), \Phi_j(v(\tau)), j\le J(N)\big)$ is a weak solution of  the system of  averaged 
 equations \footnote{This system is heavily under-determined. }
  \eqref{5.21}, \eqref{5.22}${}_{j\le J}$.
 \end{lemma}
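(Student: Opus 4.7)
The plan is to apply Itô's formula to the effective equation \eqref{5.eff} under the maps $v\mapsto I_k$ and $v\mapsto \vp_k$, and then recognize the drifts and dispersions produced as those of \eqref{5.21} and \eqref{5.22} by means of Lévy's martingale characterization.

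For the first assertion, the real Itô formula applied to $I_k=\tfrac12|v_k|^2$ along \eqref{5.eff} gives
$$dI_k=(v_k\cdot R_k(v))\,d\tau+b_k^2\,d\tau+b_k(v_k\cdot d\bb^k).$$
Identity \eqref{R_k} identifies the drift with $\langle v_k\cdot P_k\rangle_\Lambda(v)$, which depends only on $(I,V)$ by \eqref{yyy} and \eqref{AAver}. The martingale differential $b_k(v_k\cdot d\bb^k)$ has quadratic variation $2b_k^2 I_k\,d\tau$, so by Lévy's theorem, after enlarging the probability space if needed to supply an auxiliary Brownian motion on the event $\{v_k=0\}$, we can write it as $b_k\sqrt{2I_k}\,d\tilde\beta^k$ for a standard Wiener process $\tilde\beta^k$. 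This yields \eqref{5.21}.

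For the second assertion, on the stochastic interval $[\tau_1,\tau_2]$ every $v_k$ with $k\le N$ stays in the domain $|v_k|\ge\sqrt{2\delta_*}$, on which $\vp_k=\tfrac{1}{2i}\log(v_k/\bar v_k)$ is smooth. The complex Itô formula of Lemma~\ref{l.a1} applied to \eqref{5.eff}, in direct analogy with the derivation of \eqref{5.3} but without the $-\nu^{-1}\lambda_k$ term, produces
$$d\vp_k=|v_k|^{-2}(iv_k\cdot R_k(v))\,d\tau+|v_k|^{-2}b_k(iv_k\cdot d\bb^k).$$
Summing with coefficients $s_k^{(j)}$ for $j\le J(N)$ (only indices $k\le N$ actually occur in $s^{(j)}$) and invoking \eqref{R_kk} reproduces the drift of \eqref{5.22}. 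The martingale part has covariation matrix $M_{j_1j_2}=\sum_k s_k^{(j_1)}s_k^{(j_2)}b_k^2(2I_k)^{-1}$, matching the dispersion $D^{new}$ introduced before \eqref{5.22}, so another application of Lévy's theorem represents it as $\sum_k s_k^{(j)}b_k(2I_k)^{-1/2}\,d\tilde\beta^{-k}$ for standard Wiener processes $\tilde\beta^{-k}$.

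The main subtlety, which I would treat last, is the simultaneous construction of the two families of Brownian motions on a single enlarged filtered probability space and the verification that the $\tilde\beta^k$ in \eqref{5.21} and the $\tilde\beta^{-k}$ in \eqref{5.22} are jointly independent, as required by the convention already used for the averaged system. This reduces to the pointwise orthogonality of the real martingales $v_k\cdot d\bb^k$ and $iv_\ell\cdot d\bb^\ell$, which holds because for every $k,\ell$ the cross-variation equals a multiple of $(\Re v_k)(\Im v_k)-(\Im v_k)(\Re v_k)=0$ when $k=\ell$ and vanishes trivially when $k\ne\ell$; combined with the Gaussianity of stochastic integrals against the $\bb^k$, a standard enlargement argument yields the desired joint independence.
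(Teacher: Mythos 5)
Your proposal is correct and follows essentially the same route as the paper: apply the (complex) It\^o formula to $I_k$ and $\Phi_j$ along \eqref{5.eff}, identify the drifts via \eqref{R_k} and \eqref{R_kk}, and recognize the diffusion coefficients to conclude via the martingale-problem/L\'evy characterization; the orthogonality of $v_k\cdot d\bb^k$ and $iv_k\cdot d\bb^k$ that you verify at the end is exactly the independence the paper invokes when writing \eqref{5.21} and \eqref{5.22} with independent Wiener processes.
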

 \begin{proof} Let $v(\tau)$ satisfies \eqref{5.eff}. 
  Applying Ito's formula \eqref{ito} 
  to $I_k(v(\tau))$ and $\Phi_j(v(\tau)), \ j\le J$, we get that 
  \begin{equation}\label{dIk}
 dI_k = v_k\cdot R_k\,d\tau + b_k^2\,d\tau +b_kv_k\cdot d\bb^k
 \end{equation}
 and 
 \begin{equation}\label{dPhi}
 d\Phi_j =\sum_{k\in \supp s^{(j)}} s^{(j)}_k \left(\frac{iv_k\cdot R_k}{|v_k|^2}d\tau+ 
  \frac{b_k }{|v_k|^2} iv_k\cdot d\bb^k\right).
\end{equation}
Using \eqref{R_k} and \eqref{R_kk} 
we see that \eqref{dIk} has the same drift and diffusion as \eqref{5.21}. 
So $I(v(\tau))$ is a weak solution of  \eqref{5.21} (see \cite{Yor74, MR99}). Similar, for 
$\tau\in[\tau_1,\tau_2]$, in view of \eqref{R_kk}, the process $(I,\Phi_j, j\le J)$, is a weak solution
of  the system  \eqref{5.21}, \eqref{5.22}${}_{j\le J}$.   \end{proof}

The result, presented in   the next section,  in a sense is inverse to Lemma~\ref{l.ef_eq}. Namely it implies 
that the limiting measures $\cQ^0$ as in \eqref{5.88} are weak solutions of the system of
averaged equations  \eqref{5.21}, \eqref{5.22}, and that these solutions may be lifted to solutions of
the effective equation.

\subsection{Averaging for solutions of the initial-value problem. 
}\label{s5.2}

Let $v^\nu(\tau)$ be a solution of \eqref{5.1} such that $v^\nu(0)=v_0=\cF(u_0)\in h^{r}$,
and $\cQ^0$ be a measure in $\cH_{I,V}$ as in \eqref{5.88}.

\begin{theorem}\label{t5.22}
  There exists a unique weak solution $v(\tau)$ of effective equation \eqref{5.eff}, satisfying 
 $v(0)=   v_0$  a.s., and the law of $(I, V )(v(\cdot))$ in the space $\cH_{I,V}$ coincides
  with $\cQ^0$. The convergence \eqref{5.88} holds as $\nu\to0$. 
\end{theorem}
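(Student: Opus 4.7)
The plan is threefold: first, to solve the effective equation \eqref{5.eff} uniquely with initial data $v_0$; second, to identify the law of $(I,V)$ of its solution with the subsequential limit $\cQ^0$; third, to use uniqueness to promote subsequential convergence to convergence of the full family as $\nu\to 0$.

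For global existence and uniqueness of \eqref{5.eff}, I combine Lemma \ref{l.uniq} (local strong solutions, which the local Lipschitz bound \eqref{p4} and the diagonal linear part $R^1$ from \eqref{R1} deliver) with a priori estimates in $h=h^r$ analogous to \eqref{2.5}, \eqref{2.05}. These come from applying Ito's formula to $|v|_{h^r}^{2n}$: the hamiltonian field $R^0$ preserves the quadratic integrals $H_0,H_2$ (Lemma \ref{l.symm}), so only $R^1$ and the stochastic term contribute to the drift of $|v|_{h^r}^{2n}$, and the bound $B_r<\infty$ closes up the estimate in the standard way.

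To identify $\cQ^0$ with $\cQ := \cD\big((I,V)(v^0(\cdot))\big)$, where $v^0$ is the strong solution of \eqref{5.eff}, I would use the martingale problem. On the one hand, Ito's formula applied to $I_k(v^0)$ and $V_j(v^0)$ under \eqref{5.eff}, combined with \eqref{R_k} and \eqref{R_kk}, yields explicit continuous martingales with prescribed drifts and quadratic variations. On the other, Skorokhod's theorem provides an almost-sure version $(\tilde I^{\nu_\ell},\tilde V^{\nu_\ell})\to(\bar I,\bar V)$ of \eqref{5.88} with $(\bar I,\bar V)$ of law $\cQ^0$. Starting from Ito for $I_k^\nu$ and $V_j^\nu$ -- equations \eqref{5.2} and \eqref{7.1}, in which the fast phase $\Lambda\cdot s^{(j)}=0$ vanishes for $s^{(j)}\in\cA$ -- a Khasminskii-type averaging scheme partitions $[0,T]$ into short intervals of length $\eta$ with $\nu\ll\eta\ll 1$, freezes the actions on each interval, and approximates the fast flow by the linear rotation $\Psi_{t\Lambda/\nu}$. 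Lemmas \ref{l.aver}--\ref{l.aaver} then replace the polynomial contribution $P_k^0(v^\nu)$ by its resonant average $R_k^0$ up to an error vanishing with $\nu$. Consequently $(\bar I,\bar V)$ satisfies the same martingale problem as $(I(v^0),V(v^0))$; uniqueness of weak solutions of \eqref{5.eff} then forces $\cQ^0=\cQ$, and the subsequential convergence is automatically upgraded to convergence along the full family $\nu\to 0$.

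The main obstacle is the singularity of the angle dynamics at the locus $\Game$ where some $I_k$ vanishes: the drift in \eqref{5.6} blows up like $\delta^{-1}$, and the coefficients in \eqref{5.ris} blow up even faster. The remedy, familiar from \cite{KP08, K10, K12}, is twofold. First, carry the averaging in terms of the resonant monomials $V_j$ rather than the phase combinations $\Phi_j$: the $V_j$ are polynomials in $v$, so their equations \eqref{7.1} are non-singular and the averaging of the drift $v_k\cdot P_k^0$ (which, by \eqref{ura2}, factors through the non-singular $R_k^-$) is well behaved. Second, localize to the good event $\{\min_{k\le N}I_k^\nu(\tau)>\delta\}$ on which the Khasminskii step can be performed pointwise, use the uniform-in-$\nu$ bound \eqref{5.8} of Lemma \ref{l5.1} to control the contribution of the complementary event, and finally send $\delta\to 0$ and $N\to\infty$.
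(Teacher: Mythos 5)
Your outline of the averaging half (Khasminskii partition, resonant monomials $V_j$ in place of the singular $\Phi_j$, localization to $\{\min_{k\le N}I_k>\delta\}$ via Lemma \ref{l5.1}) matches the paper's Lemma \ref{l5.media} and Lemma \ref{l2.3ihp}. But the concluding step of your identification argument contains a genuine gap. You write that $(\bar I,\bar V)$ and $(I,V)(v^0)$ satisfy the same martingale problem, and that ``uniqueness of weak solutions of \eqref{5.eff} then forces $\cQ^0=\cQ$.'' Lemma \ref{l.uniq} gives pathwise uniqueness for the effective equation \emph{in the variable $v$}; it says nothing about uniqueness of the martingale problem in the $(I,V)$-variables. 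The map $v\mapsto (I,V)(v)$ is many-to-one, and the averaged system that $(I,V)$ (equivalently $(I,\Phi)$) satisfies is heavily under-determined, with coefficients singular on the locus $\Game$; the paper states explicitly that it sees no way to prove this system is well posed. So knowing that two $(I,V)$-processes solve the same (ill-posed) martingale problem does not let you conclude they have the same law. What is actually needed is the converse direction: starting from the limit law $\cQ^0$, one must \emph{lift} it to an $h$-valued process $v$ solving \eqref{5.eff} whose $(I,V)$-image has law $\cQ^0$, and only then invoke Yamada--Watanabe for \eqref{5.eff}. This lifting is the hard core of the paper's proof: on the good set one reconstructs the angles by solving the Lipschitz functional SDE \eqref{_5} driven by the known $(I,V)$ (using that \eqref{diffeo} is a diffeomorphism when $[I]\ge\delta$), across the bad set one needs the coupling construction of Lemma \ref{l.couple} to keep the process continuous while preserving $(I^N,V^{(N)})$, and then one passes to the limits $\delta\to0$, $P\to\infty$, $N\to\infty$. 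Your proposal uses the localization only to control the averaging error, not to perform this reconstruction, so the decisive step is missing.

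A secondary issue: your first step proves global well-posedness of \eqref{5.eff} in $h=h^r$ \emph{a priori} by claiming the drift of $|v|^{2n}_{h^r}$ only sees $R^1$ because $R^0$ preserves $H_0$ and $H_2$. The resonance condition $\Lambda\cdot s=0$ does not imply $\sum_k\lambda_k^r s_k=0$ for $r>1$, so $\cHR$ does not commute with $H_r$ and the higher Sobolev estimates do not close for free. The paper avoids this: the bounds \eqref{apriori} for the limiting law are inherited from \eqref{2.5} via Fatou, a weak solution is produced by the lifting, and strong well-posedness of \eqref{5.eff} is obtained only afterwards, as Corollary \ref{c2}.
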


If $\tilde s\in \Z_0^\infty$ is any vector, perpendicular to $\Lambda$ (possibly $|s|>m$), 
then $\Vts(v^\nu(\tau))$ is a slow process
(see eq. \eqref{7.1}), and the $\Lambda$-averaged equation for $\vp\big(\Vts(v^\nu(\tau))\big) =\vp(v^\nu(\tau))\cdot \tilde s$
 still depend only on $(I,V)$ (not on 
$\Vts$). Therefore one can literally repeat the argument, proving the theorem -- the only difference would be a
bit  heavier  notation --  and get the following result:
\medskip

\noindent
{\bf Amplification.} Let $\tilde s\in \Z_0^\infty$ be any vector, perpendicular to $\Lambda$. 
Then under the assumptions  of Theorem \ref{t5.22} we have the convergence
$
\cD(I,V,\Vts)(v^\nu(\cdot))\strela \cD(I,V,\Vts)(v (\cdot)).
$
\medskip

By this result, for any $\tilde s\in \Z_0^\infty$, perpendicular to $\Lambda$, and any $\tau$ 
we have
$\ 
\cD(V^{\tilde s}(v^\nu(\tau))\strela  \cD( V^{\tilde s}(v(\tau))$ as  $\nu\to0$.
Now consider $\vp(v^\nu(\tau))\cdot {\tilde s}=\vp(V^{\tilde s}(v^\nu(\tau))\in S^1$. 
Since $\vp(V)$ is a discontinuous function of $V\in \C$, then to pass to
a limit as $\nu\to0$ we do the following. We identity $S^1$ with
$\{v\in \R^2: |v|=1\}$, denote $\lc \tilde s \rc =N$,  and
 approximate the discontinuous function 
 $V^N =(V_1,\dots,V_N)
 \mapsto \vp (V^{\tilde s})$
  by continuous functions 
$$
V^N \mapsto  f_\delta([I(V^N)]) \,
\vp (V^{\tilde s}) 
 \in \R^2\,,
\qquad [I]=\min_{1\le k\le   N } I_k, \quad 0<\delta\ll1. 
$$
where $f_\delta$ is continuous, $0\le f_\delta\le 1$, $f_\delta(t)=0$ 
for $t\le \delta/2$ and $f_\delta=1$ for $t\ge \delta$.

For any measure $\mu_\tau$ in a complete metric space, which weakly continuously depends on $\tau$,
and any $\tau_1<\tau_2$ we will denote
$$
\lan\mu_\tau\ran_{\tau_1}^{\tau_2}=\frac1{\tau_2-\tau_1}
\int_{\tau_1}^{\tau_2} \mu_\tau\, d\tau.
$$
Then the argument above jointly with Lemma~\ref{l5.1} imply:

\begin{corollary}\label{c1}
Let $\ts\in\Z_0^\infty$ be any non-zero vector, orthogonal to $\Lambda$, and let $0\le \tau_1<\tau_2\le T$.
Then 
$$
\lan\,\cD(\vp(u^\nu(\tau))\cdot \ts)\,\ran _{\tau_1}^{\tau_2} \strela 
\lan\,\cD(\vp(v(\tau))\cdot \ts)\,\ran _{\tau_1}^{\tau_2} \quad\text{as}\quad \nu\to0.
$$
\end{corollary}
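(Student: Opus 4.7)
The plan is to upgrade the slow-variable convergence provided by the Amplification, namely $\cD(I,V,V^{\tilde s})(v^\nu(\cdot)) \strela \cD(I,V,V^{\tilde s})(v(\cdot))$, to the claimed time-averaged convergence for $\vp(v)\cdot \tilde s = \vp(V^{\tilde s}(v)) \in S^1$. The obstruction is that $\vp : \C \to S^1$ is discontinuous at $0$, so the continuous mapping theorem cannot be applied directly; the cutoff $f_\delta$ introduced in the paragraph preceding the corollary is exactly what is needed to bypass this, and the resulting error is controlled by Lemma~\ref{l5.1}.

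First, I fix a bounded continuous test function $\psi : S^1 \to \R$ and reduce the claim to
$$
J^\nu := \frac{1}{\tau_2 - \tau_1} \int_{\tau_1}^{\tau_2} \E\,\psi\bigl(\vp(V^{\tilde s}(v^\nu(\tau)))\bigr)\, d\tau \longrightarrow J^0 := \frac{1}{\tau_2 - \tau_1} \int_{\tau_1}^{\tau_2} \E\,\psi\bigl(\vp(V^{\tilde s}(v(\tau)))\bigr)\, d\tau
$$
as $\nu \to 0$. Set $N = \lc \tilde s \rc$ and $[I(v^N)] := \min_{1 \le k \le N} I_k(v)$. I split the integrand as
$$
\psi(\vp(V^{\tilde s}(v))) = \Psi_\delta(v) + E_\delta(v), \qquad \Psi_\delta(v) := f_\delta([I(v^N)])\,\psi(\vp(V^{\tilde s}(v))),
$$
so that $|E_\delta(v)| \le \|\psi\|_\infty\, \chi_{\{[I(v^N)] \le \delta\}} \le \|\psi\|_\infty \sum_{k=1}^N \chi_{\{I_k(v) \le \delta\}}$.

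The next step is to control the error uniformly in $\nu$. By Lemma~\ref{l5.1}, for each $k$
$$
\int_0^T \IP\{I^\nu_k(\tau) \le \delta\}\, d\tau \to 0 \ \text{as}\ \delta \to 0, \ \text{uniformly in}\ \nu \in (0,1],
$$
so the contribution of $E_\delta$ to $J^\nu$ is bounded by some $\eta(\delta) \to 0$ that is independent of $\nu$. The analogous bound for the limit process $v(\cdot)$ follows from Portmanteau applied to the open set $\{I_k < \delta\}$, combined with Fatou's lemma, using that by the Amplification $\cD(I^\nu(\tau)) \strela \cD(I(v(\tau)))$ for every $\tau$.

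Finally I pass to the limit in the regular term $\Psi_\delta$. It depends only on $(I^N, V^{\tilde s})$, and restricted to the algebraic locus
$$
\cM := \Bigl\{(I^N,W) : |W|^2 = \prod_{k \in \supp \tilde s}(2I_k)^{|\tilde s_k|}\Bigr\}
$$
it is continuous: away from $\{W = 0\}$ this is immediate, while along $\cM$ the condition $W = 0$ forces some $I_k = 0$ with $k \in \supp \tilde s$, hence $[I(v^N)] = 0$ and $f_\delta = 0$. Since every $v^\nu(\cdot)$ and the limit $v(\cdot)$ take values inside $\cM$, the Portmanteau theorem applied to the bounded continuous averaged functional
$$
(I(\cdot), V^{\tilde s}(\cdot)) \mapsto \frac{1}{\tau_2 - \tau_1}\int_{\tau_1}^{\tau_2} \Psi_\delta\bigl(I^N(\tau), V^{\tilde s}(\tau)\bigr)\, d\tau
$$
yields, for each fixed $\delta > 0$, the convergence of the $\Psi_\delta$-contribution as $\nu \to 0$. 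Letting $\delta \to 0$ then finishes the proof. The only delicate point is this \emph{continuity of $\Psi_\delta$ along $\cM$}, i.e.\ the observation that the algebraic coupling between $I^N$ and $V^{\tilde s}$ rules out any discontinuity at the singular locus $V^{\tilde s} = 0$; everything else is bookkeeping combining Lemma~\ref{l5.1} with the Amplification.
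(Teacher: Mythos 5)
Your proof is correct and takes essentially the same route as the paper, which obtains the corollary by combining the Amplification, the cutoff functions $f_\delta([I])$ introduced in the paragraph preceding the statement, and Lemma~\ref{l5.1} — exactly your decomposition into $\Psi_\delta+E_\delta$ with the error controlled uniformly in $\nu$. The continuity of the truncated functional along the locus where $|V^{\tilde s}|^2=\prod(2I_k)^{|\tilde s_k|}$, which you single out as the delicate point, is precisely the (implicit) reason the paper's approximating functions are continuous, so you have simply made the paper's sketch explicit.
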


On the contrary, if $s\cdot\Lambda\ne0$, then by Proposition~\ref{r34} 
we have the convergence 
$
\lan\,\cD(\vp(u^\nu(\tau))\cdot s)\,\ran _{\tau_1}^{\tau_2}
\strela \dbar \vp.
$
More generally, if vectors $\ts_1,\dots,\ts_M$ from $\Z_0^\infty$ are perpendicular to $\Lambda$ and
a vector $s$ is not, then
$$
\big\lan\cD(I,\vp\cdot\ts_1,\dots,\vp\cdot \ts_M,\vp\cdot s)(u^\nu(\tau))\big\ran _{\tau_1}^{\tau_2}\strela
\big\lan\cD(I,\vp\cdot\ts_1,\dots,\vp\cdot \ts_M)(v(\tau))\big\ran _{\tau_1}^{\tau_2}\times  \dbar \vp.
$$
\smallskip

The proof of Theorem  \ref{t5.22} is  inspired by that of Theorem~3.1 in
\cite{K10} and is presented  in the next section.
In difference with \cite{K10}, 
we do not know an equivalent description of the measure $\cQ^0$ only  in terms
of the slow variables $(I,V)$ of equation \eqref{5.1}. But the following result
holds true:

\begin{proposition}\label{p.slow} Consider the natural process on the space $\cH_{I,V}$ with
the measure $\cQ^0$.  If for some  $N\in\N$ and $\delta_*>0$, 
 stopping times $0\le\tau_1<\tau_2\le T$ satisfy
 \eqref{stop}, then for $\tau\in[\tau_1,\tau_2]$ the process $\big(I, \Phi^{(N)}\big)  \big((I,V)(\tau) \big)$ is a
weak solution of the averaged 
equations \eqref{5.21} and \eqref{5.22}${}\mid_{j\le J}$. Here 
$ \Phi^{(N)}=(\Phi_1,\dots, \Phi_{J(N)})$. 
\end{proposition}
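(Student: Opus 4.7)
The plan is to derive Proposition \ref{p.slow} as a direct consequence of Theorem \ref{t5.22} combined with Lemma \ref{l.ef_eq}. By Theorem \ref{t5.22}, the measure $\cQ^0$ on $\cH_{I,V}$ equals $\cD(I(v(\cdot)),V(v(\cdot)))$, where $v(\tau)$ is the unique weak solution of the effective equation \eqref{5.eff} with $v(0)=v_0$. Hence it suffices to establish the weak-solution property for the process built from $v(\tau)$ on the canonical probability space, and then transfer it to the coordinate process on $\cH_{I,V}$ via equality of laws. The only subtlety is that $\Phi^{(N)}$ is not in general a continuous function of $V$, so I must check that it is a well-defined measurable function of $V$ on the non-degenerate random set where \eqref{stop} holds.

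First I would observe that for $j\le J(N)$ the resonant monomial $V_j=V^{s^{(j)}}$ has $\lc s^{(j)}\rc\le N$ and, by \eqref{ugly}, $|V_j|^2=\prod_{k\le N}(2I_k)^{s^{(j)}_k}$. Under the assumption \eqref{stop} this forces $|V_j|\ge (2\delta_*)^{|s^{(j)}|/2}>0$ on $[\tau_1,\tau_2]$, so $\Phi_j=\vp(V_j)$ is a continuous function of $V_j$ there. Consequently $(I,\Phi^{(N)})$ is a well-defined measurable functional of $(I,V)$ on the interval in question, and its distribution is the same whether computed from $v$ or from the coordinate process on $\cH_{I,V}$ under $\cQ^0$.

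Next I would pull the stopping times $\tau_1,\tau_2$ back to the natural filtration of $v(\cdot)$: since $(I,V)(v(\cdot))$ is adapted to this filtration, $\tau_1,\tau_2$ remain stopping times and \eqref{stop} reads the same way. Lemma \ref{l.ef_eq} then applies directly and yields that $(I(v(\tau)),\Phi_j(v(\tau)),j\le J(N))$ is a weak solution of \eqref{5.21}, \eqref{5.22}${}|_{j\le J}$ on $[\tau_1,\tau_2]$. Because being a weak solution is a distributional property of the process together with its driving Brownian motions, and because on the non-degenerate region the latter can be reconstructed from the martingale parts of $I_k$ and $\Phi_j$ via Levy's criterion (using the non-degenerate dispersion matrices exhibited in Section~\ref{s2.3}), the same property holds for the coordinate process with law $\cQ^0$, possibly after enlarging the probability space by an independent Wiener noise to extend the Brownian motions outside $[\tau_1,\tau_2]$.

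The main obstacle is making this transfer honest: the space $\cH_{I,V}$ carries no a priori Brownian motions, so the assertion that the coordinate process \emph{is} a weak solution requires explicitly producing the driving noise on a possibly enriched probability space. On the set prescribed by \eqref{stop} the dispersion matrices for \eqref{5.21}-\eqref{5.22} are non-degenerate, which makes the reconstruction of $(\beta^k,\beta^{-k})$ from the process itself routine, but some bookkeeping is needed at the random boundaries $\tau_1,\tau_2$ and on $[0,T]\setminus[\tau_1,\tau_2]$ in order to obtain standard Brownian motions defined on the whole interval $[0,T]$. Once this step is carried out, the two measure-theoretic facts above combine to give the proposition.
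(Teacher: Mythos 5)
Your argument is correct, but it is not the route the paper takes. The paper proves the proposition in Step 8 of Section~\ref{s5.22} by pointing at the coupling measure $\PP^0$ constructed in Step 6 of the proof of Theorem~\ref{t5.22}: that measure has $(I,V)$-marginal equal to $\cQ^0$ and, by item ii) of Lemma~\ref{lemmadelta} (a property that survives the limits $\delta\to0$, $P\to\infty$, $N\to\infty$), already satisfies the averaged equations on $[\tau_1,\tau_2]$; the weak-solution property is thus inherited directly from the Khasminskii-type averaging of the pre-limit equations (Lemma~\ref{l5.media} and Lemma~\ref{l2.3ihp}), i.e.\ from the internals of the proof of the theorem. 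You instead black-box the theorem: you take its conclusion $\cQ^0=\cD\big((I,V)(v(\cdot))\big)$, apply the ``forward'' Lemma~\ref{l.ef_eq} (Ito's formula for the effective equation) to obtain the averaged equations for $(I(v),\Phi^{(N)}(v))$, and transfer the weak-solution property to the coordinate process on $\cH_{I,V}$ by equality of laws, via the martingale-problem formulation and reconstruction of the driving noise -- exactly the mechanism the paper invokes (with the same references) inside the proof of Lemma~\ref{l.ef_eq}. Both routes are non-circular, since Theorem~\ref{t5.22} and Lemma~\ref{l.ef_eq} are established independently of the proposition; yours is more modular and shorter to state, at the price of the law-transfer bookkeeping, while the paper's is essentially free once the construction is done. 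Two small points of care: (i) the correct identity is $|V_j|^2=\prod_k(2I_k)^{|s_k^{(j)}|}$, with absolute values in the exponents (as written your formula could carry negative exponents), though the conclusion that $|V_j|>0$ under \eqref{stop} is of course unaffected; (ii) when you shrink from the filtration of $v$ to the natural filtration of $(I,V)(v)$, the compensated processes are adapted to the smaller filtration because the drifts and diffusion coefficients are functionals of $(I,V)$, so the tower property does give the martingale property there -- this is the step that makes the transfer by equality of laws legitimate.
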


 Since    the averaged 
quantities $\langle v_k\cdot P_k\rangle_{\Lambda} $ and  $\langle iv_k\cdot
P_k\rangle_{\Lambda} $ are functions of $I$ and $\Pho$ (see \eqref{AAver}), then equations 
 \eqref{5.21} and \eqref{5.22}${}\!\mid_{j\le J}$  form  an under-determined 
 system of equations for the variables   $(I,\Pho)$.
 

By
Theorem \ref{t5.22} the Cauchy problem for the effective equation has a weak solution.
Using 
 Lemma~\ref{l.uniq} and the Yamada-Watanabe argument  (see \cite{KaSh, Yor74, MR99})
we get that  the  equation is well posed:

\begin{corollary}\label{c2}
For any $v_0\in h^r$, eq. \eqref{5.eff} has a unique strong solution $v(\tau)$ such that $v(0)=v_0$. 
Its law  satisfies \eqref{apriori}. 
\end{corollary}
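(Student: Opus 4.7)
The plan is to deduce the corollary from the Yamada--Watanabe principle, combining the weak existence supplied by Theorem~\ref{t5.22} with the pathwise uniqueness supplied by Lemma~\ref{l.uniq}. Concretely, Theorem~\ref{t5.22} produces, on some probability space, a weak solution $v(\tau)$ of \eqref{5.eff} with $v(0)=v_0$ and with trajectories a.s.\ in $C([0,T];h)$, such that the $(I,V)$-law coincides with $\cQ^0$.

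Next I would invoke pathwise uniqueness. The drift $R=R^0+R^1$ is locally Lipschitz on $h$ (apply \eqref{p4} to $R^0$ and use the diagonal form \eqref{R1} for $R^1$), and the noise is additive. By Lemma~\ref{l.uniq}, any two strong solutions on the same filtered probability space with the same initial data and driving Wiener processes coincide up to the first exit from any fixed ball in $h$; the a priori bounds already built into $\cQ^0$ through Lemma~\ref{l7.2}(2) rule out explosion of the $h$-norm on $[0,T]$, so pathwise uniqueness holds on the whole interval.

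Having weak existence and pathwise uniqueness, the standard Yamada--Watanabe theorem (see \cite{KaSh, Yor74, MR99}) then delivers a unique strong solution on the canonical probability space of the driving Wiener processes $\{\bb^k\}$. By uniqueness in law its $(I,V)$-marginal is $\cQ^0$, so the estimates \eqref{apriori} transfer automatically to the strong solution.

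The only subtlety, as I see it, is verifying the Yamada--Watanabe hypotheses in the infinite-dimensional setting; here the situation is friendly because the noise is additive and diagonal, so one may truncate to $N$-dimensional Galerkin approximations $v^N=\Pi^N v$, where the classical finite-dimensional Yamada--Watanabe theorem applies, and pass to the limit using \eqref{apriori} together with the local Lipschitz bound \eqref{p4}. I do not expect this step to pose a real obstacle; the genuine work is already done in Theorem~\ref{t5.22} and Lemma~\ref{l.uniq}.
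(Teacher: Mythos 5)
Your argument is exactly the paper's: weak existence from Theorem~\ref{t5.22}, pathwise uniqueness from Lemma~\ref{l.uniq}, and the Yamada--Watanabe principle to produce the unique strong solution, with the estimates \eqref{apriori} inherited from $\cQ^0$ via Lemma~\ref{l7.2}. The proposal is correct and takes essentially the same route as the paper.
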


\subsection{Averaging for stationary solutions.
}\label{s.stat}
Let $u^\nu(\tau)$ be a stationary solution of eq. \eqref{1.1} as at the end of Section~\ref{s1.1}.\footnote{  Under certain 
restrictions on the equation it is  known that this solution is unique (more exactly, unique is its law).
E.g., see \cite{Sh06}, but we will not discuss this now.}  Solutions $u^\nu$  inherit the a-priori estimates 
\eqref{1.3}, 
\eqref{2.5}, \eqref{2.05}, so still the set of laws 
$\cD(I(u^\nu(\cdot)), V(u^\nu(\cdot)))$, $0<\nu\le1$, is tight in $\cH_{I,V}$ (cf. Lemma~\eqref{l7.2}).  Consider 
any limit
\begin{equation}\label{Conv}
\cD\big(I(u^{\nu_\ell}(\cdot)), V(u^{\nu_\ell}(\cdot)) \big)\strela \cQ\quad \text{as $\nu_\ell\to0$}. 
\end{equation}
As before, the measure $\cQ$ satisfies \eqref{apriori} (with the constants $C_n, C', C^{''}$,  
corresponding  to $v_0=0$).  Moreover, it is stationary in $\tau$. 
Arguing as when proving Theorem \ref{t5.22} in Section~\ref{s3}  we establish

\begin{theorem}\label{t.stat}
 There exists a stationary solution $v(\tau)$ of the effective equation \eqref{5.eff} such that 
$\cQ=\cD\big(I(v(\cdot)), V(v (\cdot)) \big)$. 
\end{theorem}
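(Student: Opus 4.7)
The plan is to adapt the proof of Theorem \ref{t5.22} to the stationary setting, treating $v^\nu(0)=\cF(u^\nu(0))$ as a random initial datum distributed according to $\cF\circ\mu^\nu$.

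First, I would verify tightness of $\cD\bigl(I(u^\nu(\cdot)),V(u^\nu(\cdot))\bigr)$ in $\cH_{I,V}$ exactly as in Lemma \ref{l7.2}: the stationary solutions $u^\nu$ inherit the a-priori estimates \eqref{1.3}, \eqref{2.5}, \eqref{2.05} uniformly in $\nu\in(0,1]$, so any sequence $\nu_\ell'\to 0$ admits a subsequence $\nu_\ell\to 0$ along which \eqref{Conv} holds, with the limit $\cQ$ obeying \eqref{apriori}.

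Next, I would invoke the resonant averaging argument of Theorem \ref{t5.22} from Section \ref{s3}. That proof relies only on the uniform a-priori bounds, It\^o's formula applied to $I_k$ and to the resonant monomials $V_j$, and the algebraic identities \eqref{R_k}, \eqref{R_kk}; it does not exploit the deterministic nature of the initial condition. Applied with the random initial data $v^\nu(0)$, the same construction produces, for each fixed $T>0$, a weak solution $v(\tau)$, $\tau\in[0,T]$, of the effective equation \eqref{5.eff} whose $(I,V)$-law coincides with $\cQ|_{[0,T]}$. Using the strong well-posedness of Corollary \ref{c2} together with a Yamada--Watanabe-type argument, these pieces can be glued into a process $v(\tau)$ on $[0,\infty)$ with $\cD\bigl((I,V)(v(\cdot))\bigr)=\cQ$.

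Finally, stationarity must be transferred from $u^\nu$. The pre-limit laws are invariant under every time shift $\tau\mapsto\tau+s$, so the same holds for $\cQ$ by weak convergence. Uniform tightness of $\cF\circ\mu^\nu$ on $h$ (by Sobolev compactness and the bounds of Section \ref{s1.1}) lets us extract a further subsequence along which $\cF\circ\mu^{\nu_\ell}$ converges weakly to some $\mu^0$ on $h$. A Bogolyubov--Krylov time-averaging for the Markov semigroup of \eqref{5.eff}, combined with the symmetries recorded in Lemma \ref{l.invar} and Corollary \ref{c.invar}, then produces a stationary measure on $h$ whose associated stationary weak solution of \eqref{5.eff} can be taken for $v(\tau)$. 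The chief obstacle is the compatibility step: one must verify that the stationary $v$-process so obtained has $(I,V)$-law exactly equal to $\cQ$ on the whole path space, rather than merely sharing the marginal of $\cQ$ at $\tau=0$. This is where the identification step (Theorem \ref{t5.22} applied to random initial data with law $\mu^0$) is essential, since it forces the full joint law on paths to match $\cQ$ by weak uniqueness of the $(I,V)$-laws of the effective equation.
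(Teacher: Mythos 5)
Your proposal is correct and follows essentially the same route as the paper, which simply notes that the stationary solutions inherit the a-priori estimates (so tightness and \eqref{apriori} persist), observes that the limit $\cQ$ is shift-invariant, and then repeats the proof of Theorem \ref{t5.22} with the random stationary initial data. You are in fact more explicit than the paper about the one delicate point --- transferring stationarity from the $(I,V)$-level to the lifted $v$-process --- and your Bogolyubov--Krylov/Ces\`aro resolution is sound, since each time-shift of the lifted solution still has $(I,V)$-path law $\cQ$ by shift-invariance of $\cQ$ together with the well-posedness of \eqref{5.eff} (the appeal to Lemma \ref{l.invar} is not needed here, and ``weak uniqueness of the $(I,V)$-laws'' should read: the path law is determined by the initial law on $h$).
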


If ${s}\in \Z_0^\infty$ is such that ${s}\cdot\Lambda\ne0$, then Proposition \ref{r34} readily implies that 
\begin{equation}\label{zz}
\Big( \big( I^M\times \Phi^{J(M)}\big) \times (\vp\cdot {s})\Big)\circ \mu_\nu\strela
\Big( \big( I^M\times \Phi^{J(M)}\big)\circ\mu\Big)\times \dbar \theta
\end{equation}
(note that  since solutions $u^\nu$ and $v$ are stationary, then
$\lan \cD u^\nu(\tau)\ran_{\tau_1}^{\tau_2}= \cD u(\tau)$ and the same holds for $v(\tau)$).

\begin{theorem}\label{t.univ}
Let $u^\nu$ be a stationary solution of \eqref{1.1}, $\cD (u^\nu(\tau))\equiv \mu^\nu$, 
and assume that \eqref{5.eff} has a unique stationary measure $\mu$. Then 
 $\mu^\nu\strela\mu$ as $\nu\to0$. 
\end{theorem}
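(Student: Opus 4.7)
The plan is to argue by compactness and rigidity. By tightness of the family $\{\mu^\nu\}_{0<\nu\le 1}$ and Prokhorov's theorem, every subsequence has a weakly convergent sub-subsequence, so it suffices to show that every weak limit $\tilde\mu=\lim_\ell\mu^{\nu_\ell}$ coincides with $\mu$. I will first match the $(I,V)$-projection of $\tilde\mu$ with that of $\mu$ via Theorem~\ref{t.stat}, and then upgrade this match to $\tilde\mu=\mu$ by combining the rotation symmetries of $\mu$ from Corollary~\ref{c.invar} with the stationary non-resonant averaging \eqref{zz}.

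For tightness, the a priori bounds \eqref{1.3}, \eqref{2.05} applied to $u^\nu(\tau)$ give uniform moment estimates for $\|u^\nu(0)\|_r$, hence $\{\mu^\nu\}$ is tight on $\cH^r$. Fix $\nu_\ell\to 0$ with $\mu^{\nu_\ell}\strela\tilde\mu$. The stationary analog of Lemma~\ref{l7.2} (noted before Theorem~\ref{t.stat}) makes $\cD(I(u^{\nu_\ell}(\cdot)),V(u^{\nu_\ell}(\cdot)))$ tight in $\cH_{I,V}$, so after passing to a further subsequence this law converges to some $\cQ$; by Theorem~\ref{t.stat} $\cQ=\cD((I,V)(v(\cdot)))$ for some stationary solution $v(\tau)$ of \eqref{5.eff}, whose time-marginal is a stationary measure of \eqref{5.eff} and therefore coincides with $\mu$ by the uniqueness hypothesis. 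Taking $\tau=0$ gives
\[
(I,V)\circ\tilde\mu \;=\; (I,V)\circ\mu.
\]

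For the rigidity step, fix $N\ge 1$ and compare $\tilde\mu^N=(\Pi^N)_*\tilde\mu$ with $\mu^N=(\Pi^N)_*\mu$ on $\C^N$. It suffices to match, for every bounded continuous $\psi$ and every $s^N\in\Z^N$, the integrals of $\psi(I^N)\,e^{is^N\cdot\vp^N}$ against $\tilde\mu^N$ and $\mu^N$; the discontinuity of $\vp$ at $\Game(h)$ is handled, as in Corollary~\ref{c1}, by multiplying by the cut-off $f_\delta([I^N])$ and then letting $\delta\to 0$, using Lemma~\ref{l5.1} to get $\tilde\mu(\Game(h))=0$. When $s^N\cdot\Lambda\ne 0$, Corollary~\ref{c.invar} combined with the uniqueness of $\mu$ forces $\Psi_{t\Lambda}\circ\mu=\mu$, so the $\mu^N$-integral vanishes, while \eqref{zz} forces the $\tilde\mu^N$-integral to vanish as the weak limit of zero integrals. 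When $s^N\cdot\Lambda=0$ with $s^N\in{}_N\cA^Z$, the phase $e^{is^N\cdot\vp^N}$ is a monomial in the functions $V_j/|V_j|$ on $\{I^N>0\}$, and the two integrals agree by the first step.

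The main obstacle is the residual case of the rigidity step: integer vectors $s^N\in\Lambda^\perp\cap\Z^N$ lying outside ${}_N\cA^Z$ (arising either from the finite torsion $({}_N\cA^R\cap\Z^N)/{}_N\cA^Z$ or from resonances of length exceeding $m$). On the $\mu^N$-side the $({}_N\cA)^\perp$-invariance provided by Corollary~\ref{c.invar} makes these integrals vanish; on the $\tilde\mu^N$-side one must extend \eqref{zz} to joint non-resonant phase combinations, or equivalently identify explicitly the closed subgroup $G_N\subset\T^N$ generated by $\Psi_{t\Lambda^N}$ and $\Psi_{({}_N\cA)^\perp}$, show that $\tilde\mu^N$ asymptotically inherits $G_N$-invariance from $\mu^{\nu_\ell}$, and invoke uniqueness of $G_N$-invariant measures with prescribed $(I^N,V^{(N)})$-marginal.
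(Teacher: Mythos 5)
Your plan mirrors the paper's proof step for step: tightness of $\{\mu^\nu\}$, matching the $(I,V)$-marginal through Theorem~\ref{t.stat} and uniqueness, and then phase rigidity. The only cosmetic difference is that the paper works with the disintegration of $\Pi^N_\vp\circ\mu$ and $\Pi^N_\vp\circ\mu_*$ along the fibers of $L_{{}_N\cA}$, while you compare Fourier coefficients $\hat m(\sigma)$, $\hat m_*(\sigma)$ directly; the content is the same.

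The residual case you flag is the genuine crux, and your worry is justified: integer resonances $\sigma$ with $\sigma\cdot\Lambda=0$ but $\sigma\notin{}_N\cA^R$ do exist. For $d=1$, $q_*=1$ take $\sigma = 4e_2 - e_4$ with $\lambda_2=1$, $\lambda_4=4$: it has length $5>m=4$ and is not in the span of length-$\le 4$ resonances (any combination of $e_1$, $e_2-e_3$, $e_4-e_5$ with $e_2$-coefficient $4$ has $e_3$-coefficient $-4\ne 0$). For such $\sigma$, neither \eqref{zz} (which requires $s\cdot\Lambda\ne 0$) nor the bare statement of Theorem~\ref{t.stat} (which only controls $V_j$'s, i.e.\ resonances of length $\le m$) gives $\hat m_*(\sigma)=0$; the paper's own proof invokes \eqref{zz} at exactly this point and is open to the same objection. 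However, your tentative fix via ``$\tilde\mu^N$ asymptotically inherits $G_N$-invariance from $\mu^{\nu_\ell}$'' is not the right route — $\mu^{\nu_\ell}$ itself is not $\Psi_{\theta}$-invariant for $\theta\in({}_N\cA)^\perp$, since $P^0$ does not commute with such $\Psi_\theta$, so this would demand a fresh averaging estimate. The missing ingredient is the Amplification of Theorem~\ref{t5.22}: its stationary analogue holds by the same argument that gives Theorem~\ref{t.stat}, and it yields $\cD\big(I,V,V^{\tilde s}\big)\circ\mu_* = \cD\big(I,V,V^{\tilde s}\big)\circ\mu$ for \emph{every} $\tilde s\in\Z^\infty_0$ with $\tilde s\cdot\Lambda=0$, with no restriction on $|\tilde s|$. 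Testing against $f_\delta\big([I(V^N)]\big)\,V^\sigma/|V^\sigma|$ and letting $\delta\to0$ via Lemma~\ref{l5.1} then identifies $\hat m_*(\sigma)$ with $\hat m(\sigma)$ for every resonant $\sigma$, and $\hat m(\sigma)=0$ for $\sigma\notin{}_N\cA^R$ follows from Corollary~\ref{c.invar} together with uniqueness of $\mu$. This closes both your ``long resonance'' subcase and the torsion subcase $\sigma\in({}_N\cA^R\cap\Z^N)\setminus{}_N\cA^Z$ in one stroke.
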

\begin{proof}
Let $\{\nu_l\}$ be the sequence in \eqref{Conv}. It suffices to verify that $\mu^{\nu_l}\strela\mu$ as 
$\nu_l\to0$. Since the stationary solutions satisfy \eqref{2.5} with $2m=r, n=1$, then 
$\int \|u\|^2_{r+1}\mu^\nu(du)\le C$ uniformly in $\nu$. So the set of measures $\mu^\nu$ is tight in $h=h^r$,
and replacing $\{\nu_l\}$ by a suitable subsequence $\{\nu_j\}$ we have
$\ 
\mu^{\nu_j}\strela \mu_*,
$
for some measure $\mu_*$.  We have to check that $\mu=\mu_*$. For this it suffices to verify that 
$\Pi^N\circ\mu=\Pi^N\circ\mu_*$, for each $N$ (see \eqref{PiN}).  Consider the operators 
$$
\Pi^N_I(v)=I^N(v), \quad \Pi^N_\vp(v)=\vp^N(v). 
$$
Then by Theorem~\ref{t.stat}, $\Pi^N_I\circ\mu = \Pi^N_I\circ\mu_*$.  Since $\mu(\Game)=\mu_*(\Game)=0$
by Lemma~\ref{l5.1},  then we can regard $\Pi^N\circ\mu$ and $\Pi^N\circ\mu_*$ as measures on $\R^N_+\times\T^N$. 
Let us take any bounded continuous function $p(I^N)\ge0$ such that 
$
\int p(I^N)\,d(\Pi^N_I\circ\mu)=1,
$
then also 
$
\int p(I^N)\,d(\Pi^N\circ\mu_*).
$
Consider two measures on $\T^N$:
$$
m=\int p(I^N)(\Pi^N_I\circ\mu)\, dI^N,\qquad  m_*=\int p(I^N)(\Pi^N_I\circ\mu_*)\, dI^N.
$$
It remains to check that they are equal (for any $N$ and any $p(I^N)$ as above).

Let us consider the mapping 
$$
L_{{}_N\cA} :\T^N\to \T^r=\{\xi\},\qquad r=r(N)
$$
(see \eqref{f0}). Its fibers are the tori $\T_\xi^{N-r}\sim \Tnr$ as in \eqref{f2}. Accordingly, we can decompose 
the measures $m$ and $m_*$
as 
$$
m=\rho^\xi(d\zeta)\,\pi(d\xi),\qquad m_*=\rho_*^\xi(d\zeta)\,\pi_*(d\xi),
$$
where $\pi=L_{{}_N\cA} \circ m$, $ \pi_*= L_{{}_N\cA} \circ m_*$
are measures on $\T^r$ and $\rho^\xi, \rho^\xi_*$ are measures on the fibers 
 $\Tnr_\xi$, $\xi\in\T^r$. 
Using again Theorem~\ref{t.stat} we get for any vector $s\in {}_N\cA$ and any bounded 
continuous function $f$ that 
$$
\int_{\T^N} f(s\cdot\vp)\big( \Pi^N_\vp\circ\mu\big)(d\vp)=
\int_{\T^N} f(s\cdot\vp)\big( \Pi^N_\vp\circ\mu_*\big)(d\vp).
$$
So
$
\pi=L_{{}_N\cA}\circ m=L_{{}_N\cA}\circ m_*=\pi_*.
$

Now consider the fibers $\Tnr_\xi$ and measures $\rho^\xi$ and $\rho^\xi_*$ on them. Recall that translations
of the tori $\Tnr_\xi$ by vectors from $\R^{N-r}$ are well defined operators. 
  By \eqref{zz} translation of the measure $m_*$ by any 
vector $s^\perp\in({}_N\cA)^\perp$ will not change it. Therefore 
$$
\rho_*^\xi(d(\zeta+s^\perp))\, \pi_*(d\xi) = \rho_*^\xi(d\zeta)\, \pi_*(d\xi) \qquad \forall\, s^\perp\in({}_N\cA)^\perp\simeq\R^{N-r}. 
$$
That is,  $\rho_*^\xi$ is a translation-invariant measure on $\Tnr_\xi$. So   $\rho_*^\xi= \dbar\zeta$, for each $\xi$. 

Similar, since by Corollary \ref{c.invar} the unique stationary measure $\mu$ is invariant with respect to rotations
$\Psi_{s^\perp}, s^\perp\in ({}_M\cA)^\perp$, then the measure $m$ is invariant with respect to translations by
vectors from $({}_M\cA)^\perp$. As before, this implies that  also $\rho^\xi\equiv  \dbar\zeta$. So $m=m_*$,
which proves the theorem.
\end{proof}

\noindent
{\it On the energy spectrum of stationary measures for \eqref{5.eff}.}  From \eqref{integrals} we know that 
$\cHR$ has infinitely many quadratic integrals of motion  of the form
\begin{equation}\label{Hint}
H(v)=\frac12\sum_{j=1}^\infty \alpha_j |v_j|^2=\sum \alpha_jI_j.
\end{equation}
Let $v(\tau)$ be a stationary solution of \eqref{5.eff} as  in Section~\ref{s1.1},
$\cD v(\tau)\equiv \mu$.  Applying the Ito formula to $H(v(\tau))$, taking the expectation and using 
\eqref{integrals} we get 
$$
0= \E\big(dH(v)(R^1(v)+R^0(v))  +\sum_j b_j^2d^2 H(v)(e_j,e_j)\big)=
-\sum_j\alpha_j \gamma_j \E |v_j|^2 + \sum \alpha_j b_j^2.
$$
Denoting ${\cE}_j=\E\tfrac12|v_j|^2$, we obtain
$$
2\sum_j \cE_j\gamma_j\alpha_j =\sum_j b_j^2\alpha_j\quad \text{for any $(\alpha_1,\alpha_2,\dots)$ as in \eqref{Hint}. 
}
$$
This is an under-determined system of linear equations for the vector $(\cE_1,\cE_2,\dots)$. The vector
 defines the energy spectrum  $E_r$ of $\mu$ after the additional averaging along the shells 
 $\{ |\bk| \approx\const\}$). The system above impose restrictions on the energy spectrum, but does not
 determine it.

\section{Explicit calculation}\label{s2.5}
We intend here to calculate explicitly the  effective equation
\eqref{5.eff}, keeping  track of the dependence on the size $L$ of the
torus. To do that, it is
convenient to use the natural parametrisation of the exponential basis 
by  vectors $\bk\in\Z^d$; that is, decompose functions $u(x)$ to Fourier series, 
$
u(x)= \sum_{\bk\in \Z^d}v_\bk e^{i L^{-1}\bk\cdot x}\  .
$
We re-define the norms $|\cdot |_{h^p}$ accordingly :
$$
\left\| u\right\|^2_p =(2\pi L)^d\sum_{\bk\in\Z^d}\left(\frac
{|\bk|\vee 1}{L}\right)^{2p} |v_\bk|^2=:
\left|v\right|^2_{h^p}\ .
$$
Now, as in the Introduction,  the eigenvalues of the minus-Laplacian are
$\lambda_\bk=|\bk|^2 /L^2$.

In the $v$-coordinates the nonlinearity becomes the mapping $v\mapsto
P^0(v)$, whose $\bk$-th component is
$$
P^0_\bk(v)=-i\gi \sum_{\bk_1,\ldots \bk_{2q_*+1}\in \Z^d}
v_{\bk_1} \cdots  v_{\bk_{q_*+1}} \bar v_{\bk_{q_*+2}}\cdots \bar v_{\bk_{2q_*+1}}
\delta^{1\ldots q_*+1}_{q_*+2\ldots 2q_*+1\, \bk}\ 
$$
(see \eqref{N1}). Accordingly, 
\begin{equation}\label{eq:example}
v_\bk\cdot P^0_\bk
= \gi \sum_{\bk_1,\ldots \bk_{2q_*+1}\in \Z^d} \Im\,(v_{\bk_1} \cdots
v_{\bk_{q_*+1}} \bar v_{\bk_{q_*+2}}\cdots \bar v_{\bk_{2q_*+1}} 
\bar v_\bk )
\delta^{1\ldots q_*+1}_{q_*+2\ldots 2q_*+1\, \bk}
\end{equation}
In order to calculate the resonant average, we first  notice that $v_\bk\cdot
P^0_\bk$ can be written as a series \eqref{xx}, where $|C_{pql}| \le 1$ and
$|q|+|p|+|l|=2q_*+2$. In this case the sum
in the l.h.s. of \eqref{xxx} is bounded by
$$
C \left( \sum_{\bk\in\Z^d}|v_\bk|\right)^{2q_*+2}\le C_1(L)|v|_p^{q_*+1} \left(
\sum_{\bk\in\Z^d}|\bk|^{-2p} \right)^{q_*+1}.
$$
So the condition \eqref{xxx} is met if $2p>d$. 

Since the order of the resonance $m=2q_*+2$, then  $\lan v_\bk\cdot P^0_\bk\ran_\Lambda(v)$
equals 
$$
 \gi \sum_{\bk_1,\ldots \bk_{2q_*+1}\in \Z^d}
\Im\,(v_{\bk_1} \cdots  v_{\bk_{q_*+1}} \bar v_{\bk_{q_*+2}}\cdots \bar
v_{\bk_{2q_*+1}}\bar v_\bk)
\delta^{1\ldots q_*+1}_{q_*+2\ldots 2q_*+1\, \bk}\delta(\lambda^{1\ldots
  q_*+1}_{q_*+2\ldots 2q_*+1\, \bk})\ ,
$$
(see \eqref{N2}).
This  follows from \eqref{eq:example} and
\eqref{La_aver} if one notes  that  appearing  there
 restriction
$(q-l)\cdot \Lambda=0$ is now  replaced 
by
 the factor $\delta(\lambda^{1\ldots q_*+1}_{q_*+2\ldots 2q_*+1\, \bk})$.
In a similar way, we see that the quantity $R^0_k$ , entering  equation \eqref{5.eff},  takes the form
$$
R^0_\bk(v)= -i\gi \sum_{\bk_1,\ldots \bk_{2q_*+1}\in \Z^d}
v_{\bk_1} \cdots  v_{\bk_{q_*+1}} \bar v_{\bk_{q_*+2}}\cdots \bar
v_{\bk_{2q_*+1}}
\delta^{1\ldots q_*+1}_{q_*+2\ldots 2q_*+1\, \bk}\delta(\lambda^{1\ldots
  q_*+1}_{q_*+2\ldots 2q_*+1\, \bk})\ .
$$
Taking into account that $R^1_\bk=-\lla_\bk v_\bk$, we finally arrive at an explicit formula
for the effective equation 
\eqref{5.eff}:
 \begin{equation}\label{explicit}
\begin{split}
&dv_\bk= \Bigl(-\lla_\bk v_\bk \\
&-i\gi \sum_{\bk_1,\ldots \bk_{2q_*+1}\in \Z^d}
v_{\bk_1} \cdots  v_{\bk_{q_*+1}} \bar v_{\bk_{q_*+2}}\cdots \bar
v_{\bk_{2q_*+1}}
\delta^{1\ldots q_*+1}_{q_*+2\ldots 2q_*+1\, \bk}\delta(\lambda^{1\ldots
  q_*+1}_{q_*+2\ldots 2q_*+1\, \bk})\Bigr)d\tau \\
&+b_\bk d\bb^\bk\ , \qquad \bk\in \Z^d\ .
\end{split}
\end{equation}
Due to \eqref{p2},
$$
R^0_k(v) = i\rho \nabla_{v_k}\cH^{\text{res}}(v)=
 2i\rho \frac{\p}{\p \bar v_k}
 \cH^{\text{res}}(v).
$$
Therefore eq. \eqref{explicit}  can be written as the damped--driven hamiltonian
system  \eqref{*eff1}.
\medskip

\noindent 
{\it Examples.} 
a) If $q_*=1$, then \eqref{explicit}  reads
\begin{equation*}
\begin{split}
dv_\bk= \Big(-\lla_\bk v_\bk 
-i\gi \sum_{\bk,\bk',\bk''\in \Z^d}
v_{\bk}  v_{\bk'} \bar v_{\bk''}
\delta_{\bk+\bk'\,,\,\bk''+
  r}\,\delta_{\lambda_\bk+\lambda_{\bk'}\,,\,\lambda_{\bk''}+
  \lambda_\bk}\Big)  d\tau
+b_\bk d\bb^\bk\ , 
\end{split}
\end{equation*}
where  $\bk\in \Z^d$. 
If $f(t)=t+1$, then this equation  looks similar to the CGL equation 
$$
\dot u-\Delta u+u= i|u|^{2}u +\frac{d}{d\tau}\sum b_\bk \bb^\bk(\tau)e^{i\bk\cdot x}, 
$$
written in the Fourier coefficients. The last equation 
 possesses nice analytical properties; e.g.  its stationary 
measures is unique for any $d$, see \cite{KNer13}. 

 In order to get simple but nontrivial
examples, consider one-dimensional equations.

b) Let in \eqref{1.111}   $q_*=2$. Then the effective equation reads
\begin{equation*}
\begin{split}
dv_k= &-\lla_k v_k d\tau +b_kd\bb^k 
\\
&-i \gi\sum_{k_1,k_2,k_3,k_4,k_5\in \Z}
v_{k_1}  v_{k_2} v_{k_3} \bar v_{k_4} \bar v_{k_5}
\delta_{k_1+k_2+k_3\,,\,k_4+k_5+k}\,\delta_{k_1^2+k_2^2+k_3^2\,,\,k_4^2+
  k_4^2+k^2}  d\tau\,.
\end{split}
\end{equation*}

c) Our results remain true if the Hamiltonian $\cH$, corresponding to the nonlinearity in \eqref{1.111},
has variable coefficients. In particular, let $d=1$ and  the  nonlinearity  in   \eqref{1.111}  is replaced by 
$-i p(x) |u|^{2}u$ with a sufficiently smooth function $p(x)$.
Then the effective equation is 
$$
dv_k= \Big(-\lla_k v_k -i \sum_{k_1,k_2,k_3,k_4\in \Z}
v_{k_1}  v_{k_2} \bar v_{k_3} p_{k_4}
\delta_{k_1+k_2+k_4\,,\,k_3+k}\,\delta_{k_1^2+k_2^2\,,\,k_3^2+
  k^2}\Big)  d\tau\,
+b_kd\bb^k\ ,
$$
where  $k\in \Z$ and  $p_k$'s are the Fourier coefficients of $p(x)$.

\section{Proof of  Theorem \ref{t5.22} and Proposition \ref{p.slow}
}\label{s3}

\subsection{Proof  of the theorem and the proposition }\label{s5.22}
First step of the proof is traditional:

\noindent
\textbf{Step 1.} {\it Redefining the equations for large amplitudes.}
For any $P\in \N$ consider the stopping time
\begin{equation}\label{ta_P}
\tau_P= \inf\left\{\tau\in[0,T]\Big| |v(\tau)|^2_{h^r}=P\right\} 
\end{equation}
(here and in similar situations below $\tau_P=T$ if the set is
empty). For $\tau\ge \tau_P$ and each $\nu > 0$ we redefine equations
\eqref{5.1} to the trivial system
$$
dv_k= b_kd\bb^k\ ,\qquad k\ge 1\ .
$$
Thus we have modified equations \eqref{5.1} to  the system 
\begin{equation}\label{5.11}
\begin{split}
&dv_k= \chi_{\tau\le \tau_P}\left(-i\nu^{-1}\lambda_kv_kd\tau +P_k(v)\,d\tau  \right)  +\,
b_kd\bb^k(\tau)
\ ,\quad k\ge 1\ ,\\
&v(0)=v(u_0). 
\end{split}
\end{equation}
Similar to $v^\nu$, a solution  $v^\nu_P$  of \eqref{5.11}
meets the estimates
\begin{equation}\label{gafa3.10}
\E \sup_{0\le\tau\le T}|I(\tau)|_{h^m_I}^M= \E \sup_{0\le\tau\le
  T}|v(\tau )|_{h^m}^{2M}\le C(M,m,T)\ ,
\end{equation}
uniformly in $\nu\in(0,1]$.

Replacing the sequence  $\nu_\ell\to 0$  in \eqref{5.88} by a 
 suitable subsequence we
achieve that also $\cD(I_P^{\nu_\ell}(\cdot),
V_P^{\nu_\ell}( \cdot))\strela \cQ_P$ for some law $\cQ_P$, for any
$P\in\N$. Clearly, for the limiting  laws $\cQ_P$ estimates \eqref{gafa3.10} still hold.

\begin{lemma}
For each $P\in\N$, one has $\cQ_P=\cQ^0$ for $\tau\le \tau_P$ (that
is, images of the two measures under the mapping
$(I,V)(\tau)\mapsto (I,V)(\tau\wedge \tau_P)$ are equal), and
$\cQ_P((I,V)(\cdot))\strela \cQ^0((I,V)(\cdot))$ as $P\to\infty$.
\end{lemma}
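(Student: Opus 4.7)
The plan is to exploit that for each fixed $\nu>0$ the equations \eqref{5.1} and \eqref{5.11} are driven by the same Wiener processes and coincide literally on $[0,\tau_P]$, and then push this pathwise identity through the weak limits defining $\cQ^0$ and $\cQ_P$. A key preliminary observation is that $|v|^2_{h^r}=|I|_{h^r_I}$, so $\tau_P$ is a measurable functional $\tau_P[I]$ of the $I$-component alone and hence defines a stopping operator
$$
S_P:\cH_{I,V}\to\cH_{I,V},\qquad (I,V)(\cdot)\mapsto (I,V)(\cdot\wedge\tau_P[I]),
$$
directly on the path space carrying our measures.

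First I would invoke pathwise uniqueness for \eqref{5.1} (from the regularity statement after \eqref{2.05}) and for its stopped modification \eqref{5.11} to conclude that $(I^\nu,V^\nu)(\tau)=(I^\nu_P,V^\nu_P)(\tau)$ for $\tau\le\tau_P$, hence $S_P\circ\cD(I^\nu,V^\nu)=S_P\circ\cD(I^\nu_P,V^\nu_P)$ for every $\nu>0$. Passing to the weak limit along $\nu_\ell\to0$ requires $S_P$ to be continuous almost everywhere with respect to both $\cQ^0$ and $\cQ_P$; the only possible failure is at paths that touch the level $P$ tangentially. I would handle this in the standard way by noting that at most countably many levels $P$ carry positive mass of $\sup_\tau|I(\tau)|_{h^r_I}$ under either limit measure, and using monotonicity of $P\mapsto\tau_P$ to deduce the statement for arbitrary $P$ by approximation from non-exceptional $P'\downarrow P$. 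The Portmanteau theorem then yields $S_P\circ\cQ^0=S_P\circ\cQ_P$, which is precisely the first claim.

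For the second claim $\cQ_P\strela\cQ^0$ as $P\to\infty$, I would use the uniform a priori bound \eqref{gafa3.10}, which transfers to the limit measures via Fatou (cf.~\eqref{apriori}), to estimate, for every $n\in\N$,
$$
\cQ^0\{\tau_P<T\}\,\vee\,\cQ_P\{\tau_P<T\}\ \le\ \PP\!\Big\{\sup_{[0,T]}|I|_{h^r_I}\ge P\Big\}\ \le\ \frac{C_n}{P^n}.
$$
Since $\cQ^0$ and $\cQ_P$ coincide on $\{\tau_P=T\}$ by the first claim, their total-variation distance is dominated by twice this probability and therefore tends to $0$ as $P\to\infty$, yielding the desired (and in fact stronger) convergence. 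The main obstacle I foresee is the tangential-contact issue for $S_P$ noted above; once it is dispatched by the countable-bad-levels trick (equivalently, by smoothing $\chi_{\tau\le\tau_P}$ into a continuous cutoff of $|v|^2_{h^r}$ and letting the cutoff shrink), the rest is routine manipulation of stopped diffusions together with the inherited a priori estimates.
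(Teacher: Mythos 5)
Your proposal is correct and follows essentially the same route as the paper's (two-line) proof: the laws of $v^\nu$ and $v^\nu_P$ coincide up to $\tau_P$ for each $\nu$, one passes to the weak limit, and the uniform bound $\PP^\nu\{\tau_P<T\}\le CP^{-1}$ coming from \eqref{gafa3.10} gives the convergence as $P\to\infty$. You merely supply details the paper leaves implicit — the a.e.-continuity of the stopping map $S_P$ (handled by the countable-exceptional-levels argument) and the observation that $\tau_P$ is a functional of $I$ alone — both of which are sound.
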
 
\noindent
{\it Proof. } Since $\cD(v^\nu_P)=\cD(v^\nu)=: \PP^\nu$ for $\tau\le
\tau_P$, then passing to the limit as $\nu_\ell\to 0$ we get the first
assertion of the lemma. As $\PP^\nu\{\tau_P  < T\}\le CP^{-1}$
uniformly in $\nu$ (cf. \eqref{gafa3.10}), then the last assertion follows.
\qed
\medskip

\noindent 
\textbf{Step 2.} {\it Galerkin-like approximation for the effective equation.}
Let us fix  any $N\in\N$. Given a process $(I(\tau), V(\tau))$ such that its law equals $\cQ^0$, we
 consider a Galerkin-like approximation of order $N$
 for the effective equation \eqref{5.eff} with $\tau\ge0$ 
  of the form 
$v(\tau)=(v^N(\tau),  v^{>N}(0) )$, 
where $ v^{>N}=( v_{N+1}, v_{ N+2},\dots)$,  and
 $v^N(\tau)=(v^N_1,\ldots,v^N_N)  (\tau)\in\C^N$ solves
\begin{equation}\label{5.effgal}
d v_k^N(\tau)= R_k(v) d \tau +{b_k}d \bb^k(\tau)\,, 
\qquad k\le N\ .
\end{equation}
Here $R_k(v)=R^1_k(v^N)+ +R^0_k(v)$, where $R^1_k(v)=-\gamma_kv_k$ and
$R^0_k(v)=\frac{v^N_k}{2I_k} R^-_k(I,V)$ (see \eqref{ura2}).  The argument $(I,V)$ is understood as follows:
\begin{equation}\label{arg}
(I,V)=\big(I^N(v(\tau)), I^{>N}(\tau), V^{(N)}(v(\tau)), V^{>(N)}(\tau)\big). 
\end{equation}
Here $I^N(v(\tau))=\big(I_1(v(\tau)), \dots, I_N(v(\tau))\big)$, while $ I^{>N}(\tau) =(I_{N+1}(\tau), dots)$ components 
of the process $(I(\tau), V(\tau))$, and similar with $ V^{(N)}(v(\tau))$ and $V^{>(N)}(\tau)$.

Our   goal is to construct a process $\big((I(\tau),V(\tau)), v^N(\tau)\big)$ such that  

(a) $\cD(I(\cdot), V(\cdot))=\cQ_P$, 

(b) $(I^N,V^{(N)}) (v^N(\tau))=(I^N(\tau), V^{(N)}(\tau))$ a.s.,

(c) $v^N$ is a weak solution of the modified  system 
\begin{equation}\label{5.effgalP}
d v_k^N(\tau)=  \chi_{\tau\le \tau_P}  R_k(v) d \tau 
+ b_kd\bb^k\ ,
\quad k\le N\ ,
\end{equation}
where the agreement \eqref{arg} holds. 
Denote $\PP(N,P)=\cD\big((I(\cdot), V(\cdot)), v^N(\cdot) \big)$. After these measures are 
constructed for each $N$ and $P$, we will obtain a required weak solution of \eqref{5.eff}
 as a $v$-component of the  limit of  $\PP(N,P)$  as
 $N\to \infty$ and $P\to \infty$.

 Transition to the limit is not complicated, but  construction of a process $v^N$ as above is 
 rather technical. We start with its sketch.  Let us denote 
$$
[I]=\min_{1\le k\le N} I_k 
$$
and set $\tau_*=\min\{\tau\le T:[I(\tau)]=0\}$. Construction of $\PP(N,P)$ for $\tau<\tau_*$ is 
sufficiently natural:
using Ito's formula we rewrite   \eqref{5.effgal} as a system of equations for $I^N$ plus a system of 
equations for $\vp^N$, see below \eqref{_1},~\eqref{_2}, where $k\le N$. These equations
are equivalent to the system \eqref{_4},~\eqref{_5}, $k\le N$.  The $I$-
equations \eqref{_4} do not include the angles $\vp^N$ and their coefficients depend only
on $I$ and $V$, which are known quantities with the law $\cQ_P$.
The process $I^N$ turns out to be 
 its solution (this is proven in Lemma \ref{l5.media} below). 
The $\vp$-equations \eqref{_5}, $k\le N$, make a  functional SDE on $\T^N$ 
with Lipschitz coefficients, and they have a unique solution $\vp^N$. So we have constructed a weak
solution $v^N$ for $\tau<\tau_*$. 

A posteriori we know that $I_k(\tau)$ vanishes when vanishes $v_k(\tau)\in\R^2$. It seems  unlikely that 
$v_k(\tau)=0$ for some $\tau>0$ since this is a stochastic process in $\R^2$, and it is well known that a
Wiener process in $\R^2$, starting from the origin, will never vanish for the second time, a.s.  Same is true 
for two-dimensional diffusion processes.
But
$v_k$ is an Ito process, and  is more complicated then a two-dimensional diffusion. 
A celebrated 
example due to  N.~Krylov (see in \cite{Kry77}) shows that such processes may pass through the origin 
many times with a positive probability. So we cannot rule out that $\tau_*<T$ with a positive 
probability, and  have to make a construction of $v^N$ more complicated. Namely, we
 fix any $\delta>0$, denote
by $\Lambda(\delta)$ the random set $\{[I]\ge\delta\}$ and denote $\Delta(\delta)=[0,T]\setminus \Lambda(\delta)$ (exact 
definition is a bit more complicated, see below). On segments, forming $\Delta(\delta)$, we construct  $v^N$ as above, and on intervals, forming $\Delta(\delta)$, define $v^N$ ``somehow", keeping the 
process $v^N$ continuous and satisfying  the property (b).  Thus constructed process $v^N$ will be called 
a {\it $\delta$-solution of \eqref{5.effgalP}. }

We know from Lemma~\ref{l5.1} that the set 
$\Delta(\delta)$ in average disappears 
with $\delta$. So  going to a limit as $\delta\to0$  we get that $\delta$-solutions converge 
to a process $v^N$ which solves
 eq.~\eqref{5.effgal} on the whole segment $[0,T]$ and satisfies (a) and (b). Rigorous realisation 
 of this construction turns out to be involved. It is given below.
\medskip

\noindent 
\textbf{Step 3.} {\it Construction of $\delta$-solutions.}
We denote
$$
\ho=\cH_{I,V}\times  C([0,T],\C^N)=\{(I,V),v^N\}.
$$
This is a complete separable metric space, and we 
  provide it with  the Borel sigma-algebra and with 
 the natural filtration of sigma-algebras $\{\hat\cF_\tau\}$.

 Find a  measurable process $(I(\tau), V(\tau))$ on $\widehat\Omega$, adapted to the 
 filtration, which has  the law $\cQ_P$. Define the stopping times $0\le\theta_n^\pm\le T$
such that $\ \ldots<\theta_n^-<\theta_n^+<\theta_{n+1}^-<\ldots$  and 
$\theta_0^+=0$ if $[I(0)]>\delta$ while $\theta_1^-=0$ if  $[I(0)]\le\delta$. Their construction
is clear from Fig.~1, where we assume that $[I(0)]>\delta$. It is easy to see that a.s. these 
stopping times stabilise at $T$ when $n\to\infty$.



\begin{figure}
\centering
\includegraphics[width=4.5in]{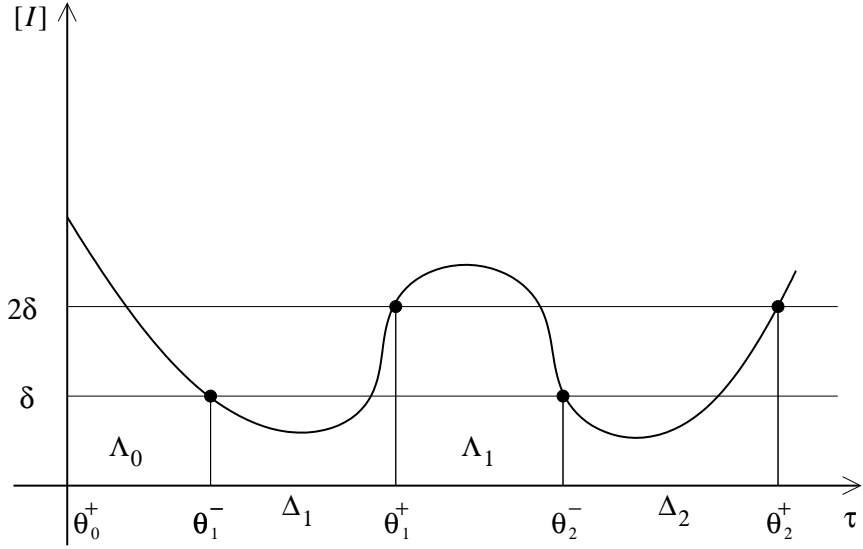}
\caption{Stopping times $\theta_n^\pm$}
\end{figure}


We denote $\Delta_n=[\theta_n^-,\theta_n^+]$, $\Lambda_n= [\theta_n^+,
  \theta_{n+1}^-]$ and set
\begin{equation}\label{sets}
\Delta=\Delta(\delta)=\cup\Delta_n\ , \quad \Lambda=\Lambda(\delta)=
\cup \Lambda_n\ .
\end{equation}
Our goal at this step is to get a measure $\PP_\delta$  on the space $\widehat \Omega$,
which is a law of a process $((I,V),v^N)$ satisfying (a), (b) on $[0,T]$ and (c) on $\Lambda(\delta)$, 
as well as the following property:

(d) $I(\tau)$ is a weak solution of \eqref{5.21}  for $0\le\tau\le \tau_P$, while for 
$\tau\in\Lambda(\delta)\cap[0,\tau_P]$ 
the process $\big(I,\Phi^{(N)}(V)\big)$ is a weak solution of equations \eqref{5.21}, 
\eqref{5.22}${}\mid_{j\le J(N)}$.

For segments $[0,\theta_n^-]$ and $[0,\theta_n^+]$, which we denote
below $[0,\theta_n^\pm]$, we will iteratively construct   processes 
${}^{\pm, \,n}\Upsilon=(( I,  V),  v^N)$, where $v^N$ is 
stopped at $\tau=\theta_n^\pm$, 
 such that
 \begin{equation}\label{prop}
 \begin{split}
 \text{
 the process} &\text{ $\ {}^{\pm, \,n}\Upsilon$  satisfies (a) and (d),  satisfies (b) for $\tau\le\theta_n^\pm$},\\
&\text{ and satisfies (c)   for $\tau\in[0,\theta_n^\pm]\cap \Lambda\cap [0,\tau_P]$.
} \end{split}
\end{equation}
  Next we will obtain a desired  measure
$\PP_\delta$ as a limit of the laws of  $\ {}^{\pm, \,n}\Upsilon$  as $n\to \infty$.

For the sake of definiteness assume that $0=\theta^+_0$.

\textbf{a)}  We first show how to construct the
process $\,  {}^{-,\, 1\!}\Upsilon =((I, V),v^N) $ (where $v^N$ is stopped at $\tau=\theta_1^-$):

\begin{lemma}\label{l.2.5}
For any positive $\delta$ there exists a process $\,  {}^{-,\, 1\!}\Upsilon =((I, V),v^N)$ 
 such that  $v^N(0)=v_0^N$, $v^N(\tau)$ is stopped at $\tau=\theta_1^-$, and \eqref{prop}
 holds.
\end{lemma}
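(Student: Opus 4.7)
My plan is to first identify the averaged martingale problem satisfied by $(I,V)\sim\cQ_P$ on $[0,\theta_1^-]$, and then lift it to $v^N$ by reconstructing the phases pathwise on the ``good'' interval where $[I]\ge\delta$ keeps all averaged coefficients bounded.

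\textbf{Step 1 (averaged martingale problem).} Applying Ito's formula to $I_k(v^{\nu_\ell})$ and $\Phi_j(v^{\nu_\ell})$ for $k\le N$, $j\le J(N)$ yields the slow equations \eqref{5.2} and \eqref{5.ris}. For $\tau\in[0,\theta_1^-]$ the singular factors $|v_k|^{-2}$ are bounded by $\delta^{-1}$ via \eqref{5.6}, and the drifts are trigonometric polynomials of degree $\le m=2q_*+2$ in the fast angles $\vp$, which rotate with speed $\nu_\ell^{-1}\Lambda$. Lemmas~\ref{l.aver}--\ref{l.aaver} then imply that non-resonant harmonics contribute $o(1)$ to any time integral while resonant ones reproduce the $\langle\cdot\rangle_\Lambda$-drifts. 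Passing to the weak limit $\nu_\ell\to 0$ through the Skorokhod representation and the standard martingale characterization of weak SDE solutions, we conclude that $(I,V)\sim\cQ_P$ makes $I$ a weak solution of \eqref{5.21} on $[0,\tau_P]$ and $(I,\Phi^{(N)})$ a weak solution of \eqref{5.22}${}_{j\le J(N)}$ on $[0,\theta_1^-\wedge\tau_P]$. This is property~(d).

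\textbf{Step 2 (reconstructing $v^N$).} The martingale representation theorem on an enlargement $\widetilde\Omega\supset\ho$ produces independent real Wiener processes $\beta^k,\beta^{-k}$ ($k\le N$) that drive the martingale parts of \eqref{5.21} and of \eqref{5.22}${}_{j\le J(N)}$, respectively. On $[0,\theta_1^-]$, for each $k\le N$ set
\begin{equation*}
\vp_k(\tau)=\vp(v_{0,k})+\int_0^\tau\frac{\Im R_k^-(I(s),V(s))}{2I_k(s)}\,ds+\int_0^\tau\frac{b_k}{\sqrt{2I_k(s)}}\,d\beta^{-k}(s),
\end{equation*}
which is well defined thanks to $[I]\ge\delta$, and put $v_k^N(\tau)=\sqrt{2I_k(\tau)}\,e^{i\vp_k(\tau)}$, stopped at $\theta_1^-$. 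Using \eqref{R_k}, \eqref{R_kk} and the identification $\bb^k=\beta^k+i\beta^{-k}$, a direct complex Ito computation shows that $v^N$ satisfies the modified Galerkin equation \eqref{5.effgalP} on $[0,\theta_1^-\wedge\tau_P]$, giving~(c).

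\textbf{Step 3 (matching and conclusion).} Property (a) is preserved by the enlargement, and (b) for the actions is built into $|v_k^N|^2=2I_k$. For the resonant monomials both $s^{(j)}\cdot\vp^N$ and $\Phi_j(V)$ satisfy the same SDE \eqref{5.22}${}_{j}$ driven by the same $\beta^{-k}$'s, with matching initial data $s^{(j)}\cdot\vp(v_0^N)=\Phi_j(v_0)$; pathwise uniqueness (the coefficients are Lipschitz on $\Lambda_0$) forces $s^{(j)}\cdot\vp^N\equiv\Phi_j$ a.s.\ on $[0,\theta_1^-]$, which combined with the matched actions yields $V_j(v^N(\tau))=V_j(\tau)$ for every $j\le J(N)$.

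The chief obstruction is Step~1: rigorously controlling the singular factor $|v_k|^{-2}$ under the weak limit while showing that the non-resonant fast oscillations average to zero uniformly in $\nu_\ell$. Restricting to the interval where $[I]\ge\delta$ renders the singularities harmless, but the substance of the matter lies in the uniform-in-$\nu_\ell$ convergence coming from the resonant Kronecker--Weyl Lemma~\ref{l.aver} applied to the finite-degree drift ($m=2q_*+2$); once this averaging step is in hand, Steps~2--3 are essentially mechanical.
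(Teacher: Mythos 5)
Your decomposition --- (1) identify the averaged martingale problem satisfied by $(I,\Phi^{(N)})$ under $\cQ_P$ on $[0,\theta_1^-\wedge\tau_P]$, (2) rebuild the phases from the given slow data and the Brownian motions supplied by martingale representation, (3) close the loop by matching $s^{(j)}\cdot\vp^N$ with $\Phi_j$ --- is essentially the paper's proof: the paper likewise invokes the averaging statement (its Lemma~\ref{l5.media}), then solves the $\vp$-equations \eqref{_5} as a Lipschitz system on $\{[I]\ge\delta\}$ and identifies $I^N(v^N)=I'^N_P$, $V^{(N)}(v^N)=V'^{(N)}_P$ by uniqueness. Your variant of writing $\vp_k$ as an explicit It\^o integral with the \emph{given} $(I,V)$ in the drift, rather than as a functional SDE whose coefficients depend on $V(v^N)$, is legitimate and if anything slightly cleaner, since then $s^{(j)}\cdot\vp^N$ and $\Phi_j$ are literally the same stochastic integral.

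The one place where your write-up falls short of a proof is Step~1, and you correctly flag it. Lemma~\ref{l.aver} (resonant Kronecker--Weyl) applies to the rigid rotation $\vp+tW$ with frozen amplitudes; the actual angles $\vp^\nu(\tau)$ carry order-one drift and diffusion corrections with singular factors $|v_k|^{-2}$, and the slow variables $(I,V)$ themselves evolve. Bridging this requires the Khasminski discretization carried out in the paper's Lemma~\ref{l2.3ihp}: partition $[0,T]$ into intervals of length $L=\sqrt\nu$, optimize the phase $\tau_0$ so that the small-action events $\cE_l$ are rare (this is where Lemma~\ref{l5.1} enters --- note the events involve modes $N<k\le M$ for a large cut-off $M$, not just $k\le N$, since the drift of the first $N$ modes depends on all of $v$), freeze the slow variables on each interval, and only then apply Lemma~\ref{l.aver} to the linearized angle flow. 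Restricting to $\{[I]\ge\delta\}$ tames the singularities of the equations you write down, but it does not by itself make the non-resonant harmonics integrate to $o(1)$ uniformly in $\nu$; that is the content of the discretization argument, which your proposal names but does not supply.
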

\noindent
{\it Proof. }  On the filtered probability space $\ho$  let us  choose  continuous processes
$v_P^{' \nu}(\tau)$ and $ (I_P', V_P')(\tau)$, adapted to the filtration,
such that 
$\cD(v_P^{'\nu})=\cD( v_P^{\nu})$ and $\cD((I_P^{'}, V_P^{'}))=Q_P$. Then 
\begin{equation}\label{converg}
\cD\big(
(I,V)({v'_P}^\nu(\cdot)) \big) \strela \cD( (I'_P,V'_P)(\cdot))\quad \mbox{as }
  \nu=\nu_\ell\to 0.
\end{equation}
To simplify notation we will usually drop the prime and  sub-index $P$.


First we demonstrate the lemma's assertions for $\tau\le \theta_1^-\wedge\tau_P$.
For any positive $\nu$, for the process $v^\nu(\tau)$ we define stopping times 
${\theta^{\pm}_n}^\nu$ and $\tau_P^\nu$  as the stopping times 
 $\theta_n^\pm$ and $\tau_P$ for the process
$I(\tau)=I({v}^\nu(\tau))$.

Clearly
\begin{equation}\label{diffeo}
\text{the mapping $\{[I]\ge\delta\} \times \T^N\ni (I^N, \vp^N)\mapsto v^N$ is a diffeomorphism
}
\end{equation}
(on its image in $\R^{2N}$). 
So if  the process $v^\nu(\tau)$ is such that for 
  $0\le \tau\le  {\theta_1^-}^\nu\wedge \tau_P^\nu$ its first $N$ components
meet  equations \eqref{5.1}${}_{1\le k\le N}$, then
 the process 
$(I^N,\Phi^{(N)})({v}^\nu)$ satisfies 
 \begin{equation}\label{5.2buona}
\begin{split}
dI^\nu_k(\tau)& =({v_k}^\nu\cdot P_k)({v}^\nu)\,d\tau +
b_k^2\,d\tau  +b_k( {v_k}^\nu\cdot
d\bb^k)\ ,\quad  k\le N\ ,\\
d\Phi^\nu_j(\tau)&=\sum_{k\ge 1}s^{(j)}_k\Bigl[
\frac{ (iv^\nu_k\cdot P_k(v^\nu))} {|v^\nu_k|^2}
\,d\tau +
|v_k^\nu|^{-1}b_k\Big(\frac{iv_k^\nu}{|v_k^\nu|}\cdot
d\bb^k(\tau)\Big)\Bigr]\ , \quad  j\le J\ 
\end{split}
\end{equation}
(see Section \ref{s5.1}). 

Now we will pass to a limit (as $\nu_\ell\to0$) in equations \eqref{5.2buona}. To do this 
we will make use of the following averaging lemma, which is proved below in Section~\ref{sez:dim}, closely following the infinite-dimensional version of the Khasminski argument \cite{Khas68}, given 
in \cite{KP08}.

\begin{lemma}\label{l5.media}
For any $N\ge1$ and for 
$0\le \tau\le \theta_1^-$ the law $(I,\Phi^{(N)})\circ \cQ^0$
equals that of $(I, \Phi^{(N)})(I,V)$, where the process $(I,V)(\tau)$ is such that for
 $0\le \tau\le \theta_1^-$ its component $(I^N, \Phi^{(N)})$ is 
 a weak solution of the system of averaged equations
 \eqref{5.21}${}_{k\le N}$,  \eqref{5.22}${}_{j\le J}$, and  $(I(0),V(0))=(I(v(0)),V(v(0)))$. 
\end{lemma}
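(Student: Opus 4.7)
\textit{Plan of proof.} The plan is to recast the claim as a martingale problem for the generator of the averaged system and to verify it in the limit by a Khasminski-type argument, modelled on Section~7 of \cite{KP08}. Fix a sequence $\nu_\ell\to 0$ along which $\cD(I^{\nu_\ell}, V^{\nu_\ell})\strela \cQ^0$, and on a common probability space realise the convergence \eqref{converg} almost surely via Skorokhod's theorem. For each $\nu$, on $[0,\theta_1^-\wedge\tau_P]$ the process $(I^{\nu,N},\Phi^{\nu,(N)})(v^\nu)$ satisfies the system \eqref{5.2buona} with singular factors $|v_k^\nu|^{-2}$ uniformly bounded by $\delta^{-1}$ because $[I^\nu]\ge\delta$ there. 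Applying Ito's formula to any bounded $C^2$ test function $F(I^N,\Phi^{(N)})$ yields
\begin{equation*}
F\bigl(I^\nu,\Phi^{\nu,(N)}\bigr)(\tau) - F\bigl(I^\nu,\Phi^{\nu,(N)}\bigr)(0) = \int_0^\tau (\cL^\nu F)(v^\nu(s))\, ds + M^\nu(\tau),
\end{equation*}
with $M^\nu$ a martingale on $[0,\theta_1^-\wedge\tau_P]$ and $\cL^\nu$ the generator whose drift involves the cylindric functions $v_k\cdot P_k(v)$ and $|v_k|^{-2}(iv_k\cdot P_k(v))$, and whose diffusion is independent of the angles.

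The key step is to replace $\cL^\nu F$ by the generator $\cL F$ of the averaged system \eqref{5.21}${}_{k\le N}$, \eqref{5.22}${}_{j\le J}$. I would first truncate $v^\nu$ to $\Pi^{M}v^\nu$ for a large $M$, so that both $\cL^\nu F$ and $\cL F$ become smooth functions of finitely many complex coordinates; the normal convergence \eqref{xxx} of the series defining $P^0$ together with the a~priori bound \eqref{gafa3.10} ensures that the truncation error is $O(M^{-\alpha})$ uniformly in $\nu$. On the truncated level I would partition $[0,\tau]$ into subintervals $[\tau_n,\tau_{n+1}]$ of length $\Delta=\Delta(\nu)$ satisfying $\nu\ll\Delta\ll 1$. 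On each subinterval the slow data $(I^\nu,V^\nu)$ change by $O(\sqrt\Delta)$, while the fast variable $\vp^\nu(\tau)-\vp^\nu(\tau_n)$ is close to $-\nu^{-1}\Lambda(\tau-\tau_n)$ modulo a slow perturbation, so Lemma~\ref{l.aver} applied with $W=\nu^{-1}\Lambda$ and $\kappa(W,m)=\nu^{-1}\kappa(\Lambda,m)$ shows that the time average of each cylindric drift coefficient along $\vp^\nu$ converges to its resonant $\Lambda$-average with a rate controlled by $(\Delta/\nu)^{-1}$. Summing over the partition and using the bound $|v^\nu_k|^{-2}\le\delta^{-1}$, one gets
\begin{equation*}
\E\Bigl| \int_0^{\tau\wedge\theta_1^-\wedge\tau_P} \bigl[(\cL^\nu F)(v^\nu) - (\cL F)(I^\nu,V^\nu)\bigr]\, ds\Bigr| \xrightarrow{\nu\to 0} 0.
\end{equation*}

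Passing to the limit $\nu=\nu_\ell\to 0$ in the Ito representation, using the Skorokhod-realised a.s.\ convergence of $(I^\nu,V^\nu)$ and the continuity of $(I,V)\mapsto (I^N,\Phi^{(N)})$ on $\{[I]\ge\delta\}$ (which holds by the diffeomorphism property \eqref{diffeo}), I would conclude that the identity above holds for the limit process with $\cL$ in place of $\cL^\nu$, and that the corresponding $M^0$ is a martingale: the quadratic variation of $M^\nu$ on $[0,\theta_1^-\wedge\tau_P]$ is uniformly bounded by $\delta^{-1}$ times a polynomial in $P$, so $\{M^\nu\}$ is uniformly integrable. This establishes the martingale problem associated to \eqref{5.21}${}_{k\le N}$, \eqref{5.22}${}_{j\le J}$ on $[0,\theta_1^-]$, and hence realises $(I^N,\Phi^{(N)})$ as a weak solution there, as required. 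The main obstacle is the infinite-dimensional averaging: Lemma~\ref{l.aver} only applies to trigonometric polynomials in finitely many angles, so controlling the truncation error via \eqref{xxx} and the $h^r$-bounds \eqref{gafa3.10} uniformly in $\nu$ is the technical heart of the argument; the $\delta$-cut-off keeping us away from $\Game(h)$ is what makes this infinite sum convergent in the first place.
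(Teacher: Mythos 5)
Your overall strategy is the paper's: reduce to $\cQ_P$ on $[0,\theta_1^-\wedge\tau_P]$, formulate the claim as a martingale problem for the averaged system, truncate the drift to $M$ modes, partition time at the intermediate scale $L=\sqrt\nu$, freeze the slow variables on each subinterval, and invoke the Kronecker--Weyl Lemma~\ref{l.aver} to replace the drift by its resonant average (the diffusion needs no averaging since it is angle-independent). The paper implements the martingale problem with the coordinate processes $N_k$, $M_j$ and their pairwise products rather than with a general generator $\cL^\nu F$, and it does not need Skorokhod realisation, but these are cosmetic differences.

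There is, however, one concrete gap. You justify the approximation $\vp^\nu(\tau)-\vp^\nu(\tau_n)\approx-\nu^{-1}\Lambda(\tau-\tau_n)$ by saying the correction is a ``slow perturbation'' and by the bound $|v^\nu_k|^{-2}\le\delta^{-1}$. That bound is available only for $k\le N$, because the stopping time $\theta_1^-$ is defined through $[I]=\min_{1\le k\le N}I_k\ge\delta$. After truncation the drift coefficients depend on the phases $\vp_k^\nu$ for $N<k\le M$ as well, and for those modes the correction term $|v_k|^{-2}(iv_k\cdot P_k)$ in \eqref{5.3} is \emph{not} of order one: it blows up whenever $I_k^\nu$ is small, and nothing in your cut-off prevents that. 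This is exactly where the paper inserts an extra idea: it introduces the exceptional events $\cE_l=\{I_k^\nu(\tau_l)\le\eps\ \text{for some}\ N<k\le M\}$ with $\eps\ge\nu^{a}$, and chooses the \emph{non-random phase} $\tau_0$ of the partition so that $K^{-1}\sum_l\PP(\cE_l)$ is small --- this choice is possible precisely because of Lemma~\ref{l5.1}, whose integral form allows averaging over $\tau_0\in[0,L)$. Outside $\cE_l\cup Q_l$ the actions of all modes up to $M$ stay above $\eps/2$ on $[\tau_l,\tau_{l+1}]$, and only then does the linear-rotation approximation of $\vp^{\nu M}$ (and hence the Kronecker--Weyl step) go through. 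Without this device your estimate of $\sum_n\E|\eta_n|$ does not close; you would either need to add it, or reformulate the drift-averaging step entirely in the $v$-variables (where \eqref{5.1} is regular and no lower bound on the high-mode actions is needed), which is a different bookkeeping from the one you wrote.
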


Using this lemma  we  assume that the process $(I'_P, V'_P)(\tau)$ satisfy equations 
\eqref{5.21},  \eqref{5.22}${}_{j\le J}$ for $0\le\tau\le\theta_1^-\wedge\tau_P$. 

For $0\le\tau\le\theta_1^-\wedge \tau_P$ and for $k=1,\dots,N$ in view of \eqref{diffeo}
a process 
 $v^N_k$ satisfies \eqref{5.effgal} if and only if its action $I_k=I(v^N_k)$ and  angle
   $\vp_k=\vp(v^N_k)$ meet equations
\begin{equation}\label{_1}
dI_k=v_k\cdot R_k\,d\tau+b_k^2\,d\tau+ b_kv_k\cdot  d\bb^k(\tau),
\end{equation}
\begin{equation}\label{_2}
d\vp_k=|v_k|^{-2}(iv_k\cdot R_k)+|v_k|^{-2}b_k(iv_k\cdot  d \bb^k(\tau))
\end{equation}
(cf. \eqref{5.2}, \eqref{5.3}). 
Let us write the stochastic part in \eqref{_1} as 
$$
b_k\sqrt{2I_k}\, \Re (e^{-i\vp_k}d\bb^k)=: b_k\sqrt{2I_k}\, \Re d\tilde\bb^k,
$$
where $\tilde\bb^k=\int e^{-i\vp_k}d\bb^k$. This  is a new standard complex Wiener process, independent from 
$\bb^l$ with $l\ne k$. The stochastic part of  \eqref{_2} is 
$$
|v_k|^{-1}b_k\Re(-ie^{-i\vp_k}d\bb^k)= (2I_k)^{-1/2} b_k\, \rm{Im} \, d\tilde\bb^k.
$$
Using the new Wiener processes we may write the $k$-th component of the 
 $v^N$-eq.  \eqref{5.effgal} as 
\begin{equation}\label{_3}
dv_k=R_k(v)\,d\tau +b_k 
 d \bb^k=  R_k(v)\,d\tau +b_ke^{i\vp_k}
 d\tilde\bb^k.
\end{equation}
We have seen that $v_k$ satisfies equation 
\eqref{_3}=\eqref{5.effgal}
if and only if 
\begin{equation}\label{_4}
dI_k=v_k\cdot R_k\,d\tau+b_k^2\,d\tau+ b_k  \sqrt{2I_k} \Re  d\tilde\bb^k,
\end{equation}
\begin{equation}\label{_5}
d\vp_k=|v_k|^{-2}(iv_k\cdot R_k)\, d\tau + (2I_k)^{-1/2}  b_k \, \rm{Im} \,
 d\tilde\bb^k \, ,
\end{equation}
where $v_k\cdot R_k=\lan v_k\cdot P_k\ran_\Lambda(I,V)$, 
$iv_k\cdot R_k=\lan iv_k\cdot P_k\ran_\Lambda(I,V)$ and
the components $I^N=I^N(v^N), V^{(N)}=V^{(N)}(v^N)$ are expressed in terms of the solution $I^N,\vp^N$
of \eqref{_4}, \eqref{_5}. 

Recall that the process $(I(\tau) , V^J(\tau))=(I'_P(\tau),V^{'(N)}_P(\tau))$ satisfies equations 
 \eqref{5.21},  \eqref{5.22}${}_{j\le J}$.   Since   $\lan v_k\cdot P_k\ran_\Lambda=v_k\cdot R_k$, we  write the first $N$ equations in \eqref{5.21} in the form \eqref{_4}. 
  For $\tau\le \theta^{-}_1\wedge \tau_P$ processes $\Phi_j$ with $j\le J(N)$ satisfy 
 equations \eqref{5.22} which we write as 
\begin{equation}\label{FFF}
d\Phi_j=
\sum_k s^{(j)}_k\Big( \frac{\lan iv_k\cdot P_k\ran_\Lambda(I,V)}{2I_k}\,d\tau +
\frac{b_k}{2\sqrt{I_k}}\,\text{Im}\, d\tilde\bb^k\Big),\qquad j\le J. 
\end{equation}
That is, the process $(I^{'N}_P(\tau),V^{'(N)}_P(\tau))$ is a solution of the system 
 \eqref{_4}${}_{k\le N}$, \eqref{FFF}.

Let us turn to equations \eqref{_5}${}_{k=1,\dots,N}$. 
  Since 
$
\delta\le [I(\tau)] \le |I(\tau)|_{h_I}
\le P
$
for $\tau\le \theta^{-}_1\wedge \tau_P$,  then for such $\tau$ the equations  form a Lipschitz 
system of functional
stochastic equations in $\T^N$ (see  \cite{KaSh, Kry03}). 
So they have  a unique strong solution $\vp^N(\tau)$, 
satisfying  $\vp^N(0)=\theta^N$.
Thus we obtain a process
 $v^N(\tau)=(I^N(\tau), \vp^N(\tau))$ 
 which satisfies \eqref{_3}${}_{k=1,\dots,N}$. Since  $v^N$ satisfies \eqref{_3}, then $I^N(v^N)$ satisfies 
  \eqref{_4}${}_{k\le N}$, while $\Phi^{J(N)}$ satisfies \eqref{FFF}. Since for $\tau\le \theta^-_1$ 
  the system  \eqref{_3}${}_{k=1,\dots,N}$, \eqref{FFF} has Lipschitz coefficients, then
  $$
  I_P^{'N}=I^N(v^N),\quad V^J(v^N)=V^{'(N)}_P.
  $$
  That is, the process $(I,V,v^N)$ satisfies \eqref{prop} for $\tau\le \theta_1^-\wedge \tau_P$.
  Extension of $v^N(\tau)$ to  the segment $[\theta_1^-\wedge \tau_P,\theta_1^-]$ is trivial,
and the process $\,  {}^{-,\, 1\!}\Upsilon$ is constructed. 
\qed
\medskip

\textbf{b)} Now we extend $\PP^-_1$ to a measure $\PP_1^+=\cD({}^{+,1} \Upsilon)$,
$\ {}^{+,1} \Upsilon(\tau)=(I,V,v^N)(\tau)$,  where $v^N$
is stopped at time $\theta_1^+$.

Let us denote by $\Theta=\Theta^{\theta^-_1}$ the operator which stops
any continuous trajectory $\eta(\tau)$ at time $\tau=\theta^-_1$. That
is, replaces it by $\eta(\tau\wedge \theta^-_1)$.

As in the proof of Lemma \ref{l.2.5}, we represent  the laws $\PP^-_1$ and
$\cD(v^\nu_P)$ by distribution of processes
$((I'_P(\tau),V'_P(\tau)), {v}^{'N}_P(\tau))$ and ${v}^{'\nu}_P$  such that $v_P^{'N}$ is stopped at
$\tau = \theta_1^-$ and 
\begin{equation*}
\begin{split}
& \cD\big((I,V)({v'_P}^\nu(\cdot)) \big)\strela
\cD\big( (I'_P,V'_P)(\cdot) \big)  =\cQ_P \quad \mbox{as }
  \nu=\nu_\ell\to 0, \\
&(I,V)({v_P'}^N) =  ({I'_P}^N,{V'_P}^N) \quad\mbox{for
}\tau\le \theta^-_1\ .
\end{split}
\end{equation*}
As before, we will usually 
 drop the primes and the sub-index $P$. 

Since ${v}^\nu(\tau,\omega)$, $0\le \tau\le T$, is a diffusion
process, we may replace it by a continuous process
$w^\nu(\tau;\omega, \omega_1)$ on an extended probability space
$\Omega\times \Omega_1=(\omega,\omega_1)$, where $\Omega_1$ is a second
copy of the space $\ho$, such that
\begin{enumerate}
\item $\cD(w^\nu(\tau))=\cD({v}^\nu(\tau))$ ;
\item for $\tau\le \theta^-_1=\theta^-_1(\omega)$ we have
  $w^\nu={v}^\nu $ (in particular, then  $w_P^\nu$ is independent
  of $\omega_1$).
\item for $\tau\ge \theta^-_1$ the process $w^\nu$ depends on
  $\omega$ only through the initial data $w^\nu(\theta^-_1,\omega,
  \omega_1) = {v}^\nu(\theta^-_1,\omega)$. For a fixed $\omega$ it
  satisfies \eqref{5.11} with suitable Wiener processes $\bb^j$'s
  defined on the space $\Omega_1$.
\end{enumerate}

The process $w^\nu$ is fast: $\frac{d}{d\tau} w^\nu\sim\nu^{-1}$.
Using a construction from \cite{KP08, K10},
we now couple it with a slow process $\widetilde w^{\nu N}$ in $\C^N$ such that the actions 
and resonant combinations of the angles for the former and the latter coincide:

\begin{lemma}\label{l.couple}
There exists a measurable  process 
$(\bar w^\nu, \widetilde w^{\nu N})(\tau)\in h\times \C^N$, $0\le\tau\le T$, 
such that 

(i)  $\cD  \bar w^\nu(\cdot) =\cD  w^\nu(\cdot)$;

(ii) $(I^N, V^{(N)})(\widetilde w^{\nu\, N})=(I^N,V^{(N)})(\bar w^\nu)$ for $\tau\ge
  \theta^-_1$;
  
  (iii)  $ \widetilde w^{N}(\theta^-_1)=
   \bar w^{\nu\, N}(\theta^-_1)=
   {v}^{N}(\theta^-_1) $
   a.s. in $\Omega_1$;
    
(iv) the law of the process $\widetilde w^{\nu \,
  N}(\tau)$, $\tau\ge \theta^-_1,$ is that of an Ito process $v^N\in\R^N$, 
\begin{equation}\label{5.mod}
d v^N=B^N(\tau) d\tau+a^N(\tau) d\beta^N(\tau)\ ,
\end{equation}
where for every $\tau$ the vector $B^N(\tau)$ and the $N\times N$-matrix
$a^N(\tau)$ satisfy $\nu$- and $\delta$--indepen\-dent estimates
\begin{equation}\label{5.modstime}
\left|B^N(\tau)\right|\le C\ , \quad C^{-1}\, \rm{Id}\le
a^N\cdot (a^N)^T(\tau) \le C \,\rm{Id} \quad \mbox{a.s.,}
\end{equation}
with some $C=C(P,N)$.
\end{lemma}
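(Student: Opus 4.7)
The crucial structural fact is that $\Lambda^N=(\lambda_1,\dots,\lambda_N)$ lies in $({}_N\cA)^\perp$, by the very definition of the resonance set (every $s\in{}_N\cA$ satisfies $s\cdot\Lambda=0$). Therefore the time-dependent unitary rotation $v_k\mapsto e^{i\nu^{-1}\lambda_k\sigma}v_k$ preserves every action $I_k$ trivially, and every resonant monomial $V_j$ with $j\le J(N)$ since that monomial picks up the factor $e^{i\nu^{-1}\sigma\,\Lambda\cdot s^{(j)}}=1$, while it kills precisely the fast term $-i\nu^{-1}\lambda_k w^\nu_k$ in \eqref{5.11}. So the plan is to take $\widetilde w^{\nu N}$ to be the first $N$ coordinates of $w^\nu$ rotated by this phase.

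Working on the same extended space on which $w^\nu$ is defined, I set $\bar w^\nu:=w^\nu$, which gives (i) for free, and for $\tau\ge\theta_1^-$, $k\le N$, define
\begin{equation*}
\widetilde w^{\nu N}_k(\tau):=\exp\!\bigl(i\nu^{-1}\lambda_k\big((\tau\wedge\tau_P^\nu)-\theta_1^-\big)\bigr)\,\bar w^\nu_k(\tau),
\end{equation*}
extending by $\widetilde w^{\nu N}(\tau):=\bar w^{\nu N}(\tau)$ for $\tau<\theta_1^-$. The phase is frozen at $\tau_P^\nu$ because after that time the drift of $w^\nu$ in \eqref{5.11} is already zero, so nothing further is left to cancel. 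Property (iii) is then built in; (ii) is the algebraic identity above, since $|\widetilde w^{\nu N}_k|=|\bar w^\nu_k|$ and $V_j(\widetilde w^{\nu N})=e^{i\nu^{-1}((\tau\wedge\tau_P^\nu)-\theta_1^-)\,\Lambda\cdot s^{(j)}}V_j(\bar w^\nu)=V_j(\bar w^\nu)$ for $j\le J(N)$.

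The content of (iv) is then a single application of the complex Ito formula to this factorization. On $\{\theta_1^-\le\tau\le\tau_P^\nu\}$ the two $i\nu^{-1}\lambda_k$ contributions cancel and
\begin{equation*}
d\widetilde w^{\nu N}_k = e^{i\nu^{-1}\lambda_k(\tau-\theta_1^-)}\,P_k(w^\nu)\,d\tau + b_k\,e^{i\nu^{-1}\lambda_k(\tau-\theta_1^-)}\,d\bb^k,
\end{equation*}
while for $\tau>\tau_P^\nu$ both the drift of $w^\nu$ and the time derivative of the frozen phase vanish, leaving a pure Wiener term. Thus \eqref{5.mod} holds with $B^N$ supported in $[\theta_1^-,\tau_P^\nu]$; on that event $|w^\nu|_{h^r}\le\sqrt{P}$, and Lemma~\ref{l.P^0} (together with $P_k^1(v)=-\gamma_kv_k$) gives $|B^N|\le C(P,N)$ uniformly in $\nu$ and $\delta$. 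The dispersion matrix is block-diagonal with $2\times 2$ rotation blocks scaled by $b_k$, so $a^N(a^N)^T=\diag(b_k^2\,\mathrm{Id})_{k\le N}$, and the two-sided bound \eqref{5.modstime} holds with $C=\max_{k\le N}b_k^2 \vee (\min_{k\le N}b_k^2)^{-1}$, finite because all $b_k\ne 0$.

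I do not foresee any serious obstacle beyond bookkeeping: measurability and adaptedness of $(\bar w^\nu,\widetilde w^{\nu N})$ with respect to $\{\hat\cF_\tau\}$ follow because $\theta_1^-,\tau_P^\nu$ are stopping times and the rotation factor is a deterministic function of these. The one delicate point worth flagging is the freezing of the phase at $\tau_P^\nu$, which is essential: without it, on $\{\tau>\tau_P^\nu\}$ the drift of $\widetilde w^{\nu N}$ would acquire an unbounded $i\nu^{-1}\lambda_k\widetilde w^{\nu N}_k$ contribution (since no fast term from $w^\nu$ remains to cancel against), violating the uniform bound $|B^N|\le C(P,N)$.
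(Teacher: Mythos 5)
Your construction is correct as a mathematical matter, and in a concrete sense it is the paper's own construction made explicit. In the paper's proof the process $\widetilde w$ is built from the self-consistent SDE
$d\widetilde w_k=e^{i(\widetilde\vp_k-\bar\vp_k)}P_k(\bar w)\,d\tau+e^{i(\widetilde\vp_k-\bar\vp_k)}b_k\,d\bb^k$, and one can check, exactly along the lines of the Ito computation you perform, that the implied phase $\alpha_k:=\widetilde\vp_k-\bar\vp_k$ solves $d\alpha_k=\nu^{-1}\lambda_k\,d\tau$ on $\{\tau\le\tau_P^\nu\}$ with $\alpha_k(\theta_1^-)=0$ — i.e.\ it \emph{is} your deterministic phase $\nu^{-1}\lambda_k\int_{\theta_1^-}^{\tau}\chi_{s\le\tau_P^\nu}ds$. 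Verifying (ii) then uses precisely the structural fact you point to, $\Lambda^N\perp{}_N\cA$ (which the paper records in Lemma~\ref{l.symm}). The genuine gain of writing the phase closed-form is that your map $w^\nu_k\mapsto e^{i\nu^{-1}\lambda_k\sigma}w^\nu_k$ is globally smooth, so the construction is everywhere nonsingular; the paper, by passing through the angle representation $\widetilde\vp_k=\vp(\widetilde w_k)$, inherits a singularity at $v_k=0$ and therefore has to introduce an auxiliary parameter $\eps$, the chain of stopping times $\tau^1<\tau^2<\cdots$, and a limit $\eps\to0$ using Lemma~\ref{l5.1}. Your argument bypasses all of that and is thus a real simplification.

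One small but genuine flaw: as written your phase is $(\tau\wedge\tau_P^\nu)-\theta_1^-$, which is negative (and $\nu^{-1}$-large) on the event $\{\tau_P^\nu<\theta_1^-\}$, so at $\tau=\theta_1^-$ you get $\widetilde w^{\nu N}(\theta_1^-)=\Psi_{\nu^{-1}\Lambda^N(\tau_P^\nu-\theta_1^-)}\bar w^{\nu N}(\theta_1^-)\neq \bar w^{\nu N}(\theta_1^-)$ and (iii) fails on that event. The correct phase must be frozen on both sides of the active interval: replace $(\tau\wedge\tau_P^\nu)-\theta_1^-$ by its positive part, i.e.\ take $\sigma(\tau)=\int_{\theta_1^-}^{\tau}\chi_{s\le\tau_P^\nu}\,ds=\big((\tau\wedge\tau_P^\nu)-\theta_1^-\big)\vee 0$. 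With that fix every remaining step you give goes through unchanged, and the drift bound $|B^N|\le C(P,N)$ and the diffusion identity $a^N(a^N)^T=\diag(b_k^2\,\mathrm{Id})_{k\le N}$ are exactly as you state.
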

The lemma is proved below in Section \ref{s.pr_couple}.

Next, for $\nu=\nu_\ell$ consider the process
$$
\xi^\nu_P(\tau)= \left((I^\nu_P,V^\nu_P)= (I,V)(\bar w^\nu(\tau)), \;
\chi_{\tau\le \theta_1^-}{v}^N+\chi_{\tau\ge\theta_1^-}\tilde
w^{\nu\, N}\right)\ , \quad 0\le\tau\le T\ .
$$
Due to \eqref{gafa3.10} and (iii) the family of laws
$\{\cD( \xi^{\nu_\ell}_P),\ell\ge 1\}$ is tight in the space
  $C([0,T];\cH_{I,V}\times \C^N)$. Consider any limiting measure $\Pi$
  (corresponding to a suitable subsequence $\nu'_\ell\to 0$) and represent
  it by a process $\tilde \xi_P(\tau)= ((\tilde I_P,\tilde V_P)(\tau),
  \tilde v^N_P(\tau))$, i.e., $\cD(\tilde \xi_P)=\Pi$. Clearly,

(iv) $\cD(\tilde \xi_P)|_{\tau\le \theta_1^-}=\PP^-_1$ ;

(v) $\cD(\tilde I_P,\tilde V_P)=\cQ_P$.
\smallskip

Since each measure $\cD(\xi^\nu_P)$ is supported by the closed set,
formed by all trajectories satisfying $(I,V)(v^N) \equiv
(I^N,V^{(N)})$, then the limiting measure $\Pi$ also is.  That is, the process $\tilde \xi_P$ satisfies

(vi) $(I,V)(\tilde v^{N}_P(\tau)) =
(\tilde I^N_P(\tau),\tilde V^{(N)}_P(\tau))$ a.s.
\smallskip

Moreover, the law of the limiting process $\tilde v^N_P(\tau)$,
$\tau\ge \theta^-_1$, is that of an Ito process of the form  \eqref{5.mod},
\eqref{5.modstime}. (Note that for $\tau\ge \theta^-_1$ the process
$\tilde v^N_P$ \emph{is not} a solution of \eqref{5.effgalP}).

Now we set
$$
\PP^+_1=\Theta^{\theta^+_1}\circ \cD(\tilde \xi_P)\ .
$$

\textbf{c)} The constructed measure $\PP_1^+$ gives us the
distribution of a  process\\ $((I,V)(\tau),v^N(\tau))$ for $\tau\le
\theta_1^+$. Next we repeat step a) with the  initial data\\
$((I,V)(\theta_1^+), v^N(\theta_1^+))$ and iterate the
construction.

Since the sequence $\theta_n^\pm$ stabilises at
$\tau=T$ after a finite random number of steps a.s., then  the
sequence of measures $\PP_n^\pm$ converges to a limiting measure
$\PP_\delta$ on $\ho$. 

Consider the natural process  $\xi_\delta (\tau)=((I_\delta,V_\delta), v^N_\delta)(\tau)$
on  $(\widehat\Omega, \{\hat\cF_\tau\}, \PP_\delta)$. By construction it satisfies the 
properties (a)-(d). We have obtained a $\delta$-solution of \eqref{5.effgalP}. 
\medskip

\textbf{Step 4.} \textit{Limit $\delta\to 0$.}
Due to 1--3 (see Step 3.b)) the set of measures $\{\PP_\delta,0<\delta\le 1\}$ is
tight in $\widehat\Omega$. 
Let $\PP_P$ be any limiting measure as $\delta\to 0$. 

\begin{lemma}\label{lemmadelta}
The measure $\PP_P$ may be representes as a law of a process 
$ \xi=(I,V,v^{0N})$ which 
satisfies the properties (a) and (b) above and is such that 

i) for $\tau\le\tau_P$ the process 
 $\xi$ is a weak solution of equations \eqref{5.21}, \eqref{5.effgal}.

ii) Let $0\le\tau_1<\tau_2\le T$ be stopping times, defined in terms of the component 
$I(\tau)$, such that \eqref{stop} holds.  Then for $\tau_1\le\tau\le \tau_2$  process $\xi$
is a weak solution of equations  \eqref{5.21},  \eqref{5.22}${}_{j\le J(N)}$.
\end{lemma}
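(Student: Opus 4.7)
The plan is to extract a weakly convergent subsequence from the tight family $\{\PP_\delta\}$, realize it a.s.\ by Skorohod's theorem, and then combine properties (a)--(d) of $\delta$-solutions with the fact that the ``bad set'' $\Delta(\delta)$ vanishes in expectation as $\delta\to 0$ (by Lemma~\ref{l5.1}) to obtain (a), (b), (i), (ii) in the limit. First, pick $\delta_j\downarrow 0$ with $\PP_{\delta_j}\strela \PP_P$ and realize the processes $\xi_{\delta_j}\to \xi=((I,V),v^{0N})$ a.s.\ in $\widehat\Omega$. Since each $\PP_\delta$ satisfies (a), the identity $\cD(I,V)=\cQ_P$ is immediate; property (b), which demands $(I^N,V^{(N)})(v^{N,\delta}(\tau))\equiv (I^N(\tau),V^{(N)}(\tau))$ on $[0,T]$, is a closed condition on continuous trajectories and passes to $\xi$.

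For (ii), if the stopping times $\tau_1<\tau_2$ satisfy \eqref{stop} with some $\delta_*>0$, then for every $\delta<\delta_*$ one has $[\tau_1,\tau_2]\subset \Lambda(\delta)$ almost surely. By property (d) of $\delta$-solutions, $(I,\Phi^{(N)})$ is a weak solution of \eqref{5.21} and \eqref{5.22}${}\mid_{j\le J(N)}$ on $[\tau_1,\tau_2]$ under every $\PP_\delta$ with $\delta<\delta_*$. On $\{[I]\ge \delta_*\}$ the coefficients of this system are locally Lipschitz---the singularities at $\Game$ are avoided---so its martingale problem is well-posed and the set of its weak solutions is closed under weak convergence of laws. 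Hence (ii) holds for $\xi$.

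For (i), I would establish the martingale problem associated to \eqref{5.effgal} on $[0,\tau_P]$. For $\varphi\in C^2_b(\C^N)$ set
$$
L\varphi(v)=\sum_{k\le N} R_k(v)\cdot \partial_{v_k}\varphi + \tfrac12\sum_{k\le N} b_k^2\, \Delta_{v_k}\varphi.
$$
On $\Lambda(\delta)\cap[0,\tau_P]$ the $\delta$-solution $v^{N,\delta}$ satisfies \eqref{5.effgalP} by (c), while on $\Delta(\delta)$ Lemma~\ref{l.couple} represents it as an Ito process whose drift $B^N$ and diffusion $a^N(a^N)^T$ are bounded uniformly on $\{|v|^2_{h^r}\le P\}$. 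Hence
$$
\varphi(v^{N,\delta}(\tau\wedge\tau_P))-\varphi(v^{N,\delta}(0))-\int_0^{\tau\wedge\tau_P} L\varphi(v^{N,\delta}(s))\,ds = M^{\varphi,\delta}_\tau + E^{\varphi,\delta}_\tau,
$$
where $M^{\varphi,\delta}$ is a martingale and $E^{\varphi,\delta}_\tau$ is supported on $\Delta(\delta)\cap [0,\tau\wedge\tau_P]$, with integrand bounded pointwise by $C(P,N,\varphi)$. Applying Lemma~\ref{l5.1} to the marginals of $\cQ_P$ gives $\E\,\mathrm{Leb}(\Delta(\delta))\to 0$, so $E^{\varphi,\delta}\to 0$ in $L^1$. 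Passing to the weak limit, the left-hand side becomes a martingale for $v^{0N}$, i.e.\ the martingale problem for \eqref{5.effgal}.

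The main obstacle is to upgrade the martingale problem to an actual weak solution of \eqref{5.effgal}, i.e.\ to identify the diffusion coefficient of $v^{0N}$ with the constant $\mathrm{diag}(b_k^2)$ and to produce the driving Wiener processes. I would do this by running the same argument with the quadratic test functions $\varphi(v)=v_i\bar v_j$ and $v_i v_j$: on $\Lambda(\delta)$ the quadratic variation of $v^{N,\delta}$ is exactly $\mathrm{diag}(b_k^2)$, while on $\Delta(\delta)$ the bound of Lemma~\ref{l.couple} yields again an error vanishing in expectation. Once the martingale problem is identified with the correct drift $R_k$ and constant diffusion, Lemma~\ref{l.uniq} gives pathwise uniqueness and the standard representation theorem (see \cite{KaSh}) constructs the Wiener processes making $v^{0N}$ a weak solution of \eqref{5.effgal}; coupled with property (d) for the $I$-equation, this yields (i).
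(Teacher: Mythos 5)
Your proposal is correct and follows essentially the same route as the paper: properties (a), (b) pass to the weak limit, (ii) follows because for $\delta$ small enough relative to $\delta_*$ the segment $[\tau_1,\tau_2]$ lies in $\Lambda(\delta)$ where property (d) applies, and (i) is obtained by the martingale-problem method using that the exceptional set $\Delta(\delta)$ vanishes in measure as $\delta\to0$ (Lemma~\ref{l5.1}) while the process there remains a uniformly non-degenerate Ito process by Lemma~\ref{l.couple}. The paper merely cites the ``traditional'' martingale-solution argument (\cite{Yor74, MR99}, Lemma~3.4 of \cite{KP08}) where you spell out the test-function computation and the identification of the diffusion; this is a faithful expansion, not a different proof.
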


\begin{proof}
The  properties (a) and (b) are invariant under the weak limit, so since 
 they hold for $\PP_\delta$, then they also hold for  $\PP_P$.

Since for $\tau\le\tau_P$  the process 
$(I_\delta, V_\delta, v_\delta^N)$ is a weak solution of \eqref{5.21},  \eqref{5.effgal}
outside the set $\{\tau: [I(\tau)]<\delta\}$ which shrinks to the empty set when $\delta\to0$, then
the limiting process  is a weak solution of \eqref{5.21},  \eqref{5.effgal}. This follows in a traditional
way by applying the method of martingale solutions \cite{Yor74, MR99},
  see  Lemma~3.4    of \cite{KP08} or \cite{K10}.

If $\delta\le \tfrac12 \delta_*$, then the random segment $[\tau_1, \tau_2]$ is contained in 
$\Lambda(\delta)$ (see Fig.~1).  Since the process $(I_\delta, V_\delta, v_\delta^N)$  
satisfies (d) (see below \eqref{sets}), then  it is a weak solution of 
 \eqref{5.21},  \eqref{5.22}${}_{j\le J(N)}$ on the segment $[\tau_1,\tau_2]$, if  $\delta\le \tfrac12 \delta_*$.
 Passing to the limit as $\delta\to0$ we recover ii).
\end{proof}
\medskip

\textbf{Step 5.} \textit{Limit $P\to \infty$.}
Due to (a), relations \eqref{gafa3.10} and
Lemma~\ref{lemmadelta} the set of measures $\PP_P$, $P\in\N$, is
tight. Consider any limiting measure $\PP^N$ for this
family. Repeating the proof of Lemma~\ref{lemmadelta}, we find that
$\PP^N$ solves the martingale problem for eq. 
 \eqref{5.effgal}, \eqref{5.21}, and satisfies assertion ii) of Lemma~\ref{lemmadelta}. 
  It still
satisfies (a) and (b) of step 3.
\medskip

\textbf{Step 6.} \textit{Limit $N\to \infty$.}
Identifying $\widehat\Omega=\widehat\Omega_N$ with  the corresponding Galerkin  subspace 
of  $\bar\Omega=\cH_{I,V}\times C([0,T];h)$ we see that by virtue of \eqref{2.5} and \eqref{5.effgal}
the set of measures $\PP^N, N\ge1$, is tight in $\bar\Omega$. Let 
 $N_n\to \infty$ be a sequence such that $\PP^{N_n} \strela \PP^0$. Then $\PP^0$ satisfies (a) and (b) 
 for any $N$, i.e. 
 $$
 (I,V)(\tau)=(I,V)(v(\tau))\quad \PP^0\,\text{- a.s.}
 $$
Repeating again the proof of Lemma~\ref{lemmadelta} we see that $\PP^0$
is a martingale solution of the system \eqref{5.21}, \eqref{5.eff} and satisfies assertion ii)
of Lemma~\ref{lemmadelta}, for any $N\ge1$. 

 Denote $\mu=\cD^{\PP^0}(v(\cdot))$. Then $\cQ^0=\cD^{\PP^0}(I,V)=\cD^\mu(I,V)(v)$, 
 where $v(\tau)$  is a weak solution of \eqref{5.eff}.  
\medskip

\textbf{Step 7.} \textit{End of the theorem's proof. }
Since $v(\tau)$, constructed above, is a weak solution of \eqref{5.eff}, then due to
Lemma~\ref{l.uniq} and the Yamada-Watanabe argument (see \cite{KaSh, Yor74, MR99}), weak 
and strong solutions for \eqref{5.eff} both exist and are unique. That is, the limit 
in \eqref{5.88} does not depend on the sequence $\nu_\ell\to0$, and the theorem is  proven.
\medskip

\textbf{Step 8.} \textit{Proof of Proposition \ref{p.slow}.} By Step 6, the measure $\PP^0$ is such that,
from one hand, the corresponding distribution of the process $(I, V)(\tau)$ equals $\cQ^0$, from another 
hand, it  satisfies  item ii) of Lemma~\ref{lemmadelta}.  
 It means that $\cQ^0$ satisfies the assertion of the proposition. 
\qed

\subsection{Proof of Lemma~\ref{l5.media}}\label{sez:dim}
Since the process $I^\nu(\tau)$ and the $I$-component of measure
$\cQ^0$ satisfy estimates \eqref{gafa3.10} and, accordingly, $\tau^P\to T$
a.s. with respect to $\cQ^0$, then it is sufficient to prove the lemma 
for $\cQ^0$ replaced by $\cQ_P$ and for $0\le\tau\le\theta_1^-\wedge\tau_P$,
for any $P$.

Let us denote $F^1_k= v_k\cdot P_k+b^2_k$ and
$F^2_k=\frac{ i v_k\cdot P_k}{ 2I_k}$ (see equations \eqref{5.2}, \eqref{5.3}),
 and   for $\tau\in[0,T]$ consider the processes 
\begin{equation*}
\begin{split}
N^{\nu_\ell}_k(\tau)&=I^{\nu_\ell}_k( \tau\wedge \theta^{\nu_\ell} ) -
 \int_0^{ \tau\wedge \theta^{\nu_\ell} }  
\langle F^1_k\rangle_\Lambda(I^{\nu_\ell}(s),
\Phi^{\nu_\ell}(s)) \,ds\ ,\quad  k\le N\ ,\\
M^{\nu_\ell}_j(\tau)&=\Phi^{\nu_\ell}_j(\tau\wedge \theta^{\nu_\ell})
-\sum_{i\ge   1}s^{(j)}_i \int_0^{ \tau\wedge \theta^{\nu_\ell} }  
\langle F^2_i\rangle_\Lambda
(I^{\nu_\ell}(s), \Phi^{\nu_\ell}(s)) 
\,ds \ , \quad  j \le J(N)\ ,
\end{split}
\end{equation*}
where $I^\nu_j(\tau)=I_j(v^\nu(\tau)), \Phi^\nu_j(\tau)=\Phi_j(v^\nu(\tau))$. 
Due to \eqref{5.2} and  \eqref{5.ris} we can write $N^{\nu_\ell}_k$ as
$$
N^{\nu_\ell}_k(\tau)=   \widetilde N^{\nu_\ell}_k(\tau)+   \overline  {N}^{\,\nu_\ell}_k(\tau),
$$
where $ \widetilde N^{\nu_\ell}_k=I^{\nu_\ell}_k(\tau\wedge \theta^{\nu_\ell}) -\int_0^{ \tau\wedge \theta^{\nu_\ell} }
F^1_k(I^{\nu_\ell},\vp^{\nu_\ell})\,d\tau$ is a $\cQ_P$ martingale and
$$
 \overline  {N}^{\,\nu_\ell}_k(\tau) = \int_0^{ \tau\wedge \theta^{\nu_\ell} }  \Big( F_k^1-
\langle F^1_k\rangle_\Lambda \Big) (I^{\nu_\ell}(s),
\Phi^{\nu_\ell}(s)) \,ds.
$$
Similar we write 
$$
M^{\nu_\ell}_j(\tau)= \tilde M^{\nu_\ell}_j(\tau)+   \overline  {M}^{\,\nu_\ell}_j(\tau),
$$
where $\tilde M^{\nu_\ell}_j$ is obtained by replacing  $\lan F_j^2\ran_\Lambda$ by $F_j^2$ 
in the formula for $M^{\nu_\ell}_j$, and $ \overline  {M}^{\,\nu_\ell}_j$ is the disparity. We claim that 
\begin{equation}\label{conv}
\E\sup_{0\le\tau \le T}  | \overline  {N}^{\,\nu_\ell}_k(\tau) |
+ \E\sup_{0\le\tau \le T}  | \overline  {M}^{\,\nu_\ell}_j(\tau) |
\to 0 \quad \mbox{as }
\nu_\ell \to 0\ .
\end{equation}
This assertion  follows
 from the following lemma:
 
\begin{lemma}\label{l2.3ihp}
Let $\tilde F=F_k^1=v_k\cdot P_k+b_k^2$ or $\tilde F=iv_k\cdot P_k$ (so this is 
 a polynomial function of degree not bigger than
$m$),  and $ G (v)= \tilde F(v)I_k^p$, for a fixed $p\le0$ and 
 $k\le N$. Then 
\begin{equation}\label{2.14ihp}
\E \max_{0\le \tau\le  {\theta_1^-}^\nu\wedge \tau_P^\nu} \left|\int_0^{\tau }
\Big(  G(I^\nu(s),\vp^\nu(s))-\langle
G\rangle_\Lambda(I^\nu(s),\Phi^\nu (s))\Big) ds  \right| \to 0\quad \mbox{as }
\nu\to 0\ .
\end{equation}
\end{lemma}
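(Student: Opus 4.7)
The plan is a Khasminski-type argument \cite{Khas68} adapted to the resonant case: split the interval $[0,\theta_1^{-\nu}\wedge\tau_P^\nu]$ into short subintervals on which the slow variables $(I,\Phi)$ are nearly constant and the fast angles rotate at frequency $\nu^{-1}\Lambda$, then on each subinterval replace the time integral of $G$ by its resonant average via Lemma~\ref{l.aver}.

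First I would exploit the stopping. On the event $\{\tau\le \theta_1^{-\nu}\wedge \tau_P^\nu\}$ one has $[I^\nu(s)]\ge\delta$ and $|v^\nu(s)|_{h^r}\le P^{1/2}$, so $G$ is a continuous function of $(I,\vp)$ bounded, together with its derivatives in $v$, by a constant $C=C(\delta,P,N,k,p)$; moreover $G(I,\vp)$ is a trigonometric polynomial in $\vp$ of degree $\le m$, because $\tilde F$ has this property. The same bounds apply to $\langle G\rangle_\Lambda$, and in particular both are Lipschitz in $(I,\Phi)$ on the relevant set. Next, fix a parameter $L=L(\nu)$ with $L\to 0$ and $L/\nu\to\infty$ (e.g.\ $L=\sqrt{\nu}$), and partition $[0,T]$ into subintervals $\Delta_l=[lL,(l+1)L]$. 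On each $\Delta_l\cap[0,\theta_1^{-\nu}\wedge\tau_P^\nu]$ write
$$
\int_{\Delta_l}\!\bigl(G-\langle G\rangle_\Lambda\bigr)\bigl(I^\nu(s),\vp^\nu(s)\bigr)\,ds
=A_l+B_l+C_l,
$$
where $A_l$ replaces $(I^\nu(s),\Phi^\nu(s))$ in $\langle G\rangle_\Lambda$ by $(I^\nu(lL),\Phi^\nu(lL))$ and freezes the $I$-dependence of $G$ in the same way; $B_l$ replaces the true angle $\vp^\nu(s)$ in the frozen $G$ by $\vp^\nu(lL)-\nu^{-1}\Lambda(s-lL)$; and $C_l$ is the resulting purely rotational term.

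The errors $A_l,B_l$ are controlled by short-time smallness of the slow quantities. From \eqref{5.2} and \eqref{5.ris}, applied with the cut-off $\chi_{\{[I]\ge\delta\}}$, the drifts and diffusion coefficients of $I^\nu$ and $\Phi^\nu$ are bounded on the stopped interval by constants depending on $\delta,P,N$; Doob's inequality then gives $\E\sup_{s\in\Delta_l}|I^\nu(s)-I^\nu(lL)|+\E\sup_{s\in\Delta_l}|\Phi^\nu(s)-\Phi^\nu(lL)|\le C\sqrt L$. By the same token, using equation \eqref{5.3} with $|v_k|^{-2}$ bounded by $(2\delta)^{-1}$, one shows that $\vp^\nu(s)-\vp^\nu(lL)+\nu^{-1}\Lambda(s-lL)$ has sup-norm on $\Delta_l$ of order $\sqrt L$ in expectation. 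Combining these with Lipschitz continuity of $G$ and $\langle G\rangle_\Lambda$ and summing over the $O(T/L)$ subintervals, $\E\sup_\tau|\sum_{l\le \tau/L}(A_l+B_l)|\to 0$ as $\nu\to 0$. For the main term $C_l$, the change of variable $t=\nu^{-1}(s-lL)$ turns it into
$$
\nu\int_0^{L/\nu}\!\Bigl(G_{*,l}\bigl(\vp^\nu(lL)+t\Lambda\bigr)-\langle G_{*,l}\rangle_\Lambda\bigl(\vp^\nu(lL)\bigr)\Bigr)\,dt,
$$
where $G_{*,l}(\cdot)=G(I^\nu(lL),\cdot)$ is a trigonometric polynomial of degree $\le m$; as $L/\nu\to\infty$, Lemma~\ref{l.aver} (applied in a cylindric finite-dimensional projection, using \eqref{invarr}) gives $|C_l|=o(L)$ uniformly in $\vp^\nu(lL)$, with a rate that depends only on $\kappa(\Lambda^n,m)$ and $C$. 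Summing over $l$ yields $\E\sup_\tau|\sum C_l|\to 0$, and \eqref{2.14ihp} follows.

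The main obstacle is the control of the phase increments, since equation \eqref{5.3} is singular at $\{v_k=0\}$. This is exactly what the stopping at $\theta_1^{-\nu}$ is designed to handle: it forces $|v_k^\nu(s)|^2\ge 2\delta$ throughout, making the drift and dispersion in \eqref{5.3} uniformly bounded and turning the three-step reduction above into routine short-time estimates of diffusion increments. Without such an a priori lower bound on $I_k^\nu$, no useful estimate on $\vp^\nu(s)-\vp^\nu(lL)+\nu^{-1}\Lambda(s-lL)$ is available, and the Kronecker--Weyl substitution in $C_l$ breaks down.
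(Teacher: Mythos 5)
Your skeleton — partition of length $L=\sqrt\nu$, three-term decomposition into slow-increment errors plus a pure-rotation term, Doob for the increments, Kronecker--Weyl (Lemma~\ref{l.aver}) for the rotation — is the same as the paper's. But there is a genuine gap in the step where you control the phase increments, and it is located exactly where your closing paragraph claims the stopping time makes everything routine.

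The stopping at $\theta_1^{-\nu}$ only guarantees $I_k^\nu\ge\delta$ for $k\le N$, whereas $G=\tilde F\,I_k^p$ with $\tilde F=v_k\cdot P_k$ depends, through the nonlinearity $P_k$, on \emph{all} modes $v_j$. To invoke the finite-dimensional Lemma~\ref{l.aver} you must first truncate to $v^M$ with $M$ large (you mention this only in passing; the paper quantifies the truncation error as $CM^{-1/(2d)}$ using the Lipschitz property of $G$ on $h^{r-1/2}$), and then your comparison of $\vp^{\nu M}(s)$ with $\vp^{\nu M}(\tau_l)+\nu^{-1}\Lambda^M(s-\tau_l)$ requires the drift $|v_j|^{-2}(iv_j\cdot P_j)$ in \eqref{5.3} to be bounded for \emph{every} $j\le M$. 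For $N<j\le M$ there is no lower bound on $I_j^\nu$ coming from the stopping, so your assertion that $|v_j|^{-2}\le(2\delta)^{-1}$ fails there and the estimate $\E\sup_{\Delta_l}|\vp^{\nu M}(s)-\vp^{\nu M}(\tau_l)-\nu^{-1}\Lambda^M(s-\tau_l)|\lesssim\sqrt L$ is unavailable. The paper resolves this by introducing, for each partition point, the bad event $\cE_l=\{I_j^\nu(\tau_l)\le\eps\ \text{for some}\ N<j\le M\}$ with $\eps\ge\nu^{a}$, bounding the contribution of $\cup_l\cE_l$ via Lemma~\ref{l5.1} after \emph{optimizing the non-random phase $\tau_0$ of the partition} (so that $K^{-1}\sum_l\PP(\cE_l)=\vk(\eps^{-1};M)$), and then, on the complement, using the lower bound $I_j\ge\eps/2$ (not $\delta/2$) to control the phases; the final limit is taken first in $M$, then in $\eps$, then in $\nu$. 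Without this device (or an alternative such as running the whole argument in the $v$-variables, where the rotation approximation is nonsingular), the sum over the $O(\nu^{-1/2})$ subintervals of the phase-approximation errors cannot be closed.
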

\noindent
{\it Proof. } For this proof we adopt a notation from \cite{KP08}. Namely, 
we denote by $\vk(t)$ various functions of $t$ such that $\vk\to0$ 
as $t\to\infty$, and denote by $\vk_\infty(t)$ functions, satisfying  
$\vk(t)=o(t^{-N})$ for each $N$. We write $\vk(t,R)$ to indicate that $\vk(t)$ 
depends on a parameter $R$. Besides for events $Q$ and $O$ and  a
random variable  $f$ we write $\PP_O(Q)=\PP(O\cap Q)$ and 
$\E_O(f)=\E(\chi_O\, f)$. We will also abbreviate $ {\theta_1^-}^\nu\wedge \tau_P^\nu=:  \theta^\nu$. 
\medskip

For $\tau\le \theta^\nu$ the function 
 $G$ is bounded by a constant, independent from $\nu$ but 
  depending on  $P$ and $\delta$ (see Steps~1 and 2 in Section~\ref{s5.22}). 
  Below in the proof the dependence of this and other constants on $P$ and $\delta$ will not be  indicated. 

In order to work with the functions $\tilde F=\tilde F_k$ with $k\le N$, we will 
control their arguments 
 $v_j$ for  $j\le M$, where $M$ is a suitable number,   bigger than $N$. 
 Similar to the above, we denote by $v^M$, $I^{\nu M}$ and
$\vp^{\nu M}$, the vectors, formed by the first $M$ components of the corresponding
infinite vectors,  
and denote by by $\Phi^{\nu (M)}$ the vector
formed by the first $J(M)$ components of $\Phi^\nu$.  Recall that $h=h^r$, where $r\ge d/2+1$,
while  the smoothness $r>d/2$ is sufficient for our constructions. 
Since for
$\tau\le \tau_P^\nu$ one has $|v|_{h^r}\le P$, we have
$|v-v^M|_{h^{r-1/2}} \le C M^{-1/(2d)}$, and since the function 
$G$ is Lipschitz on the space 
$h^{r-1/2}$ uniformly on the bounded set, then the l.h.s. of \eqref{2.14ihp}
is smaller than $C M^{-1/(2d)} + \mathfrak A^\nu_M$, with
$$
\frak A^\nu_M:= \E \max_{0\le \tau\le \theta^\nu} \left|\int_0^{\tau}
\left(  G(I^{\nu M}(s),\vp^{\nu M}(s))-\langle
G\rangle_\Lambda^M(I^{\nu M}(s),\Phi^{\nu (M)} (s))\right)ds
\right|\ ,
$$
where $\lan\cdot\ran^M_\Lambda$ is defined in \eqref{defmedia} (so that 
$
\lan G(I^M,\vp^M)\ran^M_\Lambda
$
is a function of $I^M$ and $\Phi^{(M)}$). 

Consider a partition of $[0,T]$  by the points
$$
\tilde \tau_n=\tilde \tau_0+nL,\quad 0\le n\le K\sim T/ L. 
$$
where $\tilde \tau_{K}$ is the last point $\tilde \tau_n$ in $[0,T)$.
  The diameter $L$ of the partition is 
$$
L=\sqrt\nu, 
$$
and the non-random phase $\tilde \tau_0\in[0,L)$ will be chosen later.
  Denoting
\begin{equation}\label{5.17}
\eta_l= \int_{\tilde \tau_l}^{\tilde \tau_{l+1}}\chi_{\tau\le \theta^\nu}\Big(G(I^{\nu M},
\vp^{\nu M})-   \lan G\ran^M_\Lambda(I^{\nu M},\Phi^{\nu (M)})\Big)  ds,\quad 
0\le l\le K-1,
\end{equation}
we see that 
\begin{equation}\label{5.170}
\mathfrak A^\nu_{M} \le LC+\E\sum_{l=0}^{K-1}|\eta_l|,
\end{equation}
so it remains to estimate $\sum\E |\eta_l|$. 
We have
\begin{equation*}
 \begin{split}
&\  |\eta_l|   \le \left|
\int_{\tau_l}^{\tau_{l+1}}\Big(G(I^{\nu M}(s),
\vp^{\nu M}(s))-   G (I^{\nu M} (\tau_l), \vp^{\nu M} 
(\tau_l)+\nu^{-1}\Lambda^m(s-\tau_l
 )) \Big) ds\right|\\
&+ \left|
\int_{\tau_l}^{\tau_{l+1}}\Big(
 G (I^{\nu M} (\tau_l), \vp^{\nu M}
(\tau_l)+\nu^{-1}\Lambda^M(s-\tau_l))-
 \lan G\ran^M_\Lambda(I^{\nu M} (\tau_l), \Phi^{\nu (M)}(\tau_l) )
 \Big) \, ds\right|\\
& + \left|
\int_{\tau_l}^{\tau_{l+1}}\Big(
 \lan G\ran^M_\Lambda(I^{\nu M} (\tau_l),\Phi^{\nu (M)}(\tau_l)) - \lan
 G\ran^M_\Lambda(I^{\nu M} (s), \Phi^{\nu (M)}(s) )
 \Big) \, ds\right| \\
 &=:\Upsilon^1_l+\Upsilon^2_l+\Upsilon^3_l\ ,
\end{split}
\end{equation*}
where we have put $\tau_l=\tilde \tau_l\wedge \theta^\nu$.
To estimate the quantities $\Upsilon^{1,2,3}_l$ we first optimise the choice
of the phase $\tau_0$. Consider the events $\cE_l$, $1\le l\le K$,
\begin{equation}\label{5.151}
\cE_l = \{ I_k^\nu(\tau_l)\le \eps\;\text{for some}\;  N<k\le M\}\ , \quad \mbox{where
} \eps \ge \nu^a, \quad a=1/10\ .
\end{equation}
Since for each $k$ by Lemma \ref{l5.1} we have
$$
\int_0^L\sum_{n=0}^K \PP(I_k^\nu(\bar\tau_n)\le\eps) \,d\bar\tau_0
=\int_0^T \PP(I_k^\nu(\tau)\le\eps)  \,d\tau
=\vk(\eps^{-1})
$$
(here each $\bar\tau_n$ is regarded as a function of $\bar\tau_0$), then 
we can choose $\tau_0\in[0,L)$ in such a way that 
\begin{equation*} 
K^{-1}\sum_{l=0}^{K-1}\IP(\cE_l)=\vk(\eps^{-1}; M).
\end{equation*}
For any $l$ consider the event 
$$
Q_l=\{\sup_{\tau_l\le\tau\le\tau_{l+1}}|I^\nu(\tau)-I^\nu(\tau_l)|_{h_I}\ge
C_1L^{1/3}\}\ .
$$
 It is not hard to verify using the Doob inequality that for a suitable choice of the constant 
 $C_1$  its probability satisfies
 $\IP(Q_l)\le \vk_\infty(L^{-1})$ 
(cf. \cite{KP08}). Setting
$$
\cF_l=\cE_l\cup Q_l\ ,\quad l=0,\dots, K-1, 
$$
we have that
\begin{equation*} 
\frac1K\sum_{l=0}^{K-1}\IP(\cF_l)\le \vk(\eps^{-1};M)+
\vk_\infty(\nu^{-1/2})=: \tilde \vk\ .
\end{equation*}
Accordingly,
\begin{equation}\label{5.211}
\sum_{l=0}^{K-1}\left|(\E_{\cF_l})\Upsilon_l^j\right| \le
{C}{L}\sum_{l=0}^{K-1}\IP(\cF_l)\le C\tilde \vk:= \tilde
\vk_1\ ,\quad j=1,2,3. 
\end{equation}

If $\omega\in \cF_l^c$, then for $\tau\in[\tau_l,\tau_{l+1}]$ we have
that  $I^\nu_k(\tau)\ge\eps -C_1 L^{1/3}\ge \frac12
\eps$. This relation and \eqref{5.3}, \eqref{5.6} imply that  
\begin{equation*}
 \begin{split}
\IP_{\cF_l^c} \{ |\vp^{\nu M}(s)-(\vp^{\nu M}(\tau_l)+\nu^{-1}\Lambda^M(s-\tau_l)|
\ge \nu^a \;\;\text{for some}&\;\; s\in[\tau_l,\tau_{l+1}]
\}\\
&\le \vk_\infty(\nu^{-1};M)\ 
\end{split}
\end{equation*}
(cf. the estimate for $Q_l$). 
Accordingly, 
\begin{equation}\label{5.001}
  \sum_l  
\E_{\cF_l^c}  \Upsilon^1_l  \le C \nu^{1/6}+ C\nu^a+\vk_\infty(\nu^{-1};M).
\end{equation}
For the same reasons also 
\begin{equation}\label{5.002}
  \sum_l  
\E_{\cF_l^c}  
 \Upsilon^3_l  \le C \nu^{1/6}+C\nu^a+\vk_\infty(\nu^{-1};M) \ . 
\end{equation}
So it remains to estimate the expectation of $\sum\Upsilon^2_l$. For
any $\omega\in \cF_l^c$  abbreviate 
$$
 {\widetilde G}(\psi)= G(I^{\nu M}(t_l),\vp^{\nu M}(t_l)+\psi),\qquad
\psi\in\T^M\ ,
$$
where  in the r.h.s. $\psi$ is identified with  the vector
$(\psi,0,\dots)\in\T^\infty$. Then
$$
\lan \widetilde G(\psi)\ran_{\Lambda^M} = \lan G(I^{\nu M}(t_j), \vp^{\nu M}(t_l)\ran^M_\Lambda.
$$
  Denoting $t=\nu\tau$ we  write $\Upsilon^2_l$ as 
$$
\Upsilon^2_l=\left|\int_{\tau_l}^{\tau_{l+1}}{\widetilde G}(\nu^{-1}\Lambda^M(s-\tau_l))\,ds-
L\lan {\widetilde G}\ran^M_\Lambda \right|=L
\left|  \frac{\nu}{L} \int_{0}^{\nu^{-1}L} {\widetilde G}(\Lambda^M t)\,dt-\lan {\widetilde G}\ran^M_\Lambda
\right|.
$$
Since the function $F(\psi)$ is of degree $m$,  then by Lemma~\ref{l.aver} 
$$
\Upsilon^2_l\le L\vk(\nu^{-1}L;M,m,\eps,\Lambda)\ .
$$
Therefore
\begin{equation}\label{5.003}
 \sum_l  \E_{\cF_l^c}
\Upsilon^2_l  \le \vk(\nu^{-1/2};M,m,\eps,\Lambda). 
\end{equation}

Now \eqref{5.170},  \eqref{5.211} and  \eqref{5.001}-\eqref{5.003}
imply that the l.h.s. of \eqref{2.14ihp} is smaller than 
$$
CM^{-1/(2d)}+\vk(\nu^{-a};M)+ \vk(\eps^{-1};M)+
C\nu^a
+C\nu^{1/6}+\vk(\nu^{-1/2};M, m,\eps,\Lambda). 
$$
Choosing first $M$ large and next $\eps$ small and $\nu$ small in
such a way that \eqref{5.151} holds,  
 we make the quantity above  arbitrarily small. This proves the lemma.
\qed

\noindent 
{\it End of the proof of Lemma  \ref{l5.media}.
}
Since $\cD(I^{\nu}(\cdot),\Phi^{\nu}(\cdot))\strela Q_P$,  where $\nu=\nu_\ell$, then 
\begin{equation*}
\begin{split}
N_k(\tau)&=I_k( \tau\wedge \theta^-_1\wedge \tau_P ) -
\int_0^{\tau\wedge \theta^-_1\wedge \tau_P  } \langle F^1_k\rangle_\Lambda(I(s),
\Phi(s)) \,ds\ ,\quad  k\le N\ ,\\
M_j(\tau)&=\Phi_j(\tau\wedge \theta^{-}_1\wedge \tau_p)
-\sum_{k \ge   1}s^{(j)}_k \int_0^{\tau\wedge \theta^{-}_1\wedge \tau_p  }\langle F^2_k\rangle_\Lambda
(I(s), \Phi(s)) 
\,ds \ , \quad  j\le J\ ,
\end{split}
\end{equation*}
are $\cQ_P$ martingales.

Similar to \eqref{2.14ihp} one finds that
$$
\E \max_{0\le \tau \le T} \left|\int_0^{\theta^\nu}
\left(  G(I^\nu(s),\vp^\nu(s))-\langle
G\rangle_\Lambda(I^\nu(s),\Phi^\nu (s))\right)\right|^2 \to 0\quad \mbox{as }
\nu=\nu_\ell\to 0\ .
$$
Then, using the same arguments as before, we see that the processes
$N_{k_1}(\tau) N_{k_2}(\tau)- \int_0^{\tau\wedge \theta^-_1\wedge \tau_P} \langle 
A_{k_1k_2}\rangle_\Lambda d s$ are $\cQ_P$
martingales, where $A_{k_1k_2}$ denotes the diffusion matrix for the
system \eqref{5.21}, while a similar argument applies for $M_{j_1}M_{j_2}$
and $M_j N_k$. Taking into account the explicit form of the diffusion,
the proof is then concluded.
\qed

A simplified version of the argument, proving Lemma \ref{l2.3ihp}
 implies the following assertion: 

\begin{proposition}\label{r34}
Let $s\in\Z_0^\infty$ be such that $s\cdot\Lambda\ne0$ and 
$G:\R_+^M\times \T^{J(M)}\times S^1\to\R$ be a bounded Lipschitz-continuous function,
for some $M\ge1$. Then
\begin{equation*}
\begin{split}
\frak B^\nu:=
\E\max_{0\le\tau\le T} &\Big|
 \int_0^\tau \Big( G(I^{\nu M}(l), \Phi^{\nu (M)}(l),s\cdot \vp^\nu(l))-\\
 &\int_{S^1}  G(I^{\nu M}(l), \Phi^{\nu (M)}(l), \theta)\,\dbar\theta\Big)dl\Big|\to0\quad\text{as}\quad
 \nu\to0. 
 \end{split}
\end{equation*}
In particular, taking for $G$  Lipschitz functions on $S^1$ we get that 
$$
\frac1{t} \int_0^t \cD(s\cdot\vp^\nu(l))dl\strela d\theta\quad \text{as}\quad \nu\to0, 
$$
for any $t>0$. 
\end{proposition}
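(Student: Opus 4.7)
\textbf{Proof plan for Proposition \ref{r34}.} The argument is a one-dimensional specialisation of the scheme from Lemma~\ref{l2.3ihp}: the resonant average $\lan\cdot\ran_\Lambda$ over the $(n-r)$-dimensional torus is replaced by the scalar average $\int_{S^1}\cdot\,\dbar\theta$ over the single phase $\theta=s\cdot\vp$, whose driving frequency $s\cdot\Lambda\ne 0$ is non-resonant. Set $M'=M\vee\lc s\rc$ and fix $P\in\N$, $\delta>0$. As in Section~\ref{s5.22} I truncate at the stopping time $\theta^\nu=\theta_1^{-,\nu}\wedge\tau_P^\nu$, which enforces $|v^\nu|_{h^r}\le P$ and $I^\nu_k\ge\delta$ for $k\le N$. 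The contribution from $\{\tau>\theta^\nu\}$ is absorbed by first sending $\nu\to 0$, then $\delta\to 0$ (via Lemma~\ref{l5.1} applied to the indices $k\in\supp s$ with $k>N$), and finally $P\to\infty$.

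The plan is to partition $[0,T]$ by points $\tilde\tau_n=\tilde\tau_0+nL$, $L=\sqrt\nu$, and on each block freeze $(I^{\nu M},\Phi^{\nu(M)})$ at $\tau_l=\tilde\tau_l\wedge\theta^\nu$ while replacing the fast phase by its deterministic linearisation
$\psi_l(\tau):=s\cdot\vp^\nu(\tau_l)-\nu^{-1}(s\cdot\Lambda)(\tau-\tau_l)$. The phase $\tilde\tau_0\in[0,L)$ is chosen by the argument of Lemma~\ref{l2.3ihp} so that, invoking Lemma~\ref{l5.1} on the indices $k\le M'$, the events
\[
\cE_l=\{I_k^\nu(\tau_l)\le\e\text{ for some }k\le M'\},\qquad \e=\nu^{1/10},
\]
satisfy $K^{-1}\sum_{l=0}^{K-1}\IP(\cE_l)=\vk(\e^{-1};M')$. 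Combined with the Doob-based oscillation events $Q_l=\{\sup_{\tau\in[\tau_l,\tau_{l+1}]}|I^\nu(\tau)-I^\nu(\tau_l)|_{h_I}\ge C_1L^{1/3}\}$, for which $\IP(Q_l)\le\vk_\infty(L^{-1})$, the total contribution of the exceptional set $\cup_l\cF_l$, $\cF_l=\cE_l\cup Q_l$, to $\frak B^\nu$ is bounded by $CT(\vk(\e^{-1};M')+\vk_\infty(\nu^{-1/2}))\to 0$.

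On the complement $\cF_l^c$ the actions $I^\nu_k$, $k\le M'$, are bounded below by $\e/2$, so the singular terms in \eqref{5.3} become uniformly controlled in the relevant coordinates. Freezing the Lipschitz arguments $(I^{\nu M},\Phi^{\nu(M)})$ at $\tau_l$ costs $O(L^{1/3})$ per block, while replacing $s\cdot\vp^\nu(\tau)$ by $\psi_l(\tau)$ costs $O(\nu^{1/10})+\vk_\infty(\nu^{-1};M')$, exactly as in the estimate of $\Upsilon^1_l$ in the proof of Lemma~\ref{l2.3ihp}. What remains on each good block is, after the rescaling $t=(\tau-\tau_l)/\nu$,
\[
L\,\Big|\frac{1}{\nu^{-1}L}\int_{0}^{\nu^{-1}L}\!\!\bigl(G_l(\vartheta_l-(s\cdot\Lambda)t)-\textstyle\int_{S^1}G_l(\theta)\,\dbar\theta\bigr)\,dt\Big|,
\]
with $G_l(\theta)=G(I^{\nu M}(\tau_l),\Phi^{\nu(M)}(\tau_l),\theta)$ and $\vartheta_l=s\cdot\vp^\nu(\tau_l)$. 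Since $s\cdot\Lambda\ne 0$, the scalar Kronecker--Weyl theorem (the $n=1$ case of Lemma~\ref{l.aver} with trivial resonance set) gives a uniform-in-$\vartheta_l$ rate $\vk(\nu^{-1/2};|s\cdot\Lambda|,|G|_{C^0})$. Summing over the $\sim T/L$ blocks produces $T\vk(\nu^{-1/2};|s\cdot\Lambda|,|G|_{C^0})\to 0$.

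The main technical obstacle is the one already encountered in Lemma~\ref{l2.3ihp}: controlling the singularities of the phase equations on $\Game(h)$ for indices $k\in\supp s$ with $k>N$. This is precisely what the careful choice of $\tilde\tau_0$ and Lemma~\ref{l5.1} are designed to handle; once these are in place, the replacement of the multi-dimensional resonant averaging by a single one-dimensional Kronecker--Weyl step is elementary, since the lower bound $|s\cdot\Lambda|>0$ on the frequency is built in by assumption. The final assertion of the proposition, concerning $t^{-1}\int_0^t\cD(s\cdot\vp^\nu(l))\,dl\strela\dbar\theta$, follows by specialising to $G=f(\theta)$ for arbitrary Lipschitz $f:S^1\to\R$, which is enough to characterise weak convergence of probability measures on $S^1$.
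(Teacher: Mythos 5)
Your scheme is essentially the paper's: the same partition $\tau_n=\tau_0+nL$ with $L=\sqrt\nu$, the same optimisation of the phase $\tau_0$ via Lemma~\ref{l5.1}, the same exceptional events $\cE_l\cup Q_l$ controlling small actions for $k\le M\vee\lc s\rc$ and the oscillation of $I^\nu$, and the same three-term decomposition (freezing the slow arguments, linearising the fast phase, then a one-dimensional equidistribution step). The only structural difference is cosmetic: the paper truncates with the event $\Omega_R=\{\sup_\tau|v^\nu(\tau)|_r\le R\}$ and a slowly growing $R(\nu)$ rather than with the stopping time $\theta_1^{-,\nu}\wedge\tau_P^\nu$; either works, though the paper's choice avoids having to reinstate the interval $\{\tau>\theta^\nu\}$ afterwards.

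One step in your argument is not justified as stated. You invoke Lemma~\ref{l.aver} (``the scalar Kronecker--Weyl theorem'') to get a uniform-in-$\vartheta_l$ rate for
$\frac1T\int_0^T G_l(\vartheta_l-(s\cdot\Lambda)t)\,dt-\int_{S^1}G_l\,\dbar\theta$,
claiming the rate depends only on $|s\cdot\Lambda|$ and $|G|_{C^0}$. Lemma~\ref{l.aver} applies only to trigonometric polynomials of finite degree $\le m$, and a general Lipschitz $G_l$ on $S^1$ is not one; moreover, for merely bounded continuous functions no rate uniform in the sup-norm alone exists. The paper closes this gap by expanding $\hat G(\psi)=\sum g_ke^{ik\psi}$, estimating each oscillatory integral by $2/(T|s\cdot\Lambda|)$, and controlling $\sum_k|g_k|\le C$ via Bernstein's theorem, with $C$ depending only on the Lipschitz constant of $G$ (normalised to $1$). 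So the rate you need is $\vk(\nu^{-1}L;\,|s\cdot\Lambda|,\mathrm{Lip}\,G)$, obtained from absolute summability of the Fourier coefficients, not from Lemma~\ref{l.aver}. With that substitution your proof is complete and matches the paper's.
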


For a proof see Appendix B.

\subsection{Proof of Lemma~\ref{l.couple}}\label{s.pr_couple}
Below we follow \cite{KP08, K10}. 
First for any $\eps>0$ we construct an auxiliary 
 process $(\bar w^\eps, \widetilde w^\eps)$.
We  choose  $\bar w^\eps =w^\nu$, 
and build  $\widetilde w^\eps=\widetilde w\in\C^N$ as follows. We set 
$\widetilde w(\tm)=w^\nu(\tm)$ and for $\tau>\tm$ define $\widetilde w(\tau)$ as a solution of
the system 
\begin{equation}\label{^1}
d\widetilde w_k=e^{i (\widetilde\vp_k-\bar\vp_k) } P_k(\bar w)\, d\tau
+e^{i (\widetilde\vp_k-\bar\vp_k)} b_k\,\bb^k(\tau),\quad  k\le N,
\end{equation}
where $\widetilde\vp_k=\vp (\widetilde w_k)$, etc.  Let us define a stopping  time $\tau^1>\tm$,
$$
\tau^1= \inf \{\tau\in[\tm, T]:  [I(\bar w(\tau))] \wedge [I(\widetilde w(\tau))]=\eps\}.
$$
Due to \eqref{^1} for $\tau\le \tau^1$ we have
\begin{equation*}
\begin{split}
d\tilde I_k&= \big((\widetilde w_k\cdot e^{i (\widetilde\vp_k-\bar\vp_k) }P_k(\bar w)\big)\,d\tau
+b_k^2\,d\tau  +b_k\big( \widetilde w_k\cdot e^{i (\widetilde\vp_k-\bar\vp_k) }\,d\bb^k\big)\\
&=\Big(\sqrt{2\tilde I_k}\,e^{i \bar\vp_k}  \cdot P_k(\bar w) +b_k^2\Big)\,d\tau  +
b_k \sqrt{2\tilde I_k}\,e^{i \bar\vp_k} \cdot d\bb^k,\qquad k\le N.  
\end{split}
\end{equation*}
Noting that the equation \eqref{5.2} for $\bar I_k$ may be written as 
$$
d\bar I_k= \big(\sqrt{2\bar I_k}\,e^{i \bar\vp_k}  \cdot P_k(\bar w) +b_k^2\big)\,d\tau  +
b_k \sqrt{2\bar I_k}\,e^{i \bar\vp_k} \cdot d\bb^k
$$
and that $\bar I_k(\tm)= \tilde I_k(\tm)$ for $k\le N$ we arrive at the relation
\begin{equation}\label{^2}
\tilde I^N(\tau)=\bar I^N(\tau)\quad\text{a.s.},
\end{equation}
for $\tm\le \tau\le \tau^1$. \footnote{Since these are two solutions of a Cauchy problem 
for a stochastic ODE which is Lipschitz in the domain $\{\eps\le I_j\le\tfrac12 P^2,\ 
\forall\,j\}$, and one of them  stays in this domain a.s.}
Now we define  next stopping time $\tau^2$ as 
$$
\tau^2 = \inf \{\tau\in[\tau^1,T]: [I(\bar w)(\tau)]=2\eps\},
$$
and for $\tau\in[\tau^1, \tau^2],\ j\le N$, set $\widetilde w_j(\tau)$ to be a rotation of  $\bar w_j(\tau)$ on  a $\tau$-independent  angle, chosen by continuity:
\begin{equation}\label{^3}
\widetilde w(\tau)=\Psi_{\theta^N}\bar w^N(\tau), \qquad \theta^N=\vp^N( \widetilde w(\tau^1))- 
 \vp^N( \bar w(\tau^1)).
\end{equation}
Now the process $\widetilde w(\tau)$ is defined till $\tau=\tau^2$ and still satisfies \eqref{^2}. 

Iterating the two steps above we construct a process $(\bar w^\eps, \widetilde w^\eps)(\tau)$ 
such that \eqref{^2}  holds for all $\tau\le T$. 

Let us consider another slow variable -- a 
 resonant combination of angles $\Phi_j=\vp\cdot s^{(j)}$. Literally repeating the 
proof of \eqref{^2} we get that $\Phi_j(\bar w^\eps(\tau))= \Phi_j (\widetilde w^\eps(\tau))$
a.s., for $\tau\le \tau^1$   and  $j\le J(N)$.\footnote{Certainly this is not true for individual phases 
$\vp_j(\bar w^\eps(\tau))$ since these are fast variable, while the phases of
 $\widetilde w^\eps(\tau)$ are slow.} Jointly with \eqref{^2} it implies that 
\begin{equation}\label{^4}
V_j(\bar w^\eps(\tau))= V_j (\widetilde w^\eps(\tau))\quad\text{a.s.},\;\;\forall\, j\le J,
\end{equation}
for $\tau\le\tau^1$.  Using \eqref{^3} we see that this relation holds for 
 $\tau \le\tau^2$.
 Iterating the argument we get that \eqref{^4} is valid till $\tau=T$.

 The constructed process $(\bar w^\eps, \widetilde w^\eps)(\tau)$, $\eps>0$,
 satisfies  (i), (ii) and satisfies (iii), unless $[\bar I(\tau)]\le 2\eps$. By Lemma~\ref{l5.1} the latter event 
 becomes improbable when $\eps$ converges to zero, and it not hard to check 
 (see  \cite{KP08, K10})  that  a 
 limiting in law, as $\eps\to0$, process $(\bar w^\nu, \widetilde w^\nu)(\tau)$ exists and
 satisfies (i)-(iii). $\qed$
 \medskip
 
 \noindent
 {\it Remark.} Applying to the Ito equation \eqref{5.mod}, satisfied by the process $\tilde w^\nu$,  Krylov's theorem 
 from \cite{Kry77} (also  see in \cite{KS}),  we recover the assertion of Lemma~\ref{l5.1}.

\appendix
\section{Ito's  formula  in  complex variables}\label{a1}

Let $\{\Omega,\cF,\PP;\cF_t\}$ be a filtered probability space and  $z(t) \in \C^N$
be a complex Ito process  on this space of the form 
\begin{equation*} 
dz_k= a_k(t)  dt+ b_k(t) d \bb^k(t)\ ,\quad 1\le k\le N\ ,
\end{equation*}
where $a_k,b_k$, are $\{\cF_t\}$-adapted complex processes. We assume that the 
 processes satisfy ``usual" 
growth conditions, needed to apply the Ito formula (these conditions 
 are clearly met each  time when we use the formula),
 and $\bb^k(t)$ are standard
independent complex Wiener processes.
The result below   easily follows from the usual (`real') Ito's formula.

\begin{lemma}\label{l.a1}
  Let $f:\C^N\to \C$ be a $C^2$-function. Then  
\begin{equation}\label{ito}
\begin{split}
df(z(t))=
 \sum_k\Big( a_k\frac{\partial f}{\partial z_k}&+\bar a_k \frac{\partial
  f}{\partial  \bar z_k} + 2|b_k|^2 \frac{\partial^2f}{\partial
  z_k\partial \bar   z_k} \Big)dt\\
  &+ \sum_k\Big(b_k\frac{\partial f}{\partial
  z_k} d\bb^k(t)+\bar b_k \frac{\partial 
  f}{\partial  \bar z_k}d \bar \bb^k(t)\Big) \ .
  \end{split}
\end{equation}
\end{lemma}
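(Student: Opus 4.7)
The plan is to reduce to the standard (real) multidimensional Itô formula by splitting everything into real and imaginary parts, and then re-assembling in terms of Wirtinger derivatives. Write $z_k=x_k+iy_k$, $a_k=\alpha_k+i\gamma_k$, $b_k=\beta_k+i\delta_k$, and $\bb^k=\beta_+^k+i\beta_-^k$, where $\beta_\pm^k$ are standard independent real Wiener processes. Separating real and imaginary parts in $dz_k=a_kdt+b_kd\bb^k$ gives the real Itô processes
\begin{equation*}
dx_k=\alpha_kdt+\beta_kd\beta_+^k-\delta_kd\beta_-^k,\qquad dy_k=\gamma_kdt+\delta_kd\beta_+^k+\beta_kd\beta_-^k.
\end{equation*}
Since $f:\C^N\to\C$ is $C^2$, it suffices (by linearity) to treat real-valued $f$, so the usual real-valued Itô formula in the $2N$ variables $(x_1,y_1,\dots,x_N,y_N)$ applies.

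Next I would compute the cross-variations. Because the processes $\beta_\pm^k$ are standard and mutually independent, a direct calculation yields $d\langle x_k,x_l\rangle=d\langle y_k,y_l\rangle=\delta_{kl}|b_k|^2dt$ and $d\langle x_k,y_l\rangle\equiv 0$ (the two cross terms $\beta_k\delta_k$ and $-\delta_k\beta_k$ cancel). Hence all mixed-index second-order terms drop out, and the real Itô formula collapses to
\begin{equation*}
df(z(t))=\sum_k\Big(\tfrac{\partial f}{\partial x_k}dx_k+\tfrac{\partial f}{\partial y_k}dy_k\Big)+\tfrac12\sum_k |b_k|^2\Big(\tfrac{\partial^2 f}{\partial x_k^2}+\tfrac{\partial^2 f}{\partial y_k^2}\Big)dt.
\end{equation*}

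The final step is a translation into Wirtinger calculus. Using the definitions $\frac{\partial}{\partial z_k}=\tfrac12(\partial_{x_k}-i\partial_{y_k})$ and $\frac{\partial}{\partial\bar z_k}=\tfrac12(\partial_{x_k}+i\partial_{y_k})$, one has the identities
\begin{equation*}
\tfrac{\partial f}{\partial x_k}dx_k+\tfrac{\partial f}{\partial y_k}dy_k=\tfrac{\partial f}{\partial z_k}dz_k+\tfrac{\partial f}{\partial\bar z_k}d\bar z_k,\qquad \tfrac{\partial^2 f}{\partial x_k^2}+\tfrac{\partial^2 f}{\partial y_k^2}=4\tfrac{\partial^2 f}{\partial z_k\partial\bar z_k}.
\end{equation*}
Substituting $dz_k=a_kdt+b_kd\bb^k$ and the conjugate $d\bar z_k=\bar a_kdt+\bar b_kd\bar\bb^k$ into the first identity, and using the second to rewrite the variation term, yields exactly \eqref{ito}.

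There is no real obstacle here; the only thing to be careful about is bookkeeping with the normalization of complex Brownian motion (which makes $d\bb^k\,d\bar\bb^k=2dt$, producing the factor $2$ in front of $|b_k|^2\partial^2_{z_k\bar z_k}f$) and the cancellation in $d\langle x_k,y_k\rangle$, which is what removes any $\partial^2_{z_k}f$ or $\partial^2_{\bar z_k}f$ term from the drift.
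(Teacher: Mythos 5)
Your proof is correct and is precisely the reduction the paper has in mind: the paper offers no details beyond the remark that the lemma "easily follows from the usual (`real') Ito's formula," and your real/imaginary splitting, the computation of the quadratic variations $d\langle x_k,x_k\rangle=d\langle y_k,y_k\rangle=|b_k|^2dt$, $d\langle x_k,y_k\rangle=0$, and the translation via Wirtinger derivatives supply exactly the omitted bookkeeping. The factor $2$ in front of $|b_k|^2\,\partial^2 f/\partial z_k\partial\bar z_k$ is correctly traced to the normalization $d\bb^k\,d\bar\bb^k=2\,dt$ of the complex Wiener process.
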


\section{Proof of Proposition~\ref{r34}}
For this proof, as in Section~\ref{sez:dim}, 
we denote by $\vk(t)$ various functions of $t$ such that $\vk\to0$ 
as $t\to\infty$, and denote by $\vk_\infty(t)$ functions, satisfying  
$\vk(t)=o(t^{-N})$ for each $N$.  For events $Q$ and $\cal O$ and  a
random variable  $f$ we write $\PP_{\cal O}(Q)=\PP({\cal O}\cap Q)$ and 
$\E_{\cal O}(f)=\E(\chi_{\cal O}\, f)$. Without lost of generality we assume that $|G|\le1$
and Lip$\,G\le1$. 
\medskip

Let us denote by $R$  a suitable function of
$\nu$ such that $R(\nu)\to \infty$ as $\nu\to 0$, but
$$
\nu R^n\to 0\quad \mbox{as }\nu\to 0\,, \quad \forall\, n\ .
$$
Denote, moreover, by $\Omega_R=\Omega^\nu_R$ the event
$\ 
\Omega_R= \left\{\sup_{0\le\tau \le T} |v^\nu(\tau)|_r\le
R\right\}\ .
$
Then, by \eqref{gafa3.10}, $\PP(\Omega^c_R) \le \vk_\infty(R)$
uniformly in $\nu$.

Taking into account the boundedness  of $G$, we get that
\begin{equation*}
\begin{split}
\mathfrak{B}^\nu\le \vk_\infty(R)+ \E_{\Omega_R} \max_{0\le \tau \le
  T} & \Big|
 \int_0^\tau \Big( G\big(I^{\nu M}(l), \Phi^{\nu (M)}(l),s\cdot \vp^\nu(l)\big)\\
 &-\int_{S^1}  G(I^{\nu M}(l), \Phi^{\nu (M)}(l),
 \theta)\,\dbar\theta\Big)dl\Big| \ .
\end{split}
\end{equation*}

As in  the proof of Lemma~\ref{l2.3ihp},  consider a partition of
$[0,T]$ by the points
\begin{equation}\label{tau}
\tau_n=\tau_0+nL,\quad 0\le n\le K\sim T/ L. 
\end{equation}
where $\tau_{K}$ is the last point $ \tau_n$ in $[0,T)$.
The diameter $L$ of the partition is 
$
L=\sqrt\nu, 
$
and the non-random phase $ \tau_0\in[0,L)$ will be chosen later.
  Denoting
\begin{equation}\label{5.17.1}
\eta_n= \int_{\tau_n}^{
  \tau_{n+1}}\Big(G(I^{\nu M},\Phi^{\nu (M)},s\cdot \vp^\nu)-
\int_{S^1}  G(I^{\nu M}, \Phi^{\nu (M)},
 \theta)\,\dbar\theta\Big) dl,\quad 
0\le l\le K-1,
\end{equation}
we see that 
\begin{equation}\label{5.170.1}
\mathfrak B^\nu \le \vk_\infty(R)+CL +\E_{\Omega_R}\sum_{n=0}^{K-1}|\eta_n|,
\end{equation}
so it remains to estimate $\sum\E_{\Omega_R} |\eta_n|$. 
We abbreviate 
$$
{\hat G}(\psi;l)= G(I^{\nu M}(l),\Phi^{\nu(M)}(l),\psi)\ ,
\quad  \psi\in S^1\ ,
$$
so that we have 
\begin{equation*}
 \begin{split}
  |\eta_n|  & \le \left|
\int_{\tau_n}^{\tau_{n+1}}\left({\hat G}(s\cdot \vp^\nu(l);l)) -   {\hat G} \big( s\cdot
\vp^\nu(\tau_n)+\nu^{-1} (s\cdot \Lambda)(l-\tau_n) ; \tau_n\big)\right) dl\right|\\
&+ \left| 
\int_{\tau_n}^{\tau_{n+1}}\left(
 {\hat G} \big(s\cdot \vp^\nu(\tau_n)+\nu^{-1} (s\cdot  \Lambda)(l-\tau_n) ; \tau_n\big)-
 \int_{S^1} {\hat G}(\theta;\tau_n )\, \dbar \theta
 \right) \, dl\right|\\
& + \left|
\int_{\tau_l}^{\tau_{l+1}}\Big(  \int_{S^1} {\hat G}(\theta;\tau_n )\, \dbar \theta
  -   \int_{S^1} {\hat G}(\theta;l )\, \dbar \theta
 \Big) \, dl\right| =:\Upsilon^1_n+\Upsilon^2_n+\Upsilon^3_n\ .
\end{split}
\end{equation*}

To estimate the quantities $\Upsilon^{1,2,3}_n$ we first optimise the choice
of the phase $\tau_0$. A crucial point here is that, if we set
$N:=M\vee \lc s \rc$, the function $G$ depends only on $v^N$. 
So we consider the events $\cE_n$, $1\le n\le K$,
\begin{equation}\label{5.151.1}
\cE_n = \{ I_k^\nu(\tau_n)\le \eps\;\text{for some}\;  k\le N\}, \quad \mbox{where
} \eps \ge  \nu^a, \quad a=1/10\ .
\end{equation}
Since for each $k$ by Lemma \ref{l5.1} we have
$$
\int_0^L\sum_{n=0}^K \PP(I_k^\nu(\bar\tau_n)\le\eps) \,d\bar\tau_0
=\int_0^T \PP(I_k^\nu(\tau)\le\eps)  \,d\tau
=\vk(\eps^{-1};R)
$$
(here each $\bar\tau_n$ is regarded as a function of $\tau_0=\bar\tau_0$, given by \eqref{tau}), 
then 
we can choose $\tau_0\in[0,L)$ in such a way that 
\begin{equation*} 
K^{-1}\sum_{n=0}^{K-1}\IP(\cE_n)=\vk(\eps^{-1};R,N).
\end{equation*}
For any $n$ consider the event 
$$
Q_n=\{\sup_{\tau_n\le\tau\le\tau_{n+1}}|I^\nu(\tau)-I^\nu(\tau_n)|_{h_I}\ge
P_1(R)L^{1/3}\}\ ,
$$
where $P_1(R)$ is a suitable polynomial. It is not hard to verify
using the Doob inequality that  its probability satisfies
$\IP(Q_n)\le \vk_\infty(L^{-1})$  (cf. \cite{KP08}). Setting
$\ 
\cF_n=\cE_n\cup Q_n,
$
$n=0,\dots, K-1$, we have that
\begin{equation*} 
\frac1K\sum_{n=0}^{K-1}\IP(\cF_n)\le \vk(\eps^{-1};R,N)+
\vk_\infty(\nu^{-1/2};N)=: \tilde \vk\ .
\end{equation*}
Accordingly,
\begin{equation}\label{5.211.1}
\sum_{n=0}^{K-1}\left|(\E_{\cF_n\cap \Omega_R})\Upsilon_n^j\right| \le
{C}{L}\sum_{n=0}^{K-1}\IP(\cF_n)\le C\tilde \vk:= \tilde
\vk_1\ ,\quad j=1,2,3. 
\end{equation}

If $\omega\in \Omega_R\backslash \cF_n$, then for
$\tau\in[\tau_n,\tau_{n+1}]$ we have 
that  $I^\nu_k(\tau)\ge\eps -P_1(R) L^{1/3}\ge \frac12
\eps$. This relation and \eqref{5.3}, \eqref{5.6} imply that  
\begin{equation*}
 \begin{split}
\IP_{\Omega_R\backslash \cF_n} \{ |\vp^{\nu N}(l)-(\vp^{\nu M}(\tau_n)+\nu^{-1}\Lambda^N(l-\tau_n))|
\ge \nu^a \;\;\text{for some}&\;\; l\in[\tau_n,\tau_{n+1}]
\}\\
&\le \vk_\infty(\nu^{-1};R,N)\ 
\end{split}
\end{equation*}
(cf. the estimating of $\PP(Q_n)$). 
Therefore
\begin{equation*}
 \begin{split}
\IP_{\Omega_R\backslash \cF_n} \big\{ |s\cdot\vp^{\nu N}(l)-(s\cdot\vp^{\nu N}(\tau_n)&+\nu^{-1}(s\cdot\Lambda)(l-\tau_n))|
\ge \nu^a 
\\
&\text{for some}\;\; l\in[\tau_n,\tau_{n+1}]\big\}
\le \vk_\infty(\nu^{-1};R,N,s, \Lambda)\ .
\end{split}
\end{equation*}
and
\begin{equation*}
\IP_{\Omega_R\backslash \cF_n} \{ |\Phi^{\nu (M)}(l)-\Phi^{\nu (M)}(\tau_n)|
\ge \nu^a \;\;\text{for some}\;\; l\in[\tau_n,\tau_{n+1}]
\}\le \vk_\infty(\nu^{-1};R,N)\ .
\end{equation*}
Accordingly, 
\begin{equation}\label{5.001.1}
  \sum_l  
\E_{\Omega_R\backslash \cF_n}  \Upsilon^1_n  \le C(R) \nu^{1/6}+
C(R)\nu^a+\vk_\infty(\nu^{-1};R,N,s\cdot \Lambda).
\end{equation}
For the same reason also 
\begin{equation}\label{5.002.1}
  \sum_l  
\E_{\Omega_c\backslash \cF_n}  
 \Upsilon^3_n  \le C(R)
 \nu^{1/6}+C(R)\nu^a+\vk_\infty(\nu^{-1};R,N,s\cdot \Lambda) \ . 
\end{equation}

So it remains to estimate the expectation of $\sum\Upsilon^2_n$.
Denoting $t=\nu(l-\tau_n)$ we  write $\Upsilon^2_n$ as 
\begin{equation*}
\begin{split}
\Upsilon^2_n&=\left|\int_{\tau_n}^{\tau_{n+1}}{\hat G}(s\cdot
\vp^\nu(\tau_n)+\nu^{-1}(s\cdot \Lambda)(l-\tau_n);\tau_n) \,dl-
L\int_{S^1} {\hat G}(\theta;\tau_n)\,\dbar \theta \right|\\
&=L
\left|  \frac{\nu}{L} \int_{0}^{\nu^{-1}L} {\hat G}(s\cdot
\vp^\nu(\tau_n)+s\cdot\Lambda t;\tau_n)\,dt- \int_{S^1}
{\hat G}(\theta;\tau_n)\,\dbar \theta
\right|.
\end{split}
\end{equation*}

Let us  expand ${\hat G}(\psi;\tau_n)$  as a Fourier series
${\hat G}(\psi)=\sum g_k e^{ik\psi}$, where each $g_k$ is a random variable and
$\ g_0=
\int_{S^1}
{\hat G}(\theta;\tau_n)\,\dbar \theta\,
$
 (we discard  the dependence
on $\tau_n$, which is   fixed thought the argument). Then 
$$
\left| \frac1T \int_0^T {\hat G}(\psi_0+t(s\cdot \Lambda))\,dt-g_0\right|\le \eps
\quad \forall\, T\ge T_\eps\ ,
$$
for a suitable non-random $T_\eps$. Indeed, for each nonzero $k$, one has
$$
\left| \frac1T \int_0^T e^{ik(\psi_0+t(s\cdot \Lambda))}
\,dt\right|\le  \frac{2}{T |s\cdot \Lambda|}\ ,
$$
so that
\footnote{By the Bernstein theorem, $\sum_{k=1}^\infty |g_k|\le C$,
where the constant $C=C({\hat G})$ is finite if the function ${\hat G}(\psi)$ is  Lipschitz-continuous. 
The proof of the theorem (e.g., see \cite{Zyg1}, Section~VI.3) easily implies that $C$ 
depends only on the Lipschitz constant of ${\hat G}$, which equals 1 in our case.}
$$
\left| \frac1T \int_0^T {\hat G}(\psi_0+t(s\cdot \Lambda))\,dt-g_0\right|\le
\frac{2}{T|s\cdot \Lambda|} \sum |g_k|\le \frac{2C}{T |s\cdot \Lambda| } \ .
$$
We have thus proved  that
$\ 
\Upsilon^2_n\le L\vk(\nu^{-1}L;R,N,\eps,s\cdot\Lambda)\ .
$
Therefore
\begin{equation}\label{5.003.1}
 \sum_l  \E_{\Omega_R\backslash \cF_l^c}
\Upsilon^2_n  \le \vk(\nu^{-1/2};R,N,\eps,s\cdot\Lambda). 
\end{equation}

Now \eqref{5.170.1},  \eqref{5.211.1} and  \eqref{5.001.1}-\eqref{5.003.1}
imply that $\mathfrak{B}^\nu$ is bounded by 
$$
 \vk_\infty(R)+\vk(\nu^{-a};R,N)+ \vk(\eps^{-1};R,N)+
C(R)\nu^a
+C(R)\nu^{1/6}+\vk(\nu^{-1/2};R,N, \eps,s\cdot\Lambda). 
$$
Choosing first $R$ large and next $\eps$ small and $\nu$ small in
such a way that \eqref{5.151.1} holds,  
 we make the quantity above  arbitrarily small. This proves the required convergence.
 
 The second assertion of the proposal follows from the first one since to check the weak 
 convergence of measures on a complete metric space it suffice to take for test-functions 
 the Lipschitz functions.
\qed

\bibliography{meas}
\bibliographystyle{amsalpha}
\end{document}